\definecolor{blue-violet}{rgb}{0.54, 0.17, 0.89}
\newcommand{\prlsection}[1]{{\em {#1}.---~}}
\DeclarePairedDelimiterX\phys[2]{\langle}{\rangle}{#1 \delimsize\vert\mathopen{} #2}
\newtheorem{thm}{Theorem}
\newtheorem{proposition}{Proposition}
\newtheorem{lemma}{Lemma}
\newtheorem{cor}{Corollary}
\newtheorem{obs}{Observation}
\theoremstyle{remark}
\newcounter{mysubequations}
\definecolor{myred}{RGB}{232,102,102}
\definecolor{myblue}{RGB}{187,187,255}
\definecolor{mybeige}{RGB}{245,245,220}
\definecolor{myorange}{RGB}{255,165,0}
\definecolor{mygrey}{RGB}{105,105,105}
\definecolor{OliveGreen}{RGB}{85,107,47}
\definecolor{NavyBlue}{RGB}{0,0,128}
\definecolor{mygreen}{RGB}{34,139,34}
\definecolor{myY}{RGB}{220,255,203}
\definecolor{myYO}{RGB}{255, 220, 151}
\definecolor{mygreenc}{RGB}{150,50,50}
\newcommand{\Agate}[2]{
\draw[thick] (#1, #2 + 0.6) -- (#1,#2-0.6);
\draw[ thick, fill=myblue, rounded corners=2pt] (#1-0.35,#2+0.35) rectangle (#1+0.35,#2-0.35);
\draw[thick] (#1+0.05,#2+0.25) -- (#1+0.25,#2+0.25) -- (#1+0.25,#2+0.05);
\node at (#1,#2) {$A$};
}
\newcommand{\Adaggate}[2]{
\draw[thick] (#1, #2 + 0.6) -- (#1,#2-0.6);
\draw[ thick, fill=myblue, rounded corners=2pt] (#1-0.35,#2+0.35) rectangle (#1+0.35,#2-0.35);
\draw[thick] (#1+0.05,#2+0.25) -- (#1+0.25,#2+0.25) -- (#1+0.25,#2+0.05);
\node at (#1,#2) {$A^\dagger$};
}
\newcommand{\Pigate}[2]{
\draw[thick] (#1, #2 + 0.6) -- (#1,#2-0.6);
\draw[thick, fill=mybeige, rounded corners=2pt] (#1-0.35,#2+0.35) rectangle (#1+0.35,#2-0.35);
\draw[thick] (#1+0.05,#2+0.25) -- (#1+0.25,#2+0.25) -- (#1+0.25,#2+0.05);
\node at (#1,#2) {$\Pi$};
}
\newcommand{\sigmagate}[2]{
\draw[thick] (#1, #2 + 0.6) -- (#1,#2-0.6);
\draw[ thick, fill=mybeige, rounded corners=2pt] (#1-0.35,#2+0.35) rectangle (#1+0.35,#2-0.35);
\draw[thick] (#1+0.05,#2+0.25) -- (#1+0.25,#2+0.25) -- (#1+0.25,#2+0.05);
\node at (#1,#2) {$\sigma_\alpha$};
}
\newcommand\doubleds{0.7}
\newcommand{\TrLeft}[1]{
	\begin{scope}[shift={(#1)}]
      \draw [thick] (0,0) to (-\doubleds/2,0);
	   \draw [thick] (-\doubleds/2,0) to  [bend left=90] (-\doubleds/2,\doubleds);
	   \draw [thick, dotted] (-\doubleds/2,\doubleds) to  (\doubleds/2,\doubleds);
	\end{scope}
}
\newcommand{\TrRight}[1]{
	\begin{scope}[shift={(#1)}, xscale=-1]
	   \TrLeft{(0,0)};
	\end{scope}
}
\crefname{exmp}{Example}{Examples} 
\crefname{thm}{Theorem}{Theorems}
\crefname{obs}{Observation}{Observations}
\crefname{cor}{Corollary}{Corollaries}
\Crefname{appsec}{appendix}{appendices}
\begin{document}

\title{An Exact Link between Nonlocal Nonstabilizerness and Operator Entanglement}

\author{Faidon Andreadakis}
\email [e-mail: ]{fandread@usc.edu}
\affiliation{Department of Physics and Astronomy, and Center for Quantum Information Science and Technology, University of Southern California, Los Angeles, California 90089-0484, USA}

\author{Paolo Zanardi}
\email [e-mail: ]{zanardi@usc.edu}
\affiliation{Department of Physics and Astronomy, and Center for Quantum Information Science and Technology, University of Southern California, Los Angeles, California 90089-0484, USA}
\affiliation{Department of Mathematics, University of Southern California, Los Angeles, California 90089-2532, USA}

\date{\today}

\begin{abstract}
Nonstabilizerness is a quantum property of states associated with the non-Clifford resources required for their preparation. As a resource, nonstabilizerness complements entanglement, and the interplay between these two concepts has garnered significant attention in recent years. In this work, we establish an exact correspondence between the generation of nonlocal nonstabilizerness and operator entanglement under unitary evolutions. Nonlocal nonstabilizerness refers to nonstabilizerness that cannot be erased via local operations, while operator entanglement generalizes entanglement to operator space, characterizing the complexity of operators across a bipartition. Specifically, we prove that a unitary map generates nonlocal nonstabilizerness if and only if it generates operator entanglement on Pauli strings. Guided by this result, we introduce an average measure of a unitary's Pauli-entangling power, serving as a proxy for nonlocal nonstabilizerness generation. We derive analytical formulas for this measure and examine its properties, including its typical value and upper bounds in terms of the nonstabilizerness properties of the evolution.
\end{abstract}
\maketitle

\prlsection{Introduction}
A defining feature of quantum dynamics is the emergence of inherently non-classical properties. A key example is the generation of entanglement \cite{horodeckiQuantumEntanglement2009}, which manifests as the buildup of non-classical, non-local correlations across the system's degrees of freedom. Entanglement plays a central role in the study of quantum many-body systems \cite{amicoEntanglementManybodySystems2008}, controlling the effectiveness of tensor network methods for simulating quantum many-body states \cite{hastingsAreaLawOnedimensional2007,verstraeteMatrixProductStates2008,verstraeteMatrixProductStates2006,
schuchEntropyScalingSimulability2008} and serving as a witness of quantum phase transitions \cite{vidalEntanglementQuantumCritical2003,
osborneEntanglementSimpleQuantum2002,calabreseEntanglementEntropyQuantum2004}. Yet, the generation of entanglement alone is not sufficient to render quantum dynamics classically intractable. This follows immediately from the fact that Clifford unitaries are classically simulable \cite{gottesmanHeisenbergRepresentationQuantum1998,
aaronsonImprovedSimulationStabilizer2004,andersFastSimulationStabilizer2006}, while also, generically, generating near maximal entanglement \cite{nahumQuantumEntanglementGrowth2017,
lindenEntanglingDisentanglingPower2009,zanardiEntanglingPowerQuantum2000}. The above observation is directly related to the theory of nonstabilizerness \cite{
veitchResourceTheoryStabilizer2014,leoneStabilizerRenyiEntropy2022}. Nonstabilizerness is generated by non-Clifford unitaries and is a necessary complementary resource to entanglement to achieve universal quantum computation \cite{bravyiUniversalQuantumComputation2005,
harrowQuantumComputationalSupremacy2017}.
\par 
Broadly speaking, nonstabilizerness and entanglement are distinct resources. Clifford unitaries can generate entanglement without introducing nonstabilizerness, while local non-Clifford unitaries can generate nonstabilizerness without producing entanglement. However, in the absence of entanglement, only a restricted form of nonstabilizerness is accessible \cite{olivieroMagicstateResourceTheory2022,
odavicStabilizerEntropyNonintegrable2025}. Further exploring the interplay between the nonstabilizerness and entanglement of quantum states has been the subject of several recent investigations \cite{guMagicinducedComputationalSeparation2024,frauNonstabilizernessEntanglementMatrix2024,
tirritoQuantifyingNonstabilizernessEntanglement2024,
qianQuantumNonLocalMagic2025}.
\par 
The concept of entanglement can be extended in operator space \cite{zanardiEntanglementQuantumEvolutions2001,
wangQuantumEntanglementUnitary2002}, providing a measure of operator complexity across a given bipartition. This idea of operator entanglement was shown to be directly linked with information scrambling and entropy production under unitary evolutions \cite{styliarisInformationScramblingBipartitions2021}, while also proving useful in the context of classical simulability and dynamical complexity of quantum many-body systems \cite{prosenEfficiencyClassicalSimulations2007,
muthDynamicalSimulationIntegrable2011,prosenOperatorSpaceEntanglement2007}. Recently, a Heisenberg picture generalization of nonstabilizerness was introduced in Ref. \cite{dowlingMagicHeisenbergPicture2024}, based on the representation of an operator in the Pauli basis. In addition, in Ref. \cite{dowlingBridgingEntanglementMagic2025} it was shown that this notion of operator space nonstabilizerness provides an upper bound to operator entanglement, making the generation of operator entanglement \cite{andreadakisOperatorSpaceEntangling2024} a sufficient but not necessary condition for the generation of operator space nonstabilizerness.
\par 
In this Letter, we establish that for unitary evolutions, the generation of operator entanglement on Pauli strings is both a necessary and a sufficient condition for the generation of nonlocal nonstabilizerness by the inverse evolution. Nonlocal nonstabilizerness is a form of nonstabilizerness of quantum states that cannot be erased via local operations \cite{caoGravitationalBackreactionMagical2024,caoNontrivialAreaOperators2024,odavicComplexityFrustrationNew2023,catalanoMagicPhaseTransition2024}. To our knowledge, this provides the first exact correspondence between entanglement and nonstabilizerness generation in quantum systems \footnote{Notably, Ref. \cite {caoGravitationalBackreactionMagical2024} provides a relation between nonlocal nonstabilizerness and the spectrum anti-flatness of quantum states, which is a quantum property that depends on both the nonstabilizerness and entanglement.}. We emphasize that our result links an entangling property in \emph{operator space} and nonlocal nonstabilizerness generation in \emph{state space}.
\par Our main result can be summarized as follows:
\par \emph{For a unitary evolution of a system of qubits, all Heisenberg evolved Pauli strings are unentangled over a given bipartition if and only if the inverse Schr\"{o}dinger evolution is a Clifford unitary up to post-processing with local unitaries.}
\par We note that the evolution map in the Heisenberg picture is the inverse of the one in the Schr\"{o}dinger picture, which is the reason why an operator space property of the forward evolution is connected to a state space property of the backwards evolution. The formal statement of the above result is given in \cref{theorem}, while the proof can be found in Ref. \cite{supp}. Guided by the above correspondence, we introduce the Pauli-entangling power, which is defined as the average operator entanglement generated by a unitary evolution when acting on Pauli strings. We compute this average analytically and obtain upper bounds that can be expressed in terms of nonstabilizerness generation in operator space. In the process, we elucidate how operator space nonstabilizerness can be thought of as a form of coherence in operator space, which immediately allows us to formulate a slightly stronger version of a result in Ref. \cite{dowlingBridgingEntanglementMagic2025}. Calculations of the typical value for the Pauli-entangling power and numerical simulations of local Hamiltonian models show that non-local nonstabilizerness generation is ubiquitous and that for long time-scales its behavior under integrability breaking is directly related to that of scrambling.
\par \prlsection{Operator entanglement and nonlocal nonstabilizerness}
Our quantum system of interest is a collection of $N$ qubits $\mathcal{H} \cong (\mathbb{C}^2)^{\otimes N}$ alongside a bipartition $\mathcal{H} \cong \mathcal{H}_A \otimes \mathcal{H}_B \cong (\mathbb{C}^2)^{\otimes N_A} \otimes (\mathbb{C}^2)^{\otimes N_B}$ and define $d=2^N$, $d_A=2^{N_A}$, $d_B=2^{N_B}$. We denote as $\mathscr{U}_N$, $\mathscr{P}_N$ , $\mathscr{C}_N$ the unitary, Pauli, and Clifford groups on $N$ qubits and define as $\tilde{\mathscr{P}}_N = \mathscr{P}_N / \mathbb{Z}_4$ the Abelian group of Pauli strings up to phases. We use the notation $\mathlarger{\mathbb{E}}_{g \in G}$ to denote the Haar average over a compact group $G$ \footnote{If $G$ is discrete, then this is simply the uniform average over all elements of $G$.}. Before stating our main result (\cref{theorem}), we briefly review the relevant concepts of operator entanglement and nonlocal nonstabilizerness.
\par Given a unitary operator $O \in \mathscr{U}_N$, we can always express it in the form $O/\sqrt{d}= \sum_i \sqrt{\lambda_i} V_i \otimes W_i$, where $\{V_i\}_i$ and $\{W_i\}_i$ are orthonormal sets with respect to the Hilbert-Schmidt scalar product and $\{\lambda_i \in \mathbb{R}^+\}_i$ forms a probability distribution, $\sum_i \lambda_i = 1$. Then, we can define the operator entanglement of $O$ as an entropic measure over this probability distribution, e.g., $E_\alpha (O) = \frac{1}{1-\alpha} \log \sum_i \lambda_i^\alpha$ is the operator $\alpha$-R\'{e}nyi entropy. An analytically tractable measure of operator entanglement is given by the linear entropy $E_{\text{lin}}(O)=1-\sum_i \lambda_i^2$, which is also given by the formula \cite{zanardiEntanglementQuantumEvolutions2001}
\begin{equation} \label{opent}
E_{\text{lin}}(O)= 1- \frac{1}{d^2} \Tr( T^A_{12} \, O^{\otimes 2} \, T_{12}^A{O^\dagger}^{\otimes 2}),
\end{equation}
where $T_{12}^A$ is the partial swap between the $A$ subsystems of the doubled space $\mathcal{H}^1 \otimes \mathcal{H}^2 \cong \mathcal{H}^1_A \otimes \mathcal{H}^1_B \otimes \mathcal{H}^2_A \otimes \mathcal{H}^2_B$. 
\par Given a quantum state $\ket{\psi}$, the set $\{d^{-1} \mel{\psi}{P}{\psi}^2 \, \vert \, P \in \tilde{\mathscr{P}_N} \}$ forms a probability distribution over the Pauli strings. Then, the stabilizer R\'{e}nyi entropies are defined as entropic measures over this probability distribution up to a constant shift \cite{leoneStabilizerRenyiEntropy2022}
\begin{equation}
m_\alpha(\ket{\psi}) = \frac{1}{1-\alpha} \log \sum_{P \in \tilde{\mathscr{P}}_N} \frac{\mel{\psi}{P}{\psi}^{2 \alpha}}{d^\alpha} - \log d.
\end{equation}
The strictly nonlocal part of this entropy with respect to the bipartition $\mathcal{H} \cong \mathcal{H}_A \otimes \mathcal{H}_B$ may be defined as \cite{caoGravitationalBackreactionMagical2024,caoNontrivialAreaOperators2024}
\begin{equation}
m_\alpha^{\text{NL}} (\ket{\psi}) = \min_{\substack{U_A \in \mathscr{U}_{N_A} \\ U_B \in \mathscr{U}_{N_B}}} m_\alpha(U_A \otimes U_B \ket{\psi}).
\end{equation}
\par When we write $E(O)$ ($m(\ket{\psi}$), we mean that the statement holds for \emph{any} choice of entropic measure for the operator entanglement (nonstabilizerness).
\par We are now in the position to state formally our main result.
\begin{thm}[\cite{supp}] \label{theorem}
Let $U \in \mathscr{U}_N$ and $\mathcal{H} \cong \mathcal{H}_A \otimes \mathcal{H}_B \cong (\mathbb{C}^2)^{N_A} \otimes (\mathbb{C}^2)^{\otimes N_B}$ be a bipartition of an $N$-qubit quantum system, then
\begin{equation*}
\begin{split}
&E(U^\dagger PU) = 0 \; \forall \, P \in \tilde{\mathscr{P}}_N \Leftrightarrow \\ &\Leftrightarrow \exists \, V \in \mathscr{U}_{N_A}, \, W \in \mathscr{U}_{N_B}, \, C \in \mathscr{C}_N \text{ s.t. } U^\dagger=(V\otimes W) \, C
\end{split}
\end{equation*}
\end{thm}
\begin{proof}[Proof (sketch)]
The $\Leftarrow$ direction follows immediately from the fact that Clifford unitaries map Pauli strings to Pauli strings (up to a sign) and that operator entanglement is invariant under local unitaries, hence $E(V \otimes W C P C^\dagger V^\dagger \otimes W)= E(P^\prime ) =0$, where we also used the fact that Pauli strings are product operators over all qubits.
\par For the $\Rightarrow$ direction the starting point is noticing that $E(U^\dagger PU) = 0 \; \forall \, P \in \tilde{\mathscr{P}}_N$ implies that there are operators $X_P, Y_P$ such that $U^\dagger P U = X_P \otimes Y_P$. Starting from this, we can show that $\{X_P\}$ and $\{Y_P\}$ form appropriate projective irreducible representations of $\tilde{\mathscr{P}}_N$ and using their properties deduce that they have to be unitarily equivalent to Pauli strings (up to signs) $P^\prime_A$ acting on $\mathcal{H}_{N_A}$ and $P^\prime_B$ acting on $\mathcal{H}_{N_B}$, respectively, i.e. $U^\dagger P U = \pm \, V\otimes W \, P_A^\prime \otimes P_B^\prime \, V^\dagger \otimes W^\dagger$. The Clifford unitary $C$ is then just the appropriate mapping between the original and final pauli strings $P$ and $\pm P^\prime_A \otimes P^\prime_B$. Since the above mapping holds for all Pauli strings, which form an operator basis, it follows that it holds for all operators, proving the statement.
\end{proof}
When a unitary operator has the form $U = (V \otimes W) \, C$ for some $V \in \mathscr{U}_{N_A}, \, W \in \mathscr{U}_{N_B}, \, C \in \mathscr{C}_N$, we say that $U$ generates \emph{no} nonlocal nonstabilizerness. The reasoning for this definition becomes clear by the following Lemma.
\begin{lemma}[\cite{supp}] \label{lemma_no_NL}
For all unitaries $U \in \mathscr{U}_{N}$ s.t. $U= (V \otimes W) \, C$, where $V \in \mathscr{U}_{N_A}, \, W \in \mathscr{U}_{N_B}, \, C \in \mathscr{C}_N$,
\begin{equation} \label{no_NL}
m^{\text{NL}}(U \ket{\psi}) \leq m(\ket{\psi}).
\end{equation}
In particular, if $\ket{s}$ is a stabilizer state, then $m^{\text{NL}}(U \ket{s})=0$.
\end{lemma}
\cref{no_NL} expresses the intuitive idea that any unitary of the form $U = (V \otimes W) \, C$ can at most spread pre-existing nonstabilizerness across the bipartition--due to the potentially global action of the Clifford unitary--but cannot generate additional nonlocal nonstabilizerness. From this perspective, \cref{theorem} captures a duality in the growth of quantum complexity: The generation of operator entanglement via Heisenberg evolution on Pauli strings is intrinsically linked to the generation of nonlocal nonstabilizerness under the inverse Schr\"{o}dinger evolution, such that one cannot occur without the other. Let us emphasize once more how unique this result is. When looking strictly at properties of quantum states, entanglement generation is \emph{not} sufficient for nonlocal nonstabilizerness generation. For example, any entangling Clifford unitary, such as the CNOT in a 2-qubit system, will generate entanglement but will not generate nonlocal nonstabilizerness--in fact, it does not produce any nonstabilizerness.

\prlsection{Average Pauli-entangling power}
\begin{figure*}
\centering
\begin{subfigure}{0.35\textwidth}
\centering
\begin{tikzpicture}
\def\ds{1.05}
\node at (-1.4*\ds,0) {$\text{\large $U=$}$};
\foreach \i in {0,1,3}{
\def\x{\i*\ds}
\Agate{\x}{0}
\ifnum\i =1
\draw[thick] (\x+0.35,0) --(\x+0.7,0);
\draw[thick] (\x-0.35,0) -- (\x-0.7,0);
\else
\ifnum\i =0
\draw[thick] (\x+0.35,0) --(\x+0.7,0);
\else
\draw[thick] (\x-0.35,0) -- (\x-0.7,0);
\fi
\fi
}
\node at (2*\ds,0) {$\dots$};
\TrLeft{(-0.35,0)}
\TrRight{(3*\ds+0.35,0)}
\end{tikzpicture}
\caption{}
\end{subfigure}%
\hfill
\begin{subfigure}{0.6\textwidth}
\centering
\begin{tikzpicture}
\def\ds{1.05}
\node at (-0.4*\ds,0) {$\text{\large $\Lambda=$}$};
\foreach \i in {1,2,3,4}{
\def\x{\i*\ds}
\Pigate{\x}{0}
\ifnum\i=1
\draw[thick] (\x+0.35,0) --(\x+0.7,0);
\else \ifnum\i=4
\draw[thick] (\x-0.35,0) -- (\x-0.7,0);
\else
\draw[thick] (\x+0.35,0) --(\x+0.7,0);
\draw[thick] (\x-0.35,0) -- (\x-0.7,0);
\fi
\fi}
\TrLeft{(\ds-0.35,0)}
\TrRight{(4*\ds+0.35,0)}
\node at (4*\ds+1.4*\ds,0) {$,$ where };
\Pigate{5.4*\ds+1.4*\ds}{0}
\draw[thick] (6.8*\ds+0.35,0) --(6.8*\ds+0.7,0);
\draw[thick] (6.8*\ds-0.35,0) -- (6.8*\ds-0.7,0);
\node at (6.8*\ds-0.7, 0.2) {$\scriptstyle\alpha$};
\node at (6.8*\ds+0.7, 0.2) {$\scriptstyle\beta$};
\node at (6.8*\ds+0.7, 0) {$\qquad \qquad = \delta_{\alpha \beta}$};
\sigmagate{6.8*\ds+2.2}{0}
\end{tikzpicture}
\caption{}
\end{subfigure}%
\vspace{20pt}
\begin{subfigure}{\textwidth}
\centering
\begin{tikzpicture}
\def\ds{1.05}
\def\dy{0.25}
\node at (-1.6,0) {$\text{\large $P_E(U)=1-d^{-4}$ \Huge Tr}$};
\begin{scope}[scale=0.8]
\foreach \i in {0,1}{
\def\h{\i*\dy}
\def\x{1.8*\i*\ds}
\Agate{2+\x}{\ds+\h}
\Pigate{2+\x}{0+\h}
\Adaggate{2+\x}{-\ds+\h}
\draw[thick] (2+\x-0.35,\ds+\h) -- (2+\x-0.7, \ds+\h);
\draw[thick] (2+\x+0.35, \ds+\h) -- (2+\x+0.7, \ds+\h);
\draw[thick] (2+\x-0.35,-\ds+\h) -- (2+\x-0.7, -\ds+\h);
\draw[thick] (2+\x+0.35, -\ds+\h) -- (2+\x+0.7, -\ds+\h);
\begin{pgfonlayer}{background}
\draw[thick] (2,0) -- (2+1.8*\ds,\dy);
\end{pgfonlayer}
}
\draw[thick] (2,\ds+0.6) .. controls (2,\ds+0.6+0.6) and (2+0.8,\ds++0.6+1) .. (2+0.9*\ds,\dy/2)
.. controls (2*2+2*0.9*\ds-2-0.8,2*\dy/2-\ds-0.6-1) and (2+1.8*\ds,-\ds+\dy-0.6-1) .. (2+1.8*\ds,-\ds+\dy-0.6);
\draw[thick] (2,-\ds-0.6) .. controls (2,-\ds-0.6-0.6) and (2+0.8,-\ds-0.6-1) .. (2+0.9*\ds,\dy/2)
.. controls (2*2+2*0.9*\ds-2-0.8,2*\dy/2+\ds+0.6+1) and (2+1.8*\ds,\ds+\dy+0.6+1) .. (2+1.8*\ds,\ds+\dy+0.6);
\def\deltax{3.6*\ds}
\def\deltay{2*\dy}
\begin{pgfonlayer}{background}
\draw[thick] (2+1.8*\ds,\dy) -- (2+\deltax,\deltay);
\end{pgfonlayer}
\begin{scope}[shift={(\deltax,\deltay)}]
\foreach \i in {0,1}{
\def\h{\i*\dy}
\def\x{1.8*\i*\ds}
\Agate{2+\x}{\ds+\h}
\Pigate{2+\x}{0+\h}
\Adaggate{2+\x}{-\ds+\h}
\draw[thick] (2+\x-0.35,\ds+\h) -- (2+\x-0.7, \ds+\h);
\draw[thick] (2+\x+0.35, \ds+\h) -- (2+\x+0.7, \ds+\h);
\draw[thick] (2+\x-0.35,-\ds+\h) -- (2+\x-0.7, -\ds+\h);
\draw[thick] (2+\x+0.35, -\ds+\h) -- (2+\x+0.7, -\ds+\h);
\begin{pgfonlayer}{background}
\draw[thick] (2,0) -- (2+1.8*\ds,\dy);
\end{pgfonlayer}
}
\draw[thick] (2,\ds+0.6) .. controls (2,\ds+0.6+0.6) and (2+0.8,\ds++0.6+1) .. (2+0.9*\ds,\dy/2)
.. controls (2*2+2*0.9*\ds-2-0.8,2*\dy/2-\ds-0.6-1) and (2+1.8*\ds,-\ds+\dy-0.6-1) .. (2+1.8*\ds,-\ds+\dy-0.6);
\draw[thick] (2,-\ds-0.6) .. controls (2,-\ds-0.6-0.6) and (2+0.8,-\ds-0.6-1) .. (2+0.9*\ds,\dy/2)
.. controls (2*2+2*0.9*\ds-2-0.8,2*\dy/2+\ds+0.6+1) and (2+1.8*\ds,\ds+\dy+0.6+1) .. (2+1.8*\ds,\ds+\dy+0.6);
\end{scope}
\begin{pgfonlayer}{background}
\begin{scope}[shift={(2,0)}]
\draw [thick] (0,0) to (-0.7,-0.7*0.25/1.89);
\draw [thick] (-0.7,-0.7*0.25/1.89) to  [bend left=90] (-0.7,-0.7*0.25/1.89+\ds/2);
\draw [thick, dotted] (-0.7,-0.7*0.25/1.89+\ds/2) to  (-0.25,-0.25*0.25/1.89+\ds/2);
\end{scope}
\begin{scope}[shift={(2+3*1.8*\ds,3*\dy)},xscale=-1,rotate=-15.04]]
\draw [thick] (0,0) to (-0.7,-0.7*0.25/1.89);
\draw [thick] (-0.7,-0.7*0.25/1.89) to  [bend left=90] (-0.7,-0.7*0.25/1.89+\ds/2);
\draw [thick, dotted] (-0.7,-0.7*0.25/1.89+\ds/2) to  (-0.25,-0.25*0.25/1.89+\ds/2);
\end{scope}
\end{pgfonlayer}
\draw[thick] (2.2-0.3-0.2,3.6+2*\dy) -- (2.2-0.3-1.2,3.6+2*\dy) -- (2.2-0.3-1.2,-3.6+2*\dy) -- (2.2-0.3-0.2,-3.6+2*\dy);
\draw[thick,decorate,decoration={bent,aspect=0.3,amplitude=-30pt}] (2.2-0.3,3+2*\dy) -- (2.2-0.3,-3+2*\dy);
\draw[thick,decorate,decoration={bent,aspect=0.3,amplitude=30pt}] (1.8+3*1.8*\ds+0.3,3+2*\dy) -- (1.6+3*1.8*\ds+0.3,-3+2*\dy);
\node at (1.8+3*1.8*\ds+0.3+0.5,3+2*\dy) {$N_A$};
\begin{scope}[shift={(8.1,0)}]
\foreach \i in {0,1,2,3}{
\begin{scope}[shift={(0.2+\i*1.7*\ds,\i*\dy)}]
\Agate{2}{\ds}
\Pigate{2}{0}
\Adaggate{2}{-\ds}
\draw[thick] (2-0.35,\ds) -- (2-0.7, \ds);
\draw[thick] (2+0.35, \ds) -- (2+0.7, \ds);
\draw[thick] (2-0.35,-\ds) -- (2-0.7, -\ds);
\draw[thick] (2+0.35, -\ds) -- (2+0.7, -\ds);
\draw[thick] (2,\ds+0.6) .. controls (2,\ds+0.6+0.6) and (2-0.85,\ds++0.6+1) .. (2-0.85,\dy/2)
.. controls (2*2-2*0.85-2+0.8,2*\dy/2-\ds-0.6-1) and (2,-\ds-0.6-1) .. (2,-\ds-0.6);
\end{scope}}
\foreach \i in {0,1,2}{
\begin{scope}[shift={(0.2+\i*1.7*\ds,\i*\dy)}]
\begin{pgfonlayer}{background}
\draw[thick] (2,0) -- (2+1.7*\ds,\dy);
\end{pgfonlayer}
\end{scope}}
\begin{pgfonlayer}{background}
\begin{scope}[shift={(2+0.2,0)}]
\draw [thick] (0,0) to (-0.6,-0.6*0.25/1.785);
\draw [thick] (-0.6,-0.6*0.25/1.785) to  [bend left=90] (-0.6,-0.6*0.25/1.785+\ds/2);
\draw [thick, dotted] (-0.6,-0.6*0.25/1.785+\ds/2) to  (-0.2,-0.2*0.25/1.785+\ds/2);
\end{scope}
\begin{scope}[shift={(2+0.2+3*1.7*\ds,3*\dy)},xscale=-1,rotate=-15.94]]
\draw [thick] (0,0) to (-0.6,-0.6*0.25/1.785);
\draw [thick] (-0.6,-0.6*0.25/1.785) to  [bend left=90] (-0.6,-0.6*0.25/1.785+\ds/2);
\draw [thick, dotted] (-0.6,-0.6*0.25/1.785+\ds/2) to  (-0.2,-0.2*0.25/1.785+\ds/2);
\end{scope}
\end{pgfonlayer}
\draw[thick] (1.3+3*1.8*\ds+1,3.6+2*\dy) -- (1.3+3*1.8*\ds+2,3.6+2*\dy) -- (1.3+3*1.8*\ds+2,-3.6+2*\dy) -- (1.3+3*1.8*\ds+1,-3.6+2*\dy);
\draw[thick,decorate,decoration={bent,aspect=0.3,amplitude=-30pt}] (2-0.3,3+2*\dy) -- (2-0.3,-3+2*\dy);
\draw[thick,decorate,decoration={bent,aspect=0.3,amplitude=30pt}] (1.7+3*1.8*\ds+0.3,3+2*\dy) -- (1.7+3*1.8*\ds+0.3,-3+2*\dy);
\node at (1.7+3*1.8*\ds+0.3+0.5,3+2*\dy) {$N_B$};
\end{scope}
\end{scope}
\end{tikzpicture}
\caption{}
\label{fig:formula}
\end{subfigure}%
\caption{(a) A uniform matrix product unitary $U$ is represented by a single tensor $A$, with the bond dimension $\chi$ being the dimension of the inner (contracted) legs. (b) The $\Lambda$ operator of \cref{q} can be represented as a uniform matrix product unitary using the tensor $\Pi$, which is equal to a Pauli operator when both inner legs have the same index and zero otherwise. (c) The Pauli-entangling power \cref{paul_ent} can be written in terms of powers of two transfer matrices, which contain four copies of $A$ and four copies of $A^\dagger$, hence a total bond dimension equal to $\chi^8$.}
\label{fig:mpu}
\end{figure*}
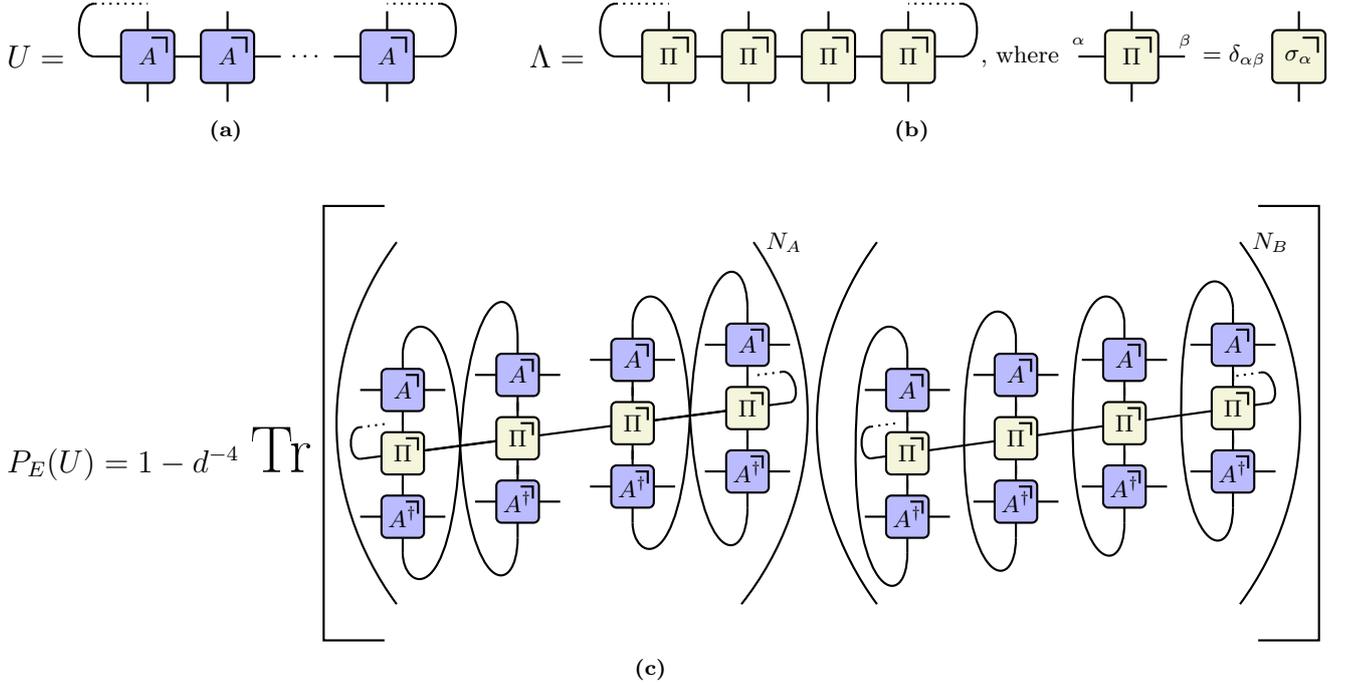
\cref{theorem} provides a novel operational meaning on the growth of operator entanglement for Pauli strings evolved in the Heisenberg picture. Here, we will provide an average measure of this operator entanglement increase given some unitary dynamics $U$. We will quantify operator entanglement by means of the linear entropy, see \cref{opent}, which allows us to perform the average over the Pauli strings analytically. Recall that the linear entropy is related to the R\'{e}nyi entropies via the relation
\begin{equation}
E_\alpha(O) \geq E_2 (O) = - \log(1-E_{\text{lin}}(O)), \; \alpha \in [0,2).
\end{equation}
Additionally, when computing averages over a distribution of $O$, Jensen's inequality transforms the last equality into a lower bound for $\mathlarger{\mathbb{E}}_{O} E_2(O)$ in terms of $\mathlarger{\mathbb{E}}_O E_{\text{lin}}(O)$.
\begin{proposition}[\cite{supp}]\label{prop_paul_ent}
Let $U \in \mathscr{U}_N$ and $\mathcal{H} \cong \mathcal{H}_A \otimes \mathcal{H}_B \cong (\mathbb{C}^2)^{N_A} \otimes (\mathbb{C}^2)^{\otimes N_B}$, then the average Pauli-entangling power is given as
\begin{equation} \label{paul_ent}
\begin{split}
P_E(U) &\coloneqq \mathlarger{\mathbb{E}}_{P \in \tilde{\mathscr{P}}_N} E_{\text{lin}}(U^\dagger P U) \\
&= 1- \frac{1}{d^2} \Tr( T^A_{(12)(34)} \, {U^\dagger}^{\otimes 4} Q U^{\otimes 4}),
\end{split}
\end{equation}
where $Q \coloneqq d^{-2} \sum_{P \in \tilde{\mathscr{P}_N}} P^{\otimes 4}$.
\end{proposition}
Here, $T^A_{(12)(34)}$ is a partial permutation between the copies of the $A$ subsystem in the quadrupled space $\mathcal{H}^{\otimes 4}$. $Q$ in \cref{paul_ent} is a rank-$d^2$ projector that plays a fundamental role in the representation theory of the Clifford group \cite{zhuCliffordGroupFails2016}. In addition, its image ${U^\dagger}^{\otimes 4} Q U^{\otimes 4}$ under the adjoint action of ${U^\dagger}^{\otimes 4}$ is directly related to the nonstabilizing power of $U^\dagger$ \cite{leoneStabilizerRenyiEntropy2022}.
\par The following invariance property of $P_E(U)$ follows directly from the definition \cref{paul_ent}:
\begin{equation}
P_E(C \, U \,V_A \otimes V_B)=P_E(U),
\end{equation}
$\forall \, U\in \mathscr{U}_N, V_A \in \mathscr{U}_{N_A} V_B \in \mathscr{U}_{N_B}, C\in \mathscr{C}_N$. This expresses the fact that the Pauli-entangling power is invariant under post-processing with a Clifford unitary and pre-processing with local unitaries. This is an intuitive property for a proxy of nonlocal nonstabilizerness generation for the inverse evolution $V_A^\dagger \otimes V_B^\dagger \,  U^\dagger \, C^\dagger$: Clifford unitaries preserve the initial amount of nonstabilizerness, while local unitaries can, at most, produce nonstabilizerness locally.
\par In general, the expression in \cref{paul_ent} is hard to compute for generic evolutions and large system sizes. However, we can adapt the method of Ref. \cite{haugQuantifyingNonstabilizernessMatrix2023} to obtain an efficient computation for unitary evolutions that can be efficiently represented as matrix product unitaries \cite{ignaciociracMatrixProductUnitaries2017,styliarisMatrixproductUnitariesQuantum2025}. First, notice that $Q$ may be equivalently expressed as 
\begin{equation} \label{q}
Q = d^{-2} \Lambda^{\otimes N},
\end{equation}
where $\Lambda \coloneqq \sum_{\sigma \in \tilde{\mathscr{P}_1}} \sigma^{\otimes 4}$. Notice that now $\sigma$ are single-qubit Pauli operators and the equality in \cref{q} should be understood after the appropriate rearrangement of the tensor factors, $\bigotimes_{j=1}^4 (\mathbb{C}^2)^{\otimes N} \cong \bigotimes_{i=1}^N (\mathbb{C}^2)^{\otimes 4}$. Then, if we assume that $U$ is a matrix product unitary with bond dimension $\chi$, \cref{paul_ent} corresponds to a tensor contraction for a matrix product operator of bond dimension $\chi^8$. For simplicity, in \cref{fig:mpu} we further assume that $U$ is a uniform matrix product unitary, which have been shown to coincide with the class of translation-invariant quantum cellular automata \cite{ignaciociracMatrixProductUnitaries2017,
farrellyReviewQuantumCellular2020}. As seen in \cref{fig:formula}, the Pauli-entangling power is then expressed in terms of powers of two transfer matrices with bond dimension equal to $\chi^8$. In fact, as $N_A,N_B \rightarrow \infty$, $P_E(U)$ depends only on the largest eigenvalues of these matrices--assuming they are unique.
\par \prlsection{Upper bounds \& coherence in operator space}
In order to further investigate the analytical properties of the Pauli-entangling power \cref{paul_ent}, it is desirable to obtain relations to other physical quantities. To this end, the notion of nonstabilizerness generation in operator space proves useful. This was recently quantified in Ref. \cite{dowlingMagicHeisenbergPicture2024} via the so-called operator stabilizer entropies. Specializing on initial Pauli operators $P \in \tilde{\mathscr{P}}_N$, these read as
\begin{equation}
M_\alpha (U^\dagger P U) \coloneqq \frac{1}{1-\alpha} \log \sum_{P^\prime \in \tilde{\mathscr{P}}_N} \left(\frac{\Tr(U^\dagger P U P^\prime)}{d} \right)^{2 \alpha}.
\end{equation}
The corresponding linear operator stabilizer entropy is then given as $M_{\text{lin}}(U^\dagger P U) = 1- d^{-4} \sum_{P^\prime \in \tilde{\mathscr{P}}_N} \Tr(U^\dagger P U P^\prime)^{4}$. Here, we would like to point out that this linear operator stabilizer entropy can be understood simply as a 2-coherence in operator space. Specifically, recall that given a pure quantum state $\rho=\ketbra{\psi}{\psi}$ and a basis $B=\{\Pi_i=\ketbra{i}{i} \}_{i=1}^d$, its 2-coherence is given as $c_{2,B}=\lVert \rho - \sum_{i=1}^d \Pi_i \rho \Pi_i \rVert_2^2 = 1 - \sum_{i=1}^d \lvert \braket{i}{\psi} \rvert^4$ \cite{anandQuantumCoherenceSignature2021}. Translating this to operator space, where the role of states is played by unitaries and $B$ is an operator basis, we simply observe that:
\begin{obs} \label{observation}
For any $U\in \mathscr{U}_N$,
\begin{equation}
M_{\text{lin}}(U) =c_{2,B}\left( \frac{U}{\sqrt{d}}\right) ,
\end{equation}
where $B=\{ P^\prime / \sqrt{d} \, \vert \, P^\prime \in \tilde{\mathscr{P}}_N \}$ and $c_{2,B}(U)= 1 - \sum_{P^\prime \in \tilde{\mathscr{P}}_N} \left\vert \Tr(U/\sqrt{d} \, P^\prime/\sqrt{d}) \right\vert^4$.
\end{obs}
While this is a simple observation, it allows us to use results from the resource theory of coherence \cite{anandQuantumCoherenceSignature2021} adapted in operator space. In particular, this allows for a direct relation between entanglement and nonstabilizerness in operator space:
\begin{lemma}[\cite{supp}] \label{lemma_opcoh}
For any unitary operator $U \in \mathscr{U}_N$,
\begin{equation} \label{opcoh}
E_{\text{lin}}(U)= \min_{{\substack{V_A, W_A \in \mathscr{U}_{N_A} \\ V_B, W_B \in \mathscr{U}_{N_B}}}} M_{\text{lin}}(V_A \otimes V_B \, U \, W_A \otimes W_B).
\end{equation}
\end{lemma}
\cref{opcoh} is directly related to the fact that the operator stabilizer entropies upper bound operator entanglement, see Theorem 2 in Ref. \cite{dowlingMagicHeisenbergPicture2024}. \cref{opcoh} also allows us to express the Pauli-entangling power \cref{paul_ent} as the average operator nonstabilizerness of the evolved Pauli strings $U^\dagger P U$ up to local unitaries $V_A, W_A \in \mathscr{U}_{N_A}$ and $V_B, W_B \in \mathscr{U}_{N_B}$. Importantly, the local unitaries involved in the minimization will in general depend on $P$ and will in general not constitute unitary channels, $V_{A,B} \neq W_{A,B}^\dagger$. In contrast, \cref{theorem} ensures that when $P_E(U)=0$, we can indeed find local unitary channels, independent of $P$, that erase any operator nonstabilizerness generation:
\begin{cor}[\cite{supp}] \label{cor_no_op_NL}
For any $U \in \mathscr{U}_N$,
\begin{equation} \label{no_op_NL}
\begin{split}
P_E(U) = 0 \Leftrightarrow \, &\exists \, V \in \mathscr{U}_{N_A}, \, W \in \mathscr{U}_{N_B} \text{ s.t. } \forall P \in \tilde{\mathscr{P}}_N: \\ 
&M_{\text{lin}}(V^\dagger \otimes W^\dagger \, U^\dagger PU \, V \otimes W) =0.
\end{split}
\end{equation}
\end{cor}
\cref{no_op_NL} provides a version of \cref{theorem} strictly in operator space, where the generation of operator entanglement on Pauli strings is a necessary and sufficient condition for the generation of operator nonstabilizerness that is not locally erasable.
\par In addition to the above, we can use operator stabilizer entropies to upper-bound the Pauli-entangling power $P_E(U)$. 
\begin{proposition}[\cite{supp}] \label{prop_upbound}
For any $U \in \mathscr{U}_N$,
\begin{equation} \label{upbound}
  \begin{split}
P_E(U) \leq &\min\left\{ {\mathlarger{\mathbb{E}}}_{P_A \in \tilde{\mathscr{P}}_{N_A}} M_{\text{lin}}(U \, P_A \otimes \mathds{1}_B \, U^\dagger),\right. \\
&\hspace{24pt}\left. {\mathlarger{\mathbb{E}}}_{P_B \in \tilde{\mathscr{P}}_{N_B}} M_{\text{lin}}(U \, \mathds{1}_A \otimes P_B \, U^\dagger) \right\},
  \end{split}
\end{equation}
\end{proposition}
namely $P_E(U)$ is bounded by the average operator nonstabilizerness generated by the inverse Heisenberg evolution on \emph{subsystem local} Pauli strings.
This gives a quantitative relation, in the form of an inequality, between the entangling properties of the forward evolution and the nonstabilizerness generating properties of the backwards evolution in operator space. Quite interestingly, although \cref{paul_ent} accounts for operator entanglement generation on all Pauli strings, the bound \cref{upbound} involves only local operators with respect to the system bipartition.
\par\prlsection{Typical value \& quantum scrambling}
Let us turn our attention to the Pauli-entangling properties of random unitaries. Due to measure concentration for randomly distributed unitaries in large dimensions \cite{meckesRandomMatrixTheory2019}, the typical value of the Pauli-entangling power corresponds to the average value over Haar random unitaries. Using \cref{paul_ent} and performing the average over the Haar measure, we obtain that:
\begin{proposition}[\cite{supp}] \label{prop_typical}
  \begin{equation} \label{typical}
    \begin{split}
  &{\mathlarger{\mathbb{E}}}_{U \in \mathscr{U}_N} P_E(U)=\frac{(d^2-d_A^2)(d^2-10)(d_A^2-1)}{d^2 d_A^2(d^2-9)}\\
  &= 1-\left(1-\frac{1}{d^2}\right)\frac{1}{d_A^2}-\left(1-\frac{1}{d^2}\right)\frac{1}{d_B^2} + O\left(\frac{1}{d^4}\right).
    \end{split}
  \end{equation}
  \end{proposition}
Notice that due to the right invariance of the Haar measure, the initial average over Pauli strings in \cref{paul_ent} may be separated into two terms; the identity operator that is invariant and remains unentangled and the non-identity Pauli strings that obtain the same typical value for the operator entanglement. This second contribution was previously computed by taking an appropriate limit in Ref. \cite{dowlingBridgingEntanglementMagic2025}, see Eq. (B50), and using an approximation in Ref. \cite{kudler-flamEntanglementLocalOperators2021}, see Eq. (29), both of which are compatible with \cref{typical}. Notice that for $d_A=d_B=\sqrt{d}\gg 1$, ${\mathlarger{\mathbb{E}}}_{U \in \mathscr{U}_N} P_E(U) \sim 1-\frac{2}{d}$, which is precisely the system size scaling of the typical value for the scrambling of information across a symmetric bipartition for Haar distributed unitary evolutions \cite{styliarisInformationScramblingBipartitions2021}. Intuitively, for generic unitary evolutions in large dimensions, quantum scrambling, as quantified by out-of-time-order-correlators \cite{swingleUnscramblingPhysicsOutoftimeorder2018, styliarisInformationScramblingBipartitions2021}, is accompanied by the growth of the operator entanglement of Heisenberg evolved operators \cite{dowlingScramblingNecessaryNot2023}. In fact, in \cref{appendix} we present numerical evidence that in the long-time limit, the Pauli-entangling power behaves similarly to the scrambling properties of prototypical quantum many-body models as we tune the strength of an integrability breaking term. This implies that different signatures of quantum complexity in the long-time limit of local Hamiltonian evolutions are often interconnected, exhibitng a similar response to integrability breaking \cite{odavicStabilizerEntropyNonintegrable2025} and the onset of random matrix theory \cite{guhrRandomMatrixTheoriesQuantum1998}, despite such evolutions being atypical.
\par \prlsection{Conclusion}
Entanglement and nonstabilizerness are two fundamental resources in quantum information processing that underpin quantum complexity. The emergence of these non-classical properties under quantum evolution may then be used to characterize the complexity of the dynamics. In this Letter, we proved in \cref{theorem} an exact correspondence between the generation of entanglement in operator space and the generation of nonlocal nonstabilizerness by the inverse evolution in state space. We believe that this paves the way for a deeper understanding of the interplay between the ostensibly distinct origins of nonstabilizerness and entanglement growth in quantum systems.
\par \cref{theorem} provides a novel insight on the significance of the operator entanglement increase of Heisenberg evolved Pauli strings. To quantify this phenomenon, we introduced an analytically tractable coarse measure of this operator entanglement increase, referred to as the Pauli-entangling power. Notice that this may be understood as a measure of operator space entangling power \cite{andreadakisOperatorSpaceEntangling2024} over a restricted set of initial unentangled operators. We showed that this Pauli-entangling power may be efficiently computed for unitary evolutions that can be represented as matrix product unitaries with small bond dimension. 
\par In addition, we described how a recently introduced concept of nonstabilizerness in the Heisenberg picture can be understood as coherence generation in operator space. We used this notion of operator nonstabilizerness to formulate a corollary of \cref{theorem} purely in operator space, as well as to obtain upper bounds for the Pauli-entangling power. It is important to emphasize that both the operator entanglement and operator nonstabilizerness affect the computational cost of numerical simulations of the quantum system \cite{dowlingBridgingEntanglementMagic2025}, which is associated with the increase of its quantum complexity.
\par Analytical computations of the typical value, along with numerical simulations of prototypical quantum many-body models, suggest that, in several cases, Pauli-entangling properties are a primary component of quantum scrambling. In particular, we observed that both the Pauli-entangling power and an average measure of quantum scrambling \cite{styliarisInformationScramblingBipartitions2021} are closely related, both in terms of their typical scaling with system size and their long-time behavior under integrability breaking in Hamiltonian models.
\par A natural direction for future work is to investigate the relationship between operator entanglement and nonlocal nonstabilizerness presented in this Letter in concrete settings. Nonlocal nonstabilizerness and its associated quantum complexity have recently been identified as useful concepts, both for characterizing quantum phase transitions in topologically frustrated spin chains \cite{odavicComplexityFrustrationNew2023,catalanoMagicPhaseTransition2024}, as well as for identifying properties of quantum gravity in holographic models \cite{caoNontrivialAreaOperators2024,caoGravitationalBackreactionMagical2024}. It is therefore intriguing to explore whether its connection to operator entanglement can provide further insights in these domains.
\par \prlsection{Acknowledgments}
The Authors acknowledge valuable discussions with Aliosca Hamma and Emanuel Dallas. FA would like to thank Namit Anand and Pavel Kos for useful comments and discussions. FA acknowledges financial support from a University of Southern California ``Philomela Myronis'' scholarship. PZ acknowledges partial support from the NSF award PHY2310227.

\bibliographystyle{apsrev4-2}
\bibliography{MyLibrary.bib}

\begin{thebibliography}{63}%
\makeatletter
\providecommand \@ifxundefined [1]{%
 \@ifx{#1\undefined}
}%
\providecommand \@ifnum [1]{%
 \ifnum #1\expandafter \@firstoftwo
 \else \expandafter \@secondoftwo
 \fi
}%
\providecommand \@ifx [1]{%
 \ifx #1\expandafter \@firstoftwo
 \else \expandafter \@secondoftwo
 \fi
}%
\providecommand \natexlab [1]{#1}%
\providecommand \enquote  [1]{``#1''}%
\providecommand \bibnamefont  [1]{#1}%
\providecommand \bibfnamefont [1]{#1}%
\providecommand \citenamefont [1]{#1}%
\providecommand \href@noop [0]{\@secondoftwo}%
\providecommand \href [0]{\begingroup \@sanitize@url \@href}%
\providecommand \@href[1]{\@@startlink{#1}\@@href}%
\providecommand \@@href[1]{\endgroup#1\@@endlink}%
\providecommand \@sanitize@url [0]{\catcode `\\12\catcode `\$12\catcode
  `\&12\catcode `\#12\catcode `\^12\catcode `\_12\catcode `\%12\relax}%
\providecommand \@@startlink[1]{}%
\providecommand \@@endlink[0]{}%
\providecommand \url  [0]{\begingroup\@sanitize@url \@url }%
\providecommand \@url [1]{\endgroup\@href {#1}{\urlprefix }}%
\providecommand \urlprefix  [0]{URL }%
\providecommand \Eprint [0]{\href }%
\providecommand \doibase [0]{https://doi.org/}%
\providecommand \selectlanguage [0]{\@gobble}%
\providecommand \bibinfo  [0]{\@secondoftwo}%
\providecommand \bibfield  [0]{\@secondoftwo}%
\providecommand \translation [1]{[#1]}%
\providecommand \BibitemOpen [0]{}%
\providecommand \bibitemStop [0]{}%
\providecommand \bibitemNoStop [0]{.\EOS\space}%
\providecommand \EOS [0]{\spacefactor3000\relax}%
\providecommand \BibitemShut  [1]{\csname bibitem#1\endcsname}%
\let\auto@bib@innerbib\@empty
\bibitem [{\citenamefont {Horodecki}\ \emph {et~al.}(2009)\citenamefont
  {Horodecki}, \citenamefont {Horodecki}, \citenamefont {Horodecki},\ and\
  \citenamefont {Horodecki}}]{horodeckiQuantumEntanglement2009}%
  \BibitemOpen
  \bibfield  {author} {\bibinfo {author} {\bibfnamefont {R.}~\bibnamefont
  {Horodecki}}, \bibinfo {author} {\bibfnamefont {P.}~\bibnamefont
  {Horodecki}}, \bibinfo {author} {\bibfnamefont {M.}~\bibnamefont
  {Horodecki}},\ and\ \bibinfo {author} {\bibfnamefont {K.}~\bibnamefont
  {Horodecki}},\ }\href {https://doi.org/10.1103/RevModPhys.81.865} {\bibfield
  {journal} {\bibinfo  {journal} {Reviews of Modern Physics}\ }\textbf
  {\bibinfo {volume} {81}},\ \bibinfo {pages} {865} (\bibinfo {year}
  {2009})}\BibitemShut {NoStop}%
\bibitem [{\citenamefont {Amico}\ \emph {et~al.}(2008)\citenamefont {Amico},
  \citenamefont {Fazio}, \citenamefont {Osterloh},\ and\ \citenamefont
  {Vedral}}]{amicoEntanglementManybodySystems2008}%
  \BibitemOpen
  \bibfield  {author} {\bibinfo {author} {\bibfnamefont {L.}~\bibnamefont
  {Amico}}, \bibinfo {author} {\bibfnamefont {R.}~\bibnamefont {Fazio}},
  \bibinfo {author} {\bibfnamefont {A.}~\bibnamefont {Osterloh}},\ and\
  \bibinfo {author} {\bibfnamefont {V.}~\bibnamefont {Vedral}},\ }\href
  {https://doi.org/10.1103/RevModPhys.80.517} {\bibfield  {journal} {\bibinfo
  {journal} {Reviews of Modern Physics}\ }\textbf {\bibinfo {volume} {80}},\
  \bibinfo {pages} {517} (\bibinfo {year} {2008})}\BibitemShut {NoStop}%
\bibitem [{\citenamefont {Hastings}(2007)}]{hastingsAreaLawOnedimensional2007}%
  \BibitemOpen
  \bibfield  {author} {\bibinfo {author} {\bibfnamefont {M.~B.}\ \bibnamefont
  {Hastings}},\ }\href {https://doi.org/10.1088/1742-5468/2007/08/P08024}
  {\bibfield  {journal} {\bibinfo  {journal} {Journal of Statistical Mechanics:
  Theory and Experiment}\ }\textbf {\bibinfo {volume} {2007}},\ \bibinfo
  {pages} {P08024} (\bibinfo {year} {2007})}\BibitemShut {NoStop}%
\bibitem [{\citenamefont {Verstraete}\ \emph {et~al.}(2008)\citenamefont
  {Verstraete}, \citenamefont {Murg},\ and\ \citenamefont
  {Cirac}}]{verstraeteMatrixProductStates2008}%
  \BibitemOpen
  \bibfield  {author} {\bibinfo {author} {\bibfnamefont {F.}~\bibnamefont
  {Verstraete}}, \bibinfo {author} {\bibfnamefont {V.}~\bibnamefont {Murg}},\
  and\ \bibinfo {author} {\bibfnamefont {J.}~\bibnamefont {Cirac}},\ }\href
  {https://doi.org/10.1080/14789940801912366} {\bibfield  {journal} {\bibinfo
  {journal} {Advances in Physics}\ }\textbf {\bibinfo {volume} {57}},\ \bibinfo
  {pages} {143} (\bibinfo {year} {2008})}\BibitemShut {NoStop}%
\bibitem [{\citenamefont {Verstraete}\ and\ \citenamefont
  {Cirac}(2006)}]{verstraeteMatrixProductStates2006}%
  \BibitemOpen
  \bibfield  {author} {\bibinfo {author} {\bibfnamefont {F.}~\bibnamefont
  {Verstraete}}\ and\ \bibinfo {author} {\bibfnamefont {J.~I.}\ \bibnamefont
  {Cirac}},\ }\href {https://doi.org/10.1103/PhysRevB.73.094423} {\bibfield
  {journal} {\bibinfo  {journal} {Physical Review B}\ }\textbf {\bibinfo
  {volume} {73}},\ \bibinfo {pages} {094423} (\bibinfo {year}
  {2006})}\BibitemShut {NoStop}%
\bibitem [{\citenamefont {Schuch}\ \emph {et~al.}(2008)\citenamefont {Schuch},
  \citenamefont {Wolf}, \citenamefont {Verstraete},\ and\ \citenamefont
  {Cirac}}]{schuchEntropyScalingSimulability2008}%
  \BibitemOpen
  \bibfield  {author} {\bibinfo {author} {\bibfnamefont {N.}~\bibnamefont
  {Schuch}}, \bibinfo {author} {\bibfnamefont {M.~M.}\ \bibnamefont {Wolf}},
  \bibinfo {author} {\bibfnamefont {F.}~\bibnamefont {Verstraete}},\ and\
  \bibinfo {author} {\bibfnamefont {J.~I.}\ \bibnamefont {Cirac}},\ }\href
  {https://doi.org/10.1103/PhysRevLett.100.030504} {\bibfield  {journal}
  {\bibinfo  {journal} {Physical Review Letters}\ }\textbf {\bibinfo {volume}
  {100}},\ \bibinfo {pages} {030504} (\bibinfo {year} {2008})}\BibitemShut
  {NoStop}%
\bibitem [{\citenamefont {Vidal}\ \emph {et~al.}(2003)\citenamefont {Vidal},
  \citenamefont {Latorre}, \citenamefont {Rico},\ and\ \citenamefont
  {Kitaev}}]{vidalEntanglementQuantumCritical2003}%
  \BibitemOpen
  \bibfield  {author} {\bibinfo {author} {\bibfnamefont {G.}~\bibnamefont
  {Vidal}}, \bibinfo {author} {\bibfnamefont {J.~I.}\ \bibnamefont {Latorre}},
  \bibinfo {author} {\bibfnamefont {E.}~\bibnamefont {Rico}},\ and\ \bibinfo
  {author} {\bibfnamefont {A.}~\bibnamefont {Kitaev}},\ }\href
  {https://doi.org/10.1103/PhysRevLett.90.227902} {\bibfield  {journal}
  {\bibinfo  {journal} {Physical Review Letters}\ }\textbf {\bibinfo {volume}
  {90}},\ \bibinfo {pages} {227902} (\bibinfo {year} {2003})}\BibitemShut
  {NoStop}%
\bibitem [{\citenamefont {Osborne}\ and\ \citenamefont
  {Nielsen}(2002)}]{osborneEntanglementSimpleQuantum2002}%
  \BibitemOpen
  \bibfield  {author} {\bibinfo {author} {\bibfnamefont {T.~J.}\ \bibnamefont
  {Osborne}}\ and\ \bibinfo {author} {\bibfnamefont {M.~A.}\ \bibnamefont
  {Nielsen}},\ }\href {https://doi.org/10.1103/PhysRevA.66.032110} {\bibfield
  {journal} {\bibinfo  {journal} {Physical Review A}\ }\textbf {\bibinfo
  {volume} {66}},\ \bibinfo {pages} {032110} (\bibinfo {year}
  {2002})}\BibitemShut {NoStop}%
\bibitem [{\citenamefont {Calabrese}\ and\ \citenamefont
  {Cardy}(2004)}]{calabreseEntanglementEntropyQuantum2004}%
  \BibitemOpen
  \bibfield  {author} {\bibinfo {author} {\bibfnamefont {P.}~\bibnamefont
  {Calabrese}}\ and\ \bibinfo {author} {\bibfnamefont {J.}~\bibnamefont
  {Cardy}},\ }\href {https://doi.org/10.1088/1742-5468/2004/06/P06002}
  {\bibfield  {journal} {\bibinfo  {journal} {Journal of Statistical Mechanics:
  Theory and Experiment}\ }\textbf {\bibinfo {volume} {2004}},\ \bibinfo
  {pages} {P06002} (\bibinfo {year} {2004})}\BibitemShut {NoStop}%
\bibitem [{\citenamefont
  {Gottesman}(1998)}]{gottesmanHeisenbergRepresentationQuantum1998}%
  \BibitemOpen
  \bibfield  {author} {\bibinfo {author} {\bibfnamefont {D.}~\bibnamefont
  {Gottesman}},\ }\href {https://doi.org/10.48550/arXiv.quant-ph/9807006}
  {\bibinfo {title} {The {{Heisenberg Representation}} of {{Quantum
  Computers}}}} (\bibinfo {year} {1998}),\ \Eprint
  {https://arxiv.org/abs/quant-ph/9807006} {arXiv:quant-ph/9807006}
  \BibitemShut {NoStop}%
\bibitem [{\citenamefont {Aaronson}\ and\ \citenamefont
  {Gottesman}(2004)}]{aaronsonImprovedSimulationStabilizer2004}%
  \BibitemOpen
  \bibfield  {author} {\bibinfo {author} {\bibfnamefont {S.}~\bibnamefont
  {Aaronson}}\ and\ \bibinfo {author} {\bibfnamefont {D.}~\bibnamefont
  {Gottesman}},\ }\href {https://doi.org/10.1103/PhysRevA.70.052328} {\bibfield
   {journal} {\bibinfo  {journal} {Physical Review A}\ }\textbf {\bibinfo
  {volume} {70}},\ \bibinfo {pages} {052328} (\bibinfo {year}
  {2004})}\BibitemShut {NoStop}%
\bibitem [{\citenamefont {Anders}\ and\ \citenamefont
  {Briegel}(2006)}]{andersFastSimulationStabilizer2006}%
  \BibitemOpen
  \bibfield  {author} {\bibinfo {author} {\bibfnamefont {S.}~\bibnamefont
  {Anders}}\ and\ \bibinfo {author} {\bibfnamefont {H.~J.}\ \bibnamefont
  {Briegel}},\ }\href {https://doi.org/10.1103/PhysRevA.73.022334} {\bibfield
  {journal} {\bibinfo  {journal} {Physical Review A}\ }\textbf {\bibinfo
  {volume} {73}},\ \bibinfo {pages} {022334} (\bibinfo {year}
  {2006})}\BibitemShut {NoStop}%
\bibitem [{\citenamefont {Nahum}\ \emph {et~al.}(2017)\citenamefont {Nahum},
  \citenamefont {Ruhman}, \citenamefont {Vijay},\ and\ \citenamefont
  {Haah}}]{nahumQuantumEntanglementGrowth2017}%
  \BibitemOpen
  \bibfield  {author} {\bibinfo {author} {\bibfnamefont {A.}~\bibnamefont
  {Nahum}}, \bibinfo {author} {\bibfnamefont {J.}~\bibnamefont {Ruhman}},
  \bibinfo {author} {\bibfnamefont {S.}~\bibnamefont {Vijay}},\ and\ \bibinfo
  {author} {\bibfnamefont {J.}~\bibnamefont {Haah}},\ }\href
  {https://doi.org/10.1103/PhysRevX.7.031016} {\bibfield  {journal} {\bibinfo
  {journal} {Physical Review X}\ }\textbf {\bibinfo {volume} {7}},\ \bibinfo
  {pages} {031016} (\bibinfo {year} {2017})}\BibitemShut {NoStop}%
\bibitem [{\citenamefont {Linden}\ \emph {et~al.}(2009)\citenamefont {Linden},
  \citenamefont {Smolin},\ and\ \citenamefont
  {Winter}}]{lindenEntanglingDisentanglingPower2009}%
  \BibitemOpen
  \bibfield  {author} {\bibinfo {author} {\bibfnamefont {N.}~\bibnamefont
  {Linden}}, \bibinfo {author} {\bibfnamefont {J.~A.}\ \bibnamefont {Smolin}},\
  and\ \bibinfo {author} {\bibfnamefont {A.}~\bibnamefont {Winter}},\ }\href
  {https://doi.org/10.1103/PhysRevLett.103.030501} {\bibfield  {journal}
  {\bibinfo  {journal} {Physical Review Letters}\ }\textbf {\bibinfo {volume}
  {103}},\ \bibinfo {pages} {030501} (\bibinfo {year} {2009})}\BibitemShut
  {NoStop}%
\bibitem [{\citenamefont {Zanardi}\ \emph {et~al.}(2000)\citenamefont
  {Zanardi}, \citenamefont {Zalka},\ and\ \citenamefont
  {Faoro}}]{zanardiEntanglingPowerQuantum2000}%
  \BibitemOpen
  \bibfield  {author} {\bibinfo {author} {\bibfnamefont {P.}~\bibnamefont
  {Zanardi}}, \bibinfo {author} {\bibfnamefont {C.}~\bibnamefont {Zalka}},\
  and\ \bibinfo {author} {\bibfnamefont {L.}~\bibnamefont {Faoro}},\ }\href
  {https://doi.org/10.1103/PhysRevA.62.030301} {\bibfield  {journal} {\bibinfo
  {journal} {Physical Review A}\ }\textbf {\bibinfo {volume} {62}},\ \bibinfo
  {pages} {030301} (\bibinfo {year} {2000})}\BibitemShut {NoStop}%
\bibitem [{\citenamefont {Veitch}\ \emph {et~al.}(2014)\citenamefont {Veitch},
  \citenamefont {Hamed~Mousavian}, \citenamefont {Gottesman},\ and\
  \citenamefont {Emerson}}]{veitchResourceTheoryStabilizer2014}%
  \BibitemOpen
  \bibfield  {author} {\bibinfo {author} {\bibfnamefont {V.}~\bibnamefont
  {Veitch}}, \bibinfo {author} {\bibfnamefont {S.~A.}\ \bibnamefont
  {Hamed~Mousavian}}, \bibinfo {author} {\bibfnamefont {D.}~\bibnamefont
  {Gottesman}},\ and\ \bibinfo {author} {\bibfnamefont {J.}~\bibnamefont
  {Emerson}},\ }\href {https://doi.org/10.1088/1367-2630/16/1/013009}
  {\bibfield  {journal} {\bibinfo  {journal} {New Journal of Physics}\ }\textbf
  {\bibinfo {volume} {16}},\ \bibinfo {pages} {013009} (\bibinfo {year}
  {2014})}\BibitemShut {NoStop}%
\bibitem [{\citenamefont {Leone}\ \emph {et~al.}(2022)\citenamefont {Leone},
  \citenamefont {Oliviero},\ and\ \citenamefont
  {Hamma}}]{leoneStabilizerRenyiEntropy2022}%
  \BibitemOpen
  \bibfield  {author} {\bibinfo {author} {\bibfnamefont {L.}~\bibnamefont
  {Leone}}, \bibinfo {author} {\bibfnamefont {S.~F.~E.}\ \bibnamefont
  {Oliviero}},\ and\ \bibinfo {author} {\bibfnamefont {A.}~\bibnamefont
  {Hamma}},\ }\href {https://doi.org/10.1103/PhysRevLett.128.050402} {\bibfield
   {journal} {\bibinfo  {journal} {Physical Review Letters}\ }\textbf {\bibinfo
  {volume} {128}},\ \bibinfo {pages} {050402} (\bibinfo {year}
  {2022})}\BibitemShut {NoStop}%
\bibitem [{\citenamefont {Bravyi}\ and\ \citenamefont
  {Kitaev}(2005)}]{bravyiUniversalQuantumComputation2005}%
  \BibitemOpen
  \bibfield  {author} {\bibinfo {author} {\bibfnamefont {S.}~\bibnamefont
  {Bravyi}}\ and\ \bibinfo {author} {\bibfnamefont {A.}~\bibnamefont
  {Kitaev}},\ }\href {https://doi.org/10.1103/PhysRevA.71.022316} {\bibfield
  {journal} {\bibinfo  {journal} {Physical Review A}\ }\textbf {\bibinfo
  {volume} {71}},\ \bibinfo {pages} {022316} (\bibinfo {year}
  {2005})}\BibitemShut {NoStop}%
\bibitem [{\citenamefont {Harrow}\ and\ \citenamefont
  {Montanaro}(2017)}]{harrowQuantumComputationalSupremacy2017}%
  \BibitemOpen
  \bibfield  {author} {\bibinfo {author} {\bibfnamefont {A.~W.}\ \bibnamefont
  {Harrow}}\ and\ \bibinfo {author} {\bibfnamefont {A.}~\bibnamefont
  {Montanaro}},\ }\href {https://doi.org/10.1038/nature23458} {\bibfield
  {journal} {\bibinfo  {journal} {Nature}\ }\textbf {\bibinfo {volume} {549}},\
  \bibinfo {pages} {203} (\bibinfo {year} {2017})}\BibitemShut {NoStop}%
\bibitem [{\citenamefont {Oliviero}\ \emph {et~al.}(2022)\citenamefont
  {Oliviero}, \citenamefont {Leone},\ and\ \citenamefont
  {Hamma}}]{olivieroMagicstateResourceTheory2022}%
  \BibitemOpen
  \bibfield  {author} {\bibinfo {author} {\bibfnamefont {S.~F.~E.}\
  \bibnamefont {Oliviero}}, \bibinfo {author} {\bibfnamefont {L.}~\bibnamefont
  {Leone}},\ and\ \bibinfo {author} {\bibfnamefont {A.}~\bibnamefont {Hamma}},\
  }\href {https://doi.org/10.1103/PhysRevA.106.042426} {\bibfield  {journal}
  {\bibinfo  {journal} {Physical Review A}\ }\textbf {\bibinfo {volume}
  {106}},\ \bibinfo {pages} {042426} (\bibinfo {year} {2022})}\BibitemShut
  {NoStop}%
\bibitem [{\citenamefont {Odavi{\'c}}\ \emph {et~al.}(2025)\citenamefont
  {Odavi{\'c}}, \citenamefont {Viscardi},\ and\ \citenamefont
  {Hamma}}]{odavicStabilizerEntropyNonintegrable2025}%
  \BibitemOpen
  \bibfield  {author} {\bibinfo {author} {\bibfnamefont {J.}~\bibnamefont
  {Odavi{\'c}}}, \bibinfo {author} {\bibfnamefont {M.}~\bibnamefont
  {Viscardi}},\ and\ \bibinfo {author} {\bibfnamefont {A.}~\bibnamefont
  {Hamma}},\ }\href {https://doi.org/10.48550/arXiv.2412.10228} {\bibinfo
  {title} {Stabilizer entropy in non-integrable quantum evolutions}} (\bibinfo
  {year} {2025}),\ \Eprint {https://arxiv.org/abs/2412.10228} {arXiv:2412.10228
  [quant-ph]} \BibitemShut {NoStop}%
\bibitem [{\citenamefont {Gu}\ \emph {et~al.}(2024)\citenamefont {Gu},
  \citenamefont {Oliviero},\ and\ \citenamefont
  {Leone}}]{guMagicinducedComputationalSeparation2024}%
  \BibitemOpen
  \bibfield  {author} {\bibinfo {author} {\bibfnamefont {A.}~\bibnamefont
  {Gu}}, \bibinfo {author} {\bibfnamefont {S.~F.~E.}\ \bibnamefont
  {Oliviero}},\ and\ \bibinfo {author} {\bibfnamefont {L.}~\bibnamefont
  {Leone}},\ }\href {https://doi.org/10.48550/arXiv.2403.19610} {\bibinfo
  {title} {Magic-induced computational separation in entanglement theory}}
  (\bibinfo {year} {2024}),\ \Eprint {https://arxiv.org/abs/2403.19610}
  {arXiv:2403.19610 [quant-ph]} \BibitemShut {NoStop}%
\bibitem [{\citenamefont {Frau}\ \emph {et~al.}(2024)\citenamefont {Frau},
  \citenamefont {Tarabunga}, \citenamefont {Collura}, \citenamefont
  {Dalmonte},\ and\ \citenamefont
  {Tirrito}}]{frauNonstabilizernessEntanglementMatrix2024}%
  \BibitemOpen
  \bibfield  {author} {\bibinfo {author} {\bibfnamefont {M.}~\bibnamefont
  {Frau}}, \bibinfo {author} {\bibfnamefont {P.~S.}\ \bibnamefont {Tarabunga}},
  \bibinfo {author} {\bibfnamefont {M.}~\bibnamefont {Collura}}, \bibinfo
  {author} {\bibfnamefont {M.}~\bibnamefont {Dalmonte}},\ and\ \bibinfo
  {author} {\bibfnamefont {E.}~\bibnamefont {Tirrito}},\ }\href
  {https://doi.org/10.1103/PhysRevB.110.045101} {\bibfield  {journal} {\bibinfo
   {journal} {Physical Review B}\ }\textbf {\bibinfo {volume} {110}},\ \bibinfo
  {pages} {045101} (\bibinfo {year} {2024})}\BibitemShut {NoStop}%
\bibitem [{\citenamefont {Tirrito}\ \emph {et~al.}(2024)\citenamefont
  {Tirrito}, \citenamefont {Tarabunga}, \citenamefont {Lami}, \citenamefont
  {Chanda}, \citenamefont {Leone}, \citenamefont {Oliviero}, \citenamefont
  {Dalmonte}, \citenamefont {Collura},\ and\ \citenamefont
  {Hamma}}]{tirritoQuantifyingNonstabilizernessEntanglement2024}%
  \BibitemOpen
  \bibfield  {author} {\bibinfo {author} {\bibfnamefont {E.}~\bibnamefont
  {Tirrito}}, \bibinfo {author} {\bibfnamefont {P.~S.}\ \bibnamefont
  {Tarabunga}}, \bibinfo {author} {\bibfnamefont {G.}~\bibnamefont {Lami}},
  \bibinfo {author} {\bibfnamefont {T.}~\bibnamefont {Chanda}}, \bibinfo
  {author} {\bibfnamefont {L.}~\bibnamefont {Leone}}, \bibinfo {author}
  {\bibfnamefont {S.~F.~E.}\ \bibnamefont {Oliviero}}, \bibinfo {author}
  {\bibfnamefont {M.}~\bibnamefont {Dalmonte}}, \bibinfo {author}
  {\bibfnamefont {M.}~\bibnamefont {Collura}},\ and\ \bibinfo {author}
  {\bibfnamefont {A.}~\bibnamefont {Hamma}},\ }\href
  {https://doi.org/10.1103/PhysRevA.109.L040401} {\bibfield  {journal}
  {\bibinfo  {journal} {Physical Review A}\ }\textbf {\bibinfo {volume}
  {109}},\ \bibinfo {pages} {L040401} (\bibinfo {year} {2024})}\BibitemShut
  {NoStop}%
\bibitem [{\citenamefont {Qian}\ and\ \citenamefont
  {Wang}(2025)}]{qianQuantumNonLocalMagic2025}%
  \BibitemOpen
  \bibfield  {author} {\bibinfo {author} {\bibfnamefont {D.}~\bibnamefont
  {Qian}}\ and\ \bibinfo {author} {\bibfnamefont {J.}~\bibnamefont {Wang}},\
  }\href {https://doi.org/10.48550/arXiv.2502.06393} {\bibinfo {title} {Quantum
  {{Non-Local Magic}}}} (\bibinfo {year} {2025}),\ \Eprint
  {https://arxiv.org/abs/2502.06393} {arXiv:2502.06393 [quant-ph]} \BibitemShut
  {NoStop}%
\bibitem [{\citenamefont
  {Zanardi}(2001)}]{zanardiEntanglementQuantumEvolutions2001}%
  \BibitemOpen
  \bibfield  {author} {\bibinfo {author} {\bibfnamefont {P.}~\bibnamefont
  {Zanardi}},\ }\href {https://doi.org/10.1103/PhysRevA.63.040304} {\bibfield
  {journal} {\bibinfo  {journal} {Physical Review A}\ }\textbf {\bibinfo
  {volume} {63}},\ \bibinfo {pages} {040304} (\bibinfo {year}
  {2001})}\BibitemShut {NoStop}%
\bibitem [{\citenamefont {Wang}\ and\ \citenamefont
  {Zanardi}(2002)}]{wangQuantumEntanglementUnitary2002}%
  \BibitemOpen
  \bibfield  {author} {\bibinfo {author} {\bibfnamefont {X.}~\bibnamefont
  {Wang}}\ and\ \bibinfo {author} {\bibfnamefont {P.}~\bibnamefont {Zanardi}},\
  }\href {https://doi.org/10.1103/PhysRevA.66.044303} {\bibfield  {journal}
  {\bibinfo  {journal} {Physical Review A}\ }\textbf {\bibinfo {volume} {66}},\
  \bibinfo {pages} {044303} (\bibinfo {year} {2002})}\BibitemShut {NoStop}%
\bibitem [{\citenamefont {Styliaris}\ \emph {et~al.}(2021)\citenamefont
  {Styliaris}, \citenamefont {Anand},\ and\ \citenamefont
  {Zanardi}}]{styliarisInformationScramblingBipartitions2021}%
  \BibitemOpen
  \bibfield  {author} {\bibinfo {author} {\bibfnamefont {G.}~\bibnamefont
  {Styliaris}}, \bibinfo {author} {\bibfnamefont {N.}~\bibnamefont {Anand}},\
  and\ \bibinfo {author} {\bibfnamefont {P.}~\bibnamefont {Zanardi}},\ }\href
  {https://doi.org/10.1103/PhysRevLett.126.030601} {\bibfield  {journal}
  {\bibinfo  {journal} {Physical Review Letters}\ }\textbf {\bibinfo {volume}
  {126}},\ \bibinfo {pages} {030601} (\bibinfo {year} {2021})}\BibitemShut
  {NoStop}%
\bibitem [{\citenamefont {Prosen}\ and\ \citenamefont {{\v Z}nidari{\v
  c}}(2007)}]{prosenEfficiencyClassicalSimulations2007}%
  \BibitemOpen
  \bibfield  {author} {\bibinfo {author} {\bibfnamefont {T.}~\bibnamefont
  {Prosen}}\ and\ \bibinfo {author} {\bibfnamefont {M.}~\bibnamefont {{\v
  Z}nidari{\v c}}},\ }\href {https://doi.org/10.1103/PhysRevE.75.015202}
  {\bibfield  {journal} {\bibinfo  {journal} {Physical Review E}\ }\textbf
  {\bibinfo {volume} {75}},\ \bibinfo {pages} {015202} (\bibinfo {year}
  {2007})}\BibitemShut {NoStop}%
\bibitem [{\citenamefont {Muth}\ \emph {et~al.}(2011)\citenamefont {Muth},
  \citenamefont {Unanyan},\ and\ \citenamefont
  {Fleischhauer}}]{muthDynamicalSimulationIntegrable2011}%
  \BibitemOpen
  \bibfield  {author} {\bibinfo {author} {\bibfnamefont {D.}~\bibnamefont
  {Muth}}, \bibinfo {author} {\bibfnamefont {R.~G.}\ \bibnamefont {Unanyan}},\
  and\ \bibinfo {author} {\bibfnamefont {M.}~\bibnamefont {Fleischhauer}},\
  }\href {https://doi.org/10.1103/PhysRevLett.106.077202} {\bibfield  {journal}
  {\bibinfo  {journal} {Physical Review Letters}\ }\textbf {\bibinfo {volume}
  {106}},\ \bibinfo {pages} {077202} (\bibinfo {year} {2011})}\BibitemShut
  {NoStop}%
\bibitem [{\citenamefont {Prosen}\ and\ \citenamefont {Pi{\v
  z}orn}(2007)}]{prosenOperatorSpaceEntanglement2007}%
  \BibitemOpen
  \bibfield  {author} {\bibinfo {author} {\bibfnamefont {T.}~\bibnamefont
  {Prosen}}\ and\ \bibinfo {author} {\bibfnamefont {I.}~\bibnamefont {Pi{\v
  z}orn}},\ }\href {https://doi.org/10.1103/PhysRevA.76.032316} {\bibfield
  {journal} {\bibinfo  {journal} {Physical Review A}\ }\textbf {\bibinfo
  {volume} {76}},\ \bibinfo {pages} {032316} (\bibinfo {year}
  {2007})}\BibitemShut {NoStop}%
\bibitem [{\citenamefont {Dowling}\ \emph {et~al.}(2024)\citenamefont
  {Dowling}, \citenamefont {Kos},\ and\ \citenamefont
  {Turkeshi}}]{dowlingMagicHeisenbergPicture2024}%
  \BibitemOpen
  \bibfield  {author} {\bibinfo {author} {\bibfnamefont {N.}~\bibnamefont
  {Dowling}}, \bibinfo {author} {\bibfnamefont {P.}~\bibnamefont {Kos}},\ and\
  \bibinfo {author} {\bibfnamefont {X.}~\bibnamefont {Turkeshi}},\ }\href
  {https://doi.org/10.48550/arXiv.2408.16047} {\bibinfo {title} {Magic of the
  {{Heisenberg Picture}}}} (\bibinfo {year} {2024}),\ \Eprint
  {https://arxiv.org/abs/2408.16047} {arXiv:2408.16047 [quant-ph]} \BibitemShut
  {NoStop}%
\bibitem [{\citenamefont {Dowling}\ \emph {et~al.}(2025)\citenamefont
  {Dowling}, \citenamefont {Modi},\ and\ \citenamefont
  {White}}]{dowlingBridgingEntanglementMagic2025}%
  \BibitemOpen
  \bibfield  {author} {\bibinfo {author} {\bibfnamefont {N.}~\bibnamefont
  {Dowling}}, \bibinfo {author} {\bibfnamefont {K.}~\bibnamefont {Modi}},\ and\
  \bibinfo {author} {\bibfnamefont {G.~A.~L.}\ \bibnamefont {White}},\ }\href
  {https://doi.org/10.48550/arXiv.2501.18679} {\bibinfo {title} {Bridging
  {{Entanglement}} and {{Magic Resources}} through {{Operator Space}}}}
  (\bibinfo {year} {2025}),\ \Eprint {https://arxiv.org/abs/2501.18679}
  {arXiv:2501.18679 [quant-ph]} \BibitemShut {NoStop}%
\bibitem [{\citenamefont {Andreadakis}\ \emph {et~al.}(2024)\citenamefont
  {Andreadakis}, \citenamefont {Dallas},\ and\ \citenamefont
  {Zanardi}}]{andreadakisOperatorSpaceEntangling2024}%
  \BibitemOpen
  \bibfield  {author} {\bibinfo {author} {\bibfnamefont {F.}~\bibnamefont
  {Andreadakis}}, \bibinfo {author} {\bibfnamefont {E.}~\bibnamefont
  {Dallas}},\ and\ \bibinfo {author} {\bibfnamefont {P.}~\bibnamefont
  {Zanardi}},\ }\href {https://doi.org/10.1103/PhysRevA.110.052416} {\bibfield
  {journal} {\bibinfo  {journal} {Physical Review A}\ }\textbf {\bibinfo
  {volume} {110}},\ \bibinfo {pages} {052416} (\bibinfo {year}
  {2024})}\BibitemShut {NoStop}%
\bibitem [{\citenamefont {Cao}\ \emph {et~al.}(2024)\citenamefont {Cao},
  \citenamefont {Cheng}, \citenamefont {Hamma}, \citenamefont {Leone},
  \citenamefont {Munizzi},\ and\ \citenamefont
  {Oliviero}}]{caoGravitationalBackreactionMagical2024}%
  \BibitemOpen
  \bibfield  {author} {\bibinfo {author} {\bibfnamefont {C.}~\bibnamefont
  {Cao}}, \bibinfo {author} {\bibfnamefont {G.}~\bibnamefont {Cheng}}, \bibinfo
  {author} {\bibfnamefont {A.}~\bibnamefont {Hamma}}, \bibinfo {author}
  {\bibfnamefont {L.}~\bibnamefont {Leone}}, \bibinfo {author} {\bibfnamefont
  {W.}~\bibnamefont {Munizzi}},\ and\ \bibinfo {author} {\bibfnamefont
  {S.~F.~E.}\ \bibnamefont {Oliviero}},\ }\href
  {https://doi.org/10.48550/arXiv.2403.07056} {\bibinfo {title} {Gravitational
  back-reaction is magical}} (\bibinfo {year} {2024}),\ \Eprint
  {https://arxiv.org/abs/2403.07056} {arXiv:2403.07056 [hep-th]} \BibitemShut
  {NoStop}%
\bibitem [{\citenamefont {Cao}(2024)}]{caoNontrivialAreaOperators2024}%
  \BibitemOpen
  \bibfield  {author} {\bibinfo {author} {\bibfnamefont {C.}~\bibnamefont
  {Cao}},\ }\href {https://doi.org/10.1007/JHEP11(2024)105} {\bibfield
  {journal} {\bibinfo  {journal} {Journal of High Energy Physics}\ }\textbf
  {\bibinfo {volume} {2024}},\ \bibinfo {pages} {1} (\bibinfo {year}
  {2024})}\BibitemShut {NoStop}%
\bibitem [{\citenamefont {Odavi{\'c}}\ \emph {et~al.}(2023)\citenamefont
  {Odavi{\'c}}, \citenamefont {Haug}, \citenamefont {Torre}, \citenamefont
  {Hamma}, \citenamefont {Franchini},\ and\ \citenamefont
  {Giampaolo}}]{odavicComplexityFrustrationNew2023}%
  \BibitemOpen
  \bibfield  {author} {\bibinfo {author} {\bibfnamefont {J.}~\bibnamefont
  {Odavi{\'c}}}, \bibinfo {author} {\bibfnamefont {T.}~\bibnamefont {Haug}},
  \bibinfo {author} {\bibfnamefont {G.}~\bibnamefont {Torre}}, \bibinfo
  {author} {\bibfnamefont {A.}~\bibnamefont {Hamma}}, \bibinfo {author}
  {\bibfnamefont {F.}~\bibnamefont {Franchini}},\ and\ \bibinfo {author}
  {\bibfnamefont {S.~M.}\ \bibnamefont {Giampaolo}},\ }\href
  {https://doi.org/10.21468/SciPostPhys.15.4.131} {\bibfield  {journal}
  {\bibinfo  {journal} {SciPost Physics}\ }\textbf {\bibinfo {volume} {15}},\
  \bibinfo {pages} {131} (\bibinfo {year} {2023})}\BibitemShut {NoStop}%
\bibitem [{\citenamefont {Catalano}\ \emph {et~al.}(2024)\citenamefont
  {Catalano}, \citenamefont {Odavi{\'c}}, \citenamefont {Torre}, \citenamefont
  {Hamma}, \citenamefont {Franchini},\ and\ \citenamefont
  {Giampaolo}}]{catalanoMagicPhaseTransition2024}%
  \BibitemOpen
  \bibfield  {author} {\bibinfo {author} {\bibfnamefont {A.~G.}\ \bibnamefont
  {Catalano}}, \bibinfo {author} {\bibfnamefont {J.}~\bibnamefont
  {Odavi{\'c}}}, \bibinfo {author} {\bibfnamefont {G.}~\bibnamefont {Torre}},
  \bibinfo {author} {\bibfnamefont {A.}~\bibnamefont {Hamma}}, \bibinfo
  {author} {\bibfnamefont {F.}~\bibnamefont {Franchini}},\ and\ \bibinfo
  {author} {\bibfnamefont {S.~M.}\ \bibnamefont {Giampaolo}},\ }\href
  {https://doi.org/10.48550/arXiv.2406.19457} {\bibinfo {title} {Magic phase
  transition and non-local complexity in generalized \${{W}}\$ {{State}}}}
  (\bibinfo {year} {2024}),\ \Eprint {https://arxiv.org/abs/2406.19457}
  {arXiv:2406.19457 [quant-ph]} \BibitemShut {NoStop}%
\bibitem [{Note1()}]{Note1}%
  \BibitemOpen
  \bibinfo {note} {Notably, Ref. \cite
  {caoGravitationalBackreactionMagical2024} provides a relation between
  nonlocal nonstabilizerness and the spectrum anti-flatness of quantum states,
  which is a quantum property that depends on both the nonstabilizerness and
  entanglement.}\BibitemShut {Stop}%
\bibitem [{sup()}]{supp}%
  \BibitemOpen
  \href@noop {} {}\bibinfo {note} {See Supplemental Material for the analytical
  proofs of the technical results, which includes Refs. [57-63].}\BibitemShut
  {Stop}%
\bibitem [{Note2()}]{Note2}%
  \BibitemOpen
  \bibinfo {note} {If $G$ is discrete, then this is simply the uniform average
  over all elements of $G$.}\BibitemShut {Stop}%
\bibitem [{\citenamefont {Zhu}\ \emph {et~al.}(2016)\citenamefont {Zhu},
  \citenamefont {Kueng}, \citenamefont {Grassl},\ and\ \citenamefont
  {Gross}}]{zhuCliffordGroupFails2016}%
  \BibitemOpen
  \bibfield  {author} {\bibinfo {author} {\bibfnamefont {H.}~\bibnamefont
  {Zhu}}, \bibinfo {author} {\bibfnamefont {R.}~\bibnamefont {Kueng}}, \bibinfo
  {author} {\bibfnamefont {M.}~\bibnamefont {Grassl}},\ and\ \bibinfo {author}
  {\bibfnamefont {D.}~\bibnamefont {Gross}},\ }\href
  {https://doi.org/10.48550/arXiv.1609.08172} {\bibinfo {title} {The
  {{Clifford}} group fails gracefully to be a unitary 4-design}} (\bibinfo
  {year} {2016}),\ \Eprint {https://arxiv.org/abs/1609.08172} {arXiv:1609.08172
  [quant-ph]} \BibitemShut {NoStop}%
\bibitem [{\citenamefont {Haug}\ and\ \citenamefont
  {Piroli}(2023)}]{haugQuantifyingNonstabilizernessMatrix2023}%
  \BibitemOpen
  \bibfield  {author} {\bibinfo {author} {\bibfnamefont {T.}~\bibnamefont
  {Haug}}\ and\ \bibinfo {author} {\bibfnamefont {L.}~\bibnamefont {Piroli}},\
  }\href {https://doi.org/10.1103/PhysRevB.107.035148} {\bibfield  {journal}
  {\bibinfo  {journal} {Physical Review B}\ }\textbf {\bibinfo {volume}
  {107}},\ \bibinfo {pages} {035148} (\bibinfo {year} {2023})}\BibitemShut
  {NoStop}%
\bibitem [{\citenamefont {Ignacio~Cirac}\ \emph {et~al.}(2017)\citenamefont
  {Ignacio~Cirac}, \citenamefont {{Perez-Garcia}}, \citenamefont {Schuch},\
  and\ \citenamefont {Verstraete}}]{ignaciociracMatrixProductUnitaries2017}%
  \BibitemOpen
  \bibfield  {author} {\bibinfo {author} {\bibfnamefont {J.}~\bibnamefont
  {Ignacio~Cirac}}, \bibinfo {author} {\bibfnamefont {D.}~\bibnamefont
  {{Perez-Garcia}}}, \bibinfo {author} {\bibfnamefont {N.}~\bibnamefont
  {Schuch}},\ and\ \bibinfo {author} {\bibfnamefont {F.}~\bibnamefont
  {Verstraete}},\ }\href {https://doi.org/10.1088/1742-5468/aa7e55} {\bibfield
  {journal} {\bibinfo  {journal} {Journal of Statistical Mechanics: Theory and
  Experiment}\ }\textbf {\bibinfo {volume} {2017}},\ \bibinfo {pages} {083105}
  (\bibinfo {year} {2017})}\BibitemShut {NoStop}%
\bibitem [{\citenamefont {Styliaris}\ \emph {et~al.}(2025)\citenamefont
  {Styliaris}, \citenamefont {Trivedi}, \citenamefont {{Perez-Garcia}},\ and\
  \citenamefont {Cirac}}]{styliarisMatrixproductUnitariesQuantum2025}%
  \BibitemOpen
  \bibfield  {author} {\bibinfo {author} {\bibfnamefont {G.}~\bibnamefont
  {Styliaris}}, \bibinfo {author} {\bibfnamefont {R.}~\bibnamefont {Trivedi}},
  \bibinfo {author} {\bibfnamefont {D.}~\bibnamefont {{Perez-Garcia}}},\ and\
  \bibinfo {author} {\bibfnamefont {J.~I.}\ \bibnamefont {Cirac}},\ }\href
  {https://doi.org/10.22331/q-2025-02-25-1645} {\bibfield  {journal} {\bibinfo
  {journal} {Quantum}\ }\textbf {\bibinfo {volume} {9}},\ \bibinfo {pages}
  {1645} (\bibinfo {year} {2025})}\BibitemShut {NoStop}%
\bibitem [{\citenamefont {Farrelly}(2020)}]{farrellyReviewQuantumCellular2020}%
  \BibitemOpen
  \bibfield  {author} {\bibinfo {author} {\bibfnamefont {T.}~\bibnamefont
  {Farrelly}},\ }\href {https://doi.org/10.22331/q-2020-11-30-368} {\bibfield
  {journal} {\bibinfo  {journal} {Quantum}\ }\textbf {\bibinfo {volume} {4}},\
  \bibinfo {pages} {368} (\bibinfo {year} {2020})}\BibitemShut {NoStop}%
\bibitem [{\citenamefont {Anand}\ \emph {et~al.}(2021)\citenamefont {Anand},
  \citenamefont {Styliaris}, \citenamefont {Kumari},\ and\ \citenamefont
  {Zanardi}}]{anandQuantumCoherenceSignature2021}%
  \BibitemOpen
  \bibfield  {author} {\bibinfo {author} {\bibfnamefont {N.}~\bibnamefont
  {Anand}}, \bibinfo {author} {\bibfnamefont {G.}~\bibnamefont {Styliaris}},
  \bibinfo {author} {\bibfnamefont {M.}~\bibnamefont {Kumari}},\ and\ \bibinfo
  {author} {\bibfnamefont {P.}~\bibnamefont {Zanardi}},\ }\href
  {https://doi.org/10.1103/PhysRevResearch.3.023214} {\bibfield  {journal}
  {\bibinfo  {journal} {Physical Review Research}\ }\textbf {\bibinfo {volume}
  {3}},\ \bibinfo {pages} {023214} (\bibinfo {year} {2021})}\BibitemShut
  {NoStop}%
\bibitem [{\citenamefont {Meckes}(2019)}]{meckesRandomMatrixTheory2019}%
  \BibitemOpen
  \bibfield  {author} {\bibinfo {author} {\bibfnamefont {E.~S.}\ \bibnamefont
  {Meckes}},\ }\href {https://doi.org/10.1017/9781108303453} {\emph {\bibinfo
  {title} {The {{Random Matrix Theory}} of the {{Classical Compact
  Groups}}}}},\ Cambridge {{Tracts}} in {{Mathematics}}\ (\bibinfo  {publisher}
  {Cambridge University Press},\ \bibinfo {address} {Cambridge},\ \bibinfo
  {year} {2019})\BibitemShut {NoStop}%
\bibitem [{\citenamefont {{Kudler-Flam}}\ \emph {et~al.}(2021)\citenamefont
  {{Kudler-Flam}}, \citenamefont {Nozaki}, \citenamefont {Ryu},\ and\
  \citenamefont {Tan}}]{kudler-flamEntanglementLocalOperators2021}%
  \BibitemOpen
  \bibfield  {author} {\bibinfo {author} {\bibfnamefont {J.}~\bibnamefont
  {{Kudler-Flam}}}, \bibinfo {author} {\bibfnamefont {M.}~\bibnamefont
  {Nozaki}}, \bibinfo {author} {\bibfnamefont {S.}~\bibnamefont {Ryu}},\ and\
  \bibinfo {author} {\bibfnamefont {M.~T.}\ \bibnamefont {Tan}},\ }\href
  {https://doi.org/10.1103/PhysRevResearch.3.033182} {\bibfield  {journal}
  {\bibinfo  {journal} {Physical Review Research}\ }\textbf {\bibinfo {volume}
  {3}},\ \bibinfo {pages} {033182} (\bibinfo {year} {2021})}\BibitemShut
  {NoStop}%
\bibitem [{\citenamefont
  {Swingle}(2018)}]{swingleUnscramblingPhysicsOutoftimeorder2018}%
  \BibitemOpen
  \bibfield  {author} {\bibinfo {author} {\bibfnamefont {B.}~\bibnamefont
  {Swingle}},\ }\href {https://doi.org/10.1038/s41567-018-0295-5} {\bibfield
  {journal} {\bibinfo  {journal} {Nature Physics}\ }\textbf {\bibinfo {volume}
  {14}},\ \bibinfo {pages} {988} (\bibinfo {year} {2018})}\BibitemShut
  {NoStop}%
\bibitem [{\citenamefont {Dowling}\ \emph {et~al.}(2023)\citenamefont
  {Dowling}, \citenamefont {Kos},\ and\ \citenamefont
  {Modi}}]{dowlingScramblingNecessaryNot2023}%
  \BibitemOpen
  \bibfield  {author} {\bibinfo {author} {\bibfnamefont {N.}~\bibnamefont
  {Dowling}}, \bibinfo {author} {\bibfnamefont {P.}~\bibnamefont {Kos}},\ and\
  \bibinfo {author} {\bibfnamefont {K.}~\bibnamefont {Modi}},\ }\href
  {https://doi.org/10.1103/PhysRevLett.131.180403} {\bibfield  {journal}
  {\bibinfo  {journal} {Physical Review Letters}\ }\textbf {\bibinfo {volume}
  {131}},\ \bibinfo {pages} {180403} (\bibinfo {year} {2023})}\BibitemShut
  {NoStop}%
\bibitem [{\citenamefont {Guhr}\ \emph {et~al.}(1998)\citenamefont {Guhr},
  \citenamefont {{M{\"u}ller--Groeling}},\ and\ \citenamefont
  {Weidenm{\"u}ller}}]{guhrRandomMatrixTheoriesQuantum1998}%
  \BibitemOpen
  \bibfield  {author} {\bibinfo {author} {\bibfnamefont {T.}~\bibnamefont
  {Guhr}}, \bibinfo {author} {\bibfnamefont {A.}~\bibnamefont
  {{M{\"u}ller--Groeling}}},\ and\ \bibinfo {author} {\bibfnamefont {H.~A.}\
  \bibnamefont {Weidenm{\"u}ller}},\ }\href
  {https://doi.org/10.1016/S0370-1573(97)00088-4} {\bibfield  {journal}
  {\bibinfo  {journal} {Physics Reports}\ }\textbf {\bibinfo {volume} {299}},\
  \bibinfo {pages} {189} (\bibinfo {year} {1998})}\BibitemShut {NoStop}%
\bibitem [{\citenamefont {Andreadakis}(2025)}]{andreadakisGithubNlmagic}%
  \BibitemOpen
  \bibfield  {author} {\bibinfo {author} {\bibfnamefont {F.}~\bibnamefont
  {Andreadakis}},\ }\href {https://doi.org/10.5281/zenodo.15285118} {\bibinfo
  {title} {{{GitHub}} repository /{{Perfii}}/nlmagic}} (\bibinfo {year}
  {2025})\BibitemShut {NoStop}%
\bibitem [{Note3()}]{Note3}%
  \BibitemOpen
  \bibinfo {note} {The code used for the simulations is publicly available in
  Ref. \cite {andreadakisGithubNlmagic}.}\BibitemShut {Stop}%
\bibitem [{\citenamefont {Lieb}\ \emph {et~al.}(1961)\citenamefont {Lieb},
  \citenamefont {Schultz},\ and\ \citenamefont
  {Mattis}}]{liebTwoSolubleModels1961}%
  \BibitemOpen
  \bibfield  {author} {\bibinfo {author} {\bibfnamefont {E.}~\bibnamefont
  {Lieb}}, \bibinfo {author} {\bibfnamefont {T.}~\bibnamefont {Schultz}},\ and\
  \bibinfo {author} {\bibfnamefont {D.}~\bibnamefont {Mattis}},\ }\href
  {https://doi.org/10.1016/0003-4916(61)90115-4} {\bibfield  {journal}
  {\bibinfo  {journal} {Annals of Physics}\ }\textbf {\bibinfo {volume} {16}},\
  \bibinfo {pages} {407} (\bibinfo {year} {1961})}\BibitemShut {NoStop}%
\bibitem [{\citenamefont {Yamaguchi}\ \emph {et~al.}(2024)\citenamefont
  {Yamaguchi}, \citenamefont {Chiba},\ and\ \citenamefont
  {Shiraishi}}]{yamaguchiCompleteClassificationIntegrability2024}%
  \BibitemOpen
  \bibfield  {author} {\bibinfo {author} {\bibfnamefont {M.}~\bibnamefont
  {Yamaguchi}}, \bibinfo {author} {\bibfnamefont {Y.}~\bibnamefont {Chiba}},\
  and\ \bibinfo {author} {\bibfnamefont {N.}~\bibnamefont {Shiraishi}},\ }\href
  {https://doi.org/10.48550/arXiv.2411.02162} {\bibinfo {title} {Complete
  {{Classification}} of {{Integrability}} and {{Non-integrability}} for
  {{Spin-1}}/2 {{Chain}} with {{Symmetric Nearest-Neighbor Interaction}}}}
  (\bibinfo {year} {2024}),\ \Eprint {https://arxiv.org/abs/2411.02162}
  {arXiv:2411.02162 [cond-mat]} \BibitemShut {NoStop}%
\bibitem [{\citenamefont {Read}(1977)}]{readCentreRepresentationGroup1977}%
  \BibitemOpen
  \bibfield  {author} {\bibinfo {author} {\bibfnamefont {E.~W.}\ \bibnamefont
  {Read}},\ }\href {https://doi.org/10.1112/jlms/s2-16.1.43} {\bibfield
  {journal} {\bibinfo  {journal} {Journal of the London Mathematical Society}\
  }\textbf {\bibinfo {volume} {s2-16}},\ \bibinfo {pages} {43} (\bibinfo {year}
  {1977})}\BibitemShut {NoStop}%
\bibitem [{\citenamefont {Cheng}(2015)}]{chengCharacterTheoryProjective2015}%
  \BibitemOpen
  \bibfield  {author} {\bibinfo {author} {\bibfnamefont {C.}~\bibnamefont
  {Cheng}},\ }\href {https://doi.org/10.1016/j.laa.2014.11.027} {\bibfield
  {journal} {\bibinfo  {journal} {Linear Algebra and its Applications}\
  }\textbf {\bibinfo {volume} {469}},\ \bibinfo {pages} {230} (\bibinfo {year}
  {2015})}\BibitemShut {NoStop}%
\bibitem [{198(1980)}]{1980357}%
  \BibitemOpen
  in\ \href {https://doi.org/10.1016/S0304-0208(09)70080-5} {\emph {\bibinfo
  {booktitle} {Group Representations}}},\ \bibinfo {series} {North-Holland
  Mathematics Studies}, Vol.\ \bibinfo {volume} {180},\ \bibinfo {editor}
  {edited by\ \bibinfo {editor} {\bibfnamefont {G.}~\bibnamefont
  {Karpilovsky}}}\ (\bibinfo  {publisher} {North-Holland},\ \bibinfo {year}
  {1980})\ pp.\ \bibinfo {pages} {357--389}\BibitemShut {NoStop}%
\bibitem [{\citenamefont {Collins}\ and\ \citenamefont
  {{\'S}niady}(2006)}]{collinsIntegrationRespectHaar2006}%
  \BibitemOpen
  \bibfield  {author} {\bibinfo {author} {\bibfnamefont {B.}~\bibnamefont
  {Collins}}\ and\ \bibinfo {author} {\bibfnamefont {P.}~\bibnamefont
  {{\'S}niady}},\ }\href {https://doi.org/10.1007/s00220-006-1554-3} {\bibfield
   {journal} {\bibinfo  {journal} {Communications in Mathematical Physics}\
  }\textbf {\bibinfo {volume} {264}},\ \bibinfo {pages} {773} (\bibinfo {year}
  {2006})}\BibitemShut {NoStop}%
\bibitem [{\citenamefont {Roberts}\ and\ \citenamefont
  {Yoshida}(2017)}]{robertsChaosComplexityDesign2017}%
  \BibitemOpen
  \bibfield  {author} {\bibinfo {author} {\bibfnamefont {D.~A.}\ \bibnamefont
  {Roberts}}\ and\ \bibinfo {author} {\bibfnamefont {B.}~\bibnamefont
  {Yoshida}},\ }\href {https://doi.org/10.1007/JHEP04(2017)121} {\bibfield
  {journal} {\bibinfo  {journal} {Journal of High Energy Physics}\ }\textbf
  {\bibinfo {volume} {2017}},\ \bibinfo {pages} {121} (\bibinfo {year}
  {2017})}\BibitemShut {NoStop}%
\bibitem [{\citenamefont {Goodman}\ and\ \citenamefont
  {Wallach}(2009)}]{goodmanSymmetryRepresentationsInvariants2009}%
  \BibitemOpen
  \bibfield  {author} {\bibinfo {author} {\bibfnamefont {R.}~\bibnamefont
  {Goodman}}\ and\ \bibinfo {author} {\bibfnamefont {N.~R.}\ \bibnamefont
  {Wallach}},\ }\href {https://doi.org/10.1007/978-0-387-79852-3} {\emph
  {\bibinfo {title} {Symmetry, {{Representations}}, and {{Invariants}}}}},\
  \bibinfo {series} {Graduate {{Texts}} in {{Mathematics}}}, Vol.\ \bibinfo
  {volume} {255}\ (\bibinfo  {publisher} {Springer New York},\ \bibinfo
  {address} {New York, NY},\ \bibinfo {year} {2009})\BibitemShut {NoStop}%
\bibitem [{\citenamefont
  {Christandl}(2006)}]{christandlStructureBipartiteQuantum2006}%
  \BibitemOpen
  \bibfield  {author} {\bibinfo {author} {\bibfnamefont {M.}~\bibnamefont
  {Christandl}},\ }\href {https://doi.org/10.48550/arXiv.quant-ph/0604183}
  {\bibinfo {title} {The {{Structure}} of {{Bipartite Quantum States}} -
  {{Insights}} from {{Group Theory}} and {{Cryptography}}}} (\bibinfo {year}
  {2006}),\ \Eprint {https://arxiv.org/abs/quant-ph/0604183}
  {arXiv:quant-ph/0604183} \BibitemShut {NoStop}%
\end{thebibliography}%
\onecolumngrid
\newpage
\appendix
\setcounter{secnumdepth}{2}
\crefalias{section}{appendix}  
\section{Quantum spin-chain models} \label{appendix}
We consider the following Hamiltonian models on a one-dimensional lattice of $N$ qubits with periodic boundary conditions:
\begin{enumerate}
  \item {Heisenberg XYZ model in an external magnetic field,
  \begin{equation}\label{xyz}
  H_{XYZ}=\sum_{i=1}^{N} \left( J_x \sigma_i^x \sigma_{i+1}^x + J_y \sigma_i^y \sigma_{i+1}^y + J_z \sigma_i^z \sigma_{i+1}^z + h \sigma_i^z \right),
  \end{equation}
  where $J_x, J_y, J_z$ are the coupling constants and $h$ is the strength of the external magnetic field.
  }
  \item {Transverse field Ising model (TFIM),
  \begin{equation}\label{tfim}
  H_{TFIM}=-\sum_{i=1}^{N} \left( J \sigma_i^z \sigma_{i+1}^z + h \sigma_i^z + g \sigma_i^x \right),
  \end{equation}
  where $J$ is the coupling constant and $h,g$ are the strengths of a longitudal and transverse magnetic field.
  }
\end{enumerate}
Using exact diagonalization we simulate the dynamics $U_t=\exp{-iHt}$ generated by these Hamiltonian models and numerically compute the long-time average, $\overline{X(t)}^t \coloneqq \lim_{T \rightarrow \infty} 1/T \int_0^T X(t) dt$, of both the Pauli-entangling power \cref{paul_ent} and the operator entanglement $E_{\text{lin}}(U_t)$ of the unitary evolution operator, for small system sizes ($N=8,11$)\, \footnote{The code used for the simulations is publicly available in Ref. \cite{andreadakisGithubNlmagic}.}. For the bipartition we choose the first $\lfloor N/2 \rfloor$ to be subsystem $A$ and the rest to be subsystem $B$. We note that $E_{\text{lin}}(U_t)$ is closely related to out-of-time-order correlation functions, providing an average measure of quantum scrambling, with its system size scaling being sensitive to the integrability properties of local Hamiltonian models \cite{styliarisInformationScramblingBipartitions2021}.
\par \cref{fig:models} compares the behavior of the long-time average of these two quantities for fixed values $J_x=0.75, J_y=0.25, h=0.5$ for the XYZ and $g=1$ for the TFIM model, while varying $J_z$ and $h$ respectively. Notice that for $J_z=0$ ($h=0$) the XYZ (TFIM) model can be mapped to free fermions \cite{liebTwoSolubleModels1961} and is nonintegrable otherwise \cite{yamaguchiCompleteClassificationIntegrability2024}. We observe that the long-time values of both $P_E(U_t)$ and $E_{\text{lin}}(U_t)$ behave similarly as we tune up the integrability breaking term for a given system size. This indicates that in various prototypical Hamiltonian models, the increased long-time quantum scrambling associated with nonintegrable evolutions is accompanied by an increased operator complexity of the Heisenberg evolved Pauli strings.
\par The parameters of the numerical simulations are as follows: we have set empirically $dt=0.2$ and we evolve the system until the standard error of the mean for the long-time average is below a certain threshold, $1.96\, \sigma/\sqrt{N_t}<2\cdot 10^{-2}$, where $\sigma$ is the sample standard deviation and $N_t$ is the number of timesteps. For $N=8$, we evaluate $P_E(U_t)$ and $E(U_t)$ exactly using \cref{opent,paul_ent}, while for $N=11$ we compute $P_E(U_t)$ by sampling over the Pauli strings until the standard error of the mean for each timestep is below the same threshold as before.
\begin{figure}[h]
  \centering
  \begin{subfigure}{0.58\textwidth}
    \centering
  \includegraphics[width=1\textwidth]{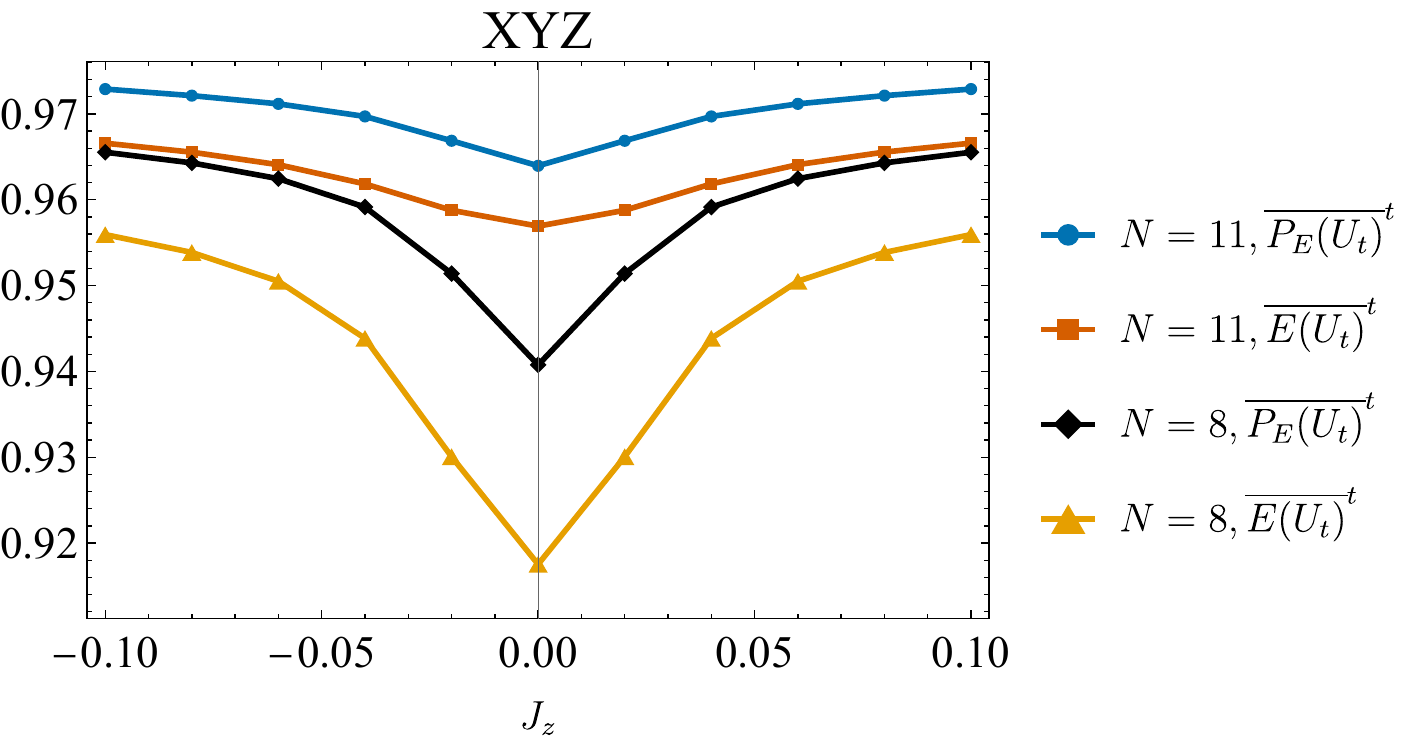}
  \caption{}
  \end{subfigure}%
  \begin{subfigure}{0.42\textwidth}
    \centering
  \includegraphics[width=1\textwidth]{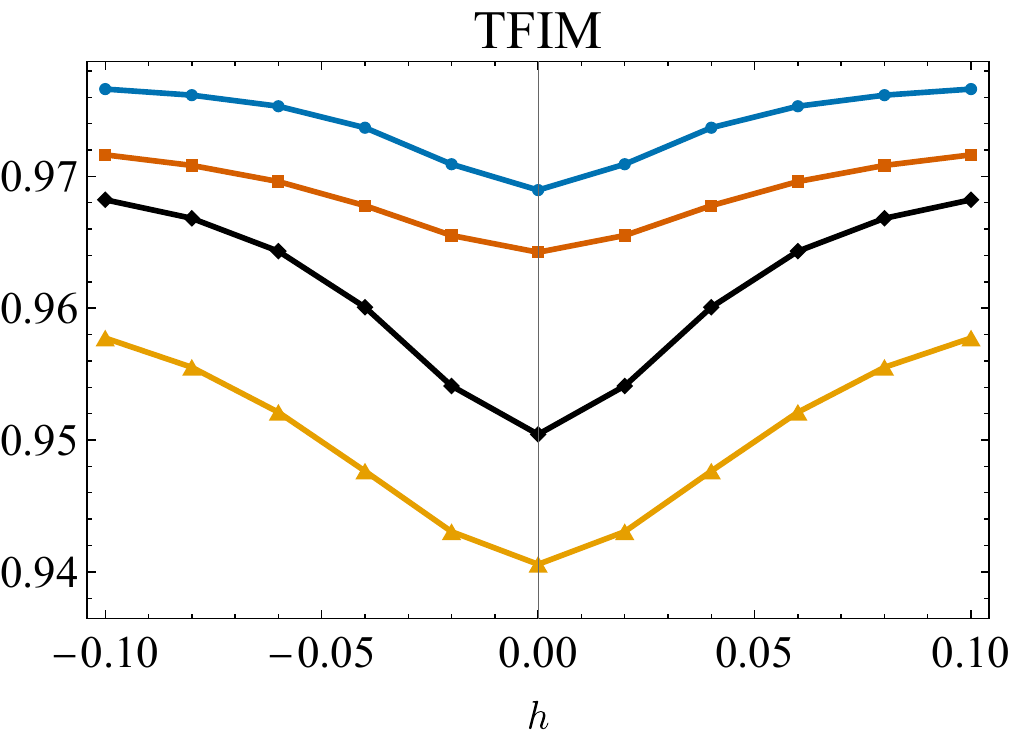}
  \caption{}
  \end{subfigure}%
  \caption{The long-time average value of $P_E(U_t)$ and $E(U_t)$ for $N=8,11$ qubits and dynamics given by (a) the Heisenberg XYZ model in an external magnetic field \cref{xyz} and (b) the transverse field Ising model \cref{tfim}. For $J_z, h \neq 0$, the models are nonintegrable and we observe that both $P_E(U_t)$ and $E(U_t)$ increase with the integrability breaking term.}
  \label{fig:models}
  \end{figure}

\nocite{readCentreRepresentationGroup1977,chengCharacterTheoryProjective2015,1980357,collinsIntegrationRespectHaar2006,robertsChaosComplexityDesign2017,goodmanSymmetryRepresentationsInvariants2009,
  christandlStructureBipartiteQuantum2006}
\end{document}


\title{Supplemental Material for \\ "An Exact Link between Nonlocal Nonstabilizerness and Operator Entanglement"}
\maketitle
\renewcommand{\theequation}{S\arabic{equation}}
\setcounter{equation}{0}
\setcounter{secnumdepth}{1}
\setcounter{lemma}{2}
\section{Proofs} \label{supplemental}
Here, we restate and provide the proofs and derivations of the results presented in the main text.
  \subsection{Theorem 1}
\begin{thm}\label{theorem}
Let $U \in \mathscr{U}_N$ and $\mathcal{H} \cong \mathcal{H}_A \otimes \mathcal{H}_B \cong (\mathbb{C}^2)^{N_A} \otimes (\mathbb{C}^2)^{\otimes N_B}$ be a bipartition of an $N$-qubit quantum system, then
\begin{equation*}
E(U^\dagger PU) = 0 \; \forall \, P \in \tilde{\mathscr{P}}_N \Leftrightarrow \exists \, V \in \mathscr{U}_{N_A}, \, W \in \mathscr{U}_{N_B}, \, C \in \mathscr{C}_N \text{ s.t. } U^\dagger=(V\otimes W) \, C
\end{equation*}
\end{thm}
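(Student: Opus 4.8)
\emph{Proof strategy.}\ The direction ``$\Leftarrow$'' is immediate: if $U^{\dagger}=(V\otimes W)C$ with $C\in\mathscr C_{N}$, then for any Pauli string $P$ one has $U^{\dagger}PU=(V\otimes W)\,(CPC^{\dagger})\,(V^{\dagger}\otimes W^{\dagger})$, and since $C$ is Clifford $CPC^{\dagger}=\pm\,Q_{A}\otimes Q_{B}$ is again a Pauli string, product across the cut, so $U^{\dagger}PU=\pm\,(VQ_{A}V^{\dagger})\otimes(WQ_{B}W^{\dagger})$ factorizes and $E(U^{\dagger}PU)=0$; in particular this holds for every $P\in\tilde{\mathscr P}_{N}$. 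For ``$\Rightarrow$'' my first move is to upgrade the hypothesis from nonlocal Paulis to \emph{all} Paulis. Assuming $N_{A},N_{B}\ge 1$ (if $N_{A}N_{B}=0$ there are no nonlocal Paulis and both sides hold trivially), every Pauli string is a product of at most two nonlocal ones: up to a phase, $P_{A}\otimes\mathbbm 1_{B}$ with $P_{A}\ne\mathbbm 1$ equals $(S_{A}\otimes R)(S_{A}P_{A}\otimes R)$ for a single-qubit Pauli $R$ on $B$ and any $S_{A}\notin\{\mathbbm 1,P_{A}\}$, and symmetrically for $\mathbbm 1_{A}\otimes P_{B}$. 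Since $E(O)=0$ means $O=A\otimes B$ and a product of two such operators factorizes again, the hypothesis gives $U^{\dagger}PU=A_{P}\otimes B_{P}$ for every Pauli string $P$, and, $U^{\dagger}PU$ being a Hermitian involution, I may take $A_{P},B_{P}$ to be Hermitian unitary involutions.

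The second and main step is symplectic linear algebra over $\mathbb F_{2}$. Write $\bar{\mathscr P}_{N}\cong\mathbb F_{2}^{2N}$ for the Pauli group modulo phases, with its nondegenerate commutation form $\omega$. The assignments $\bar P\mapsto A_{P}$ and $\bar P\mapsto B_{P}$ are projective unitary representations of $\bar{\mathscr P}_{N}$ on $\mathcal H_{A}$ and $\mathcal H_{B}$. From the fact that both sides of $U^{\dagger}(PQ)U=\pm\,U^{\dagger}(QP)U$ are product operators, $A_{P}A_{Q}$ and $A_{Q}A_{P}$ differ by a scalar, and a short spectral argument (using that $A_{P},A_{Q}$ are Hermitian involutions) forces that scalar to be $\pm1$; hence the anticommutation indicators $\mu_{A},\mu_{B}\colon\bar{\mathscr P}_{N}\times\bar{\mathscr P}_{N}\to\mathbb F_{2}$ are well-defined alternating bilinear forms, and distributing the commutation sign of $PQ$ over the two tensor factors gives $\mu_{A}+\mu_{B}=\omega$. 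Then: (i) by Stone--von Neumann for the finite Heisenberg group, a projective representation of $\mathbb F_{2}^{2k}$ with nondegenerate commutation form has dimension divisible by $2^{k}$, so restricting $\mu_{A}$ to a complement of its radical gives $\operatorname{rank}\mu_{A}\le 2N_{A}$, and likewise $\operatorname{rank}\mu_{B}\le 2N_{B}$; (ii) the $\{A_{Q}\}$ span $\mathcal B(\mathcal H_{A})$ (expand $X\otimes\mathbbm 1_{B}$ in the basis $\{U^{\dagger}QU\}$ and trace out $B$), so $\operatorname{rad}\mu_{A}$ is exactly $K_{A}:=\{\bar P:\,U^{\dagger}PU\in\mathbbm 1_{A}\otimes\mathcal B(\mathcal H_{B})\}$, and dually $\operatorname{rad}\mu_{B}=K_{B}:=\{\bar P:\,U^{\dagger}PU\in\mathcal B(\mathcal H_{A})\otimes\mathbbm 1_{B}\}$. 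Feeding $\operatorname{rank}\omega=2N$ and subadditivity of rank into $\mu_{A}+\mu_{B}=\omega$ forces $\operatorname{rank}\mu_{A}=2N_{A}$, $\operatorname{rank}\mu_{B}=2N_{B}$ and $\operatorname{rad}\mu_{A}\cap\operatorname{rad}\mu_{B}=0$, hence
\[
\bar{\mathscr P}_{N}=K_{A}\oplus K_{B},
\]
an $\omega$-orthogonal direct sum of symplectic subspaces of dimensions $2N_{B}$ and $2N_{A}$.

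The final step is to produce the Clifford. The splitting $\bar{\mathscr P}_{N}=K_{B}\oplus K_{A}$ has the same symplectic type as the standard one $\bar{\mathscr P}_{N}=\bar{\mathscr P}_{N_{A}}\oplus\bar{\mathscr P}_{N_{B}}$ into Paulis supported on $A$ resp.\ on $B$; by Witt's theorem $\mathrm{Sp}(2N,\mathbb F_{2})$ acts transitively on symplectic subspaces of a given dimension, and the Clifford group surjects onto $\mathrm{Sp}(2N,\mathbb F_{2})$, so there is $C\in\mathscr C_{N}$ whose induced symplectic map sends $K_{B}$ to $\bar{\mathscr P}_{N_{A}}$ (and hence $K_{A}$ to $\bar{\mathscr P}_{N_{B}}$), equivalently $\operatorname{Ad}_{C^{-1}}$ sends $\bar{\mathscr P}_{N_{A}}$ into $K_{B}$. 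Put $\tilde U:=U^{\dagger}C^{-1}$. For every Pauli $P$ supported on $A$, $C^{-1}PC$ is a signed Pauli lying in $K_{B}$, so $\tilde U P\tilde U^{\dagger}=U^{\dagger}(C^{-1}PC)U\in\mathcal B(\mathcal H_{A})\otimes\mathbbm 1_{B}$; since such $P$ span $\mathcal B(\mathcal H_{A})\otimes\mathbbm 1_{B}$, the automorphism $\operatorname{Ad}_{\tilde U}$ preserves this factor, so by Skolem--Noether it agrees there with $\operatorname{Ad}_{V\otimes\mathbbm 1}$ for some $V\in\mathscr U_{N_{A}}$; then $(V^{\dagger}\otimes\mathbbm 1)\tilde U$ commutes with $\mathcal B(\mathcal H_{A})\otimes\mathbbm 1_{B}$ and hence equals $\mathbbm 1_{A}\otimes W$ with $W\in\mathscr U_{N_{B}}$, giving $U^{\dagger}=\tilde U C=(V\otimes W)C$. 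I expect the second step to be the real obstacle: getting the bound $\operatorname{rank}\mu_{A}\le 2N_{A}$ out of the representation-theoretic input and then converting $\mu_{A}+\mu_{B}=\omega$ into the clean orthogonal splitting by symplectic subspaces of exactly the right dimensions, together with the identification $\operatorname{rad}\mu_{A}=K_{A}$; the small spectral lemma pinning the scalar commutators to $\pm1$ is what makes $\mu_{A},\mu_{B}$ honest $\mathbb F_{2}$-forms and underpins everything.
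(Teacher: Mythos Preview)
Your argument is correct and follows a genuinely different route from the paper's. One notational point: in this paper $\tilde{\mathscr{P}}_N\cong\mathbb{Z}_2^{2N}$ is the full Pauli group modulo phases, not the set of nonlocal strings, so your opening ``upgrade'' is superfluous (though valid). The steps you flag as delicate all go through: the $\pm1$ commutator follows by computing $(A_PA_Q)^2$ two ways (via $A_P^2=A_Q^2=\mathds{1}$ one gets $\alpha^{-1}\mathds{1}$, via unitarity together with $(A_PA_Q)^\dagger=\alpha^{-1}A_PA_Q$ one gets $\alpha\,\mathds{1}$, hence $\alpha^2=1$); the Stone--von Neumann bound is exactly that every representation of the finite Heisenberg group with faithful central character is a multiple of the $2^k$-dimensional irrep; and $\operatorname{rad}\mu_A=K_A$ because $\{A_Q\}$ spans $\mathcal{L}(\mathcal{H}_A)$ (as $\{A_Q\otimes B_Q\}$ spans $\mathcal{L}(\mathcal{H})$).

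The paper instead tracks the explicit $2$-cocycles $c,\phi$ of the projective representations $k\mapsto X_k$ and $k\mapsto Y_k$ and appeals to general projective representation theory of finite Abelian groups (cocycle-regular elements, the degree formula $\sqrt{|G|/|G_0|}$, projective equivalence of irreps) to produce subgroups $H,\tilde H$---precisely your $K_A,K_B$---of the right orders on which one of $X,Y$ degenerates to a scalar and the other becomes Pauli-equivalent. Your symplectic rank argument over $\mathbb{F}_2$ reaches the same orthogonal splitting $\tilde{\mathscr{P}}_N=K_A\oplus K_B$ more directly within the stabilizer formalism, then uses Witt's theorem plus the surjection $\mathscr{C}_N\to\mathrm{Sp}(2N,\mathbb{F}_2)$ to pick $C$, and finally extracts $V,W$ abstractly via Skolem--Noether; the paper's cocycle calculus is heavier but yields $V,W$ concretely as the intertwiners realizing projective equivalence with the Pauli irreps on $N_A$ and $N_B$ qubits.
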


  $\Leftarrow :$ Assume that $\exists\, V \in \mathscr{U}_{N_A}, \, W \in \mathscr{U}_{N_B}, \, C \in \mathscr{C}_N$ such that $U^\dagger = (V \otimes W) \, C$. Then, $\forall P \in \tilde{\mathscr{P}}_N$, $C P C^\dagger = P^\prime$, where $P^\prime \in \mathscr{P}_N$ is a Pauli string up to a global phase, hence
  \begin{equation}
  E(U^\dagger P U)= E((V \otimes W) \, C P C^\dagger (V^\dagger \otimes W^\dagger))=E((V\otimes W) P^\prime (V^\dagger \otimes W^\dagger))=E(P^\prime)=0,
  \end{equation}
where we also used that the operator entanglement is invariant under the action of local unitary channels.
\par $\Rightarrow :$ Denote as $\mathcal{L}(\mathcal{H})$ the space of linear operators acting on $\mathcal{H}$ and let $\{P_k \}_{k=1,\ldots ,d^2}$ be the set of Pauli strings on $N$ qubits. Assuming that $E(U^\dagger P_k U)=0 \; \forall \, k \in{1,\ldots, d^2}$, there are operators $X_k \in \mathcal{L}(\mathcal{H}_A) , Y_k \in \mathcal{L}(\mathcal{H}_B)$ such that
\begin{equation} \label{prod_prop}
U^\dagger P_k U = X_k \otimes Y_k \; \forall \, k.
\end{equation}
Notice that since $\{P_k\}$ spans $\mathcal{L}(\mathcal{H})$, it follows that $\{X_k\}$, $\{Y_k\}$ span $\mathcal{L}(\mathcal{H}_A)$ and $\mathcal{L}(\mathcal{H}_B)$ respectively. In addition, due to the fact that $P_k^\dagger P_k= \mathds{1}_d$ and $P_k^\dagger = P_k$, we can always choose $X_k$ and $Y_k$ to be Hermitian unitaries. To see this, first notice that
\begin{equation} \label{property_1}
(X_k^\dagger \otimes Y_k^\dagger) \, (X_k \otimes Y_k)= (U^\dagger P_k^\dagger U) \, (U^\dagger P_k U) = \mathds{1}_d \xRightarrow{\Tr_B} \Tr(Y_k^\dagger Y_k) X_k^\dagger X_k=d_B \mathds{1}_A,
\end{equation}
and
\begin{equation} \label{property_2}
{X_k}^\dagger \otimes {Y_k}^\dagger =  U^\dagger P_k^\dagger U = U^\dagger P_k U = X_k \otimes Y_k
\Rightarrow X_k^\dagger \otimes Y_k^\dagger Y_k = X_k \otimes Y_k^2 \xRightarrow{\Tr_B} \Tr(Y_k^\dagger Y_k) X_k^\dagger = \Tr(Y_k^2) X_k.
\end{equation}
Using \cref{property_2} into \cref{property_1}, we also have
\begin{equation} \label{property_3}
\Tr(Y_k^\dagger Y_k) \frac{\Tr(Y_k^2)}{\Tr(Y_k^\dagger Y_k)} X_k \frac{\Tr(Y_k^2)^*}{\Tr(Y_k^\dagger Y_k)} X_k^\dagger = d_B \mathds{1}_A \Rightarrow \left\lvert \frac{\Tr(Y_k^2)}{\Tr(Y_k^\dagger Y_k)} \right\rvert^2 = 1 \Rightarrow \frac{\Tr(Y_k^2)}{\Tr(Y_k^\dagger Y_k)} = e^{i \theta_k}.
\end{equation}
Finally, setting $\tilde{X}_k = \sqrt{\frac{\Tr(Y_k^\dagger Y_k)}{d_B}} e^{i \theta_k/2} X_k$ and $\tilde{Y}_k = \sqrt{\frac{d_B}{\Tr(Y_k^\dagger Y_k)}} e^{-i \theta_k/2} Y_k$, we have $U^\dagger P_k U = \tilde{X}_k \otimes \tilde{Y}_k$ and $\tilde{X}_k$, $\tilde{Y}_k$ are Hermitian unitaries by virtue of \cref{property_1,property_2,property_3}. Without loss of generality, we can thus rename $\tilde{X}_k \rightarrow X_k$, $\tilde{Y}_k \rightarrow Y_k$ with $X_k$, $Y_k$ Hermitian unitaries.
\par Since $P_k$ are Pauli strings, they form a projective representation of $\tilde{\mathscr{P}}_N \cong \mathbb{Z}_2^{2N} $,
\begin{equation} \label{pauli_mult}
P_{k_1} P_{k_2} = c(k_1, k_2) P_{k_1 \circ k_2}, \quad c(k_1,k_2) \in \{ \pm 1 , \pm i\}
\end{equation}
where $c(k_1,k_2): \tilde{\mathscr{P}}_N \cross \tilde{\mathscr{P}}_N \rightarrow \{\pm 1, \pm i\}$ is the 2-cocycle that keeps track of the global phases when multiplying the Pauli strings, while $\circ$ is the group operation for the Abelian group $\tilde{\mathscr{P}}_N$. Notice that the Hermiticity of the Pauli strings implies that 
\begin{equation} \label{c_prop}
c(k_2,k_1)=c(k_1, k_2)^*.
\end{equation}
Using \cref{pauli_mult}, we can then deduce multiplication rules for $X_k$ and $Y_k$:
\begin{equation} \label{calculation_1}
\begin{split}
&U^\dagger P_{k_1} U \, U^\dagger P_{k_2} U = U^\dagger (P_{k_1} P_{k_2}) U = c(k_1,k_2) U^\dagger P_{k_1 \circ k_2} U \Rightarrow
X_{k_1}\otimes Y_{k_1} \,  X_{k_2}\otimes Y_{k_2} = c(k_1,k_2) X_{k_1 \circ k_2} \otimes Y_{k_1\circ k_2} \\
&\Rightarrow X_{k_1} X_{k_2} \otimes (Y_{k_2}Y_{k_1}) (Y_{k_1}Y_{k_2})= c(k_1,k_2) X_{k_1 \circ k_2} \otimes (Y_{k_2}Y_{k_1}) Y_{k_1 \circ k_2} \\
&\xRightarrow{\Tr_B} X_{k_1}X_{k_2} = c(k_1,k_2) \frac{\Tr(Y_{k_2}Y_{k_1}Y_{k_1\circ k_2})}{d_B} X_{k_1 \circ k_2}.
\end{split}
\end{equation}
Define as $\phi(k_1,k_2) \coloneqq \frac{\Tr(Y_{k_2}Y_{k_1}Y_{k_1\circ k_2})}{d_B}$, which satisfies the property $\phi(k_2 , k_1) = \phi(k_1 ,k_2)^*$, due to the cyclicity of the trace and the Hermiticity of $Y_k$. Since the matrix product in \cref{calculation_1} is associative and $c(k_1,k_2)$ is a 2-cocycle, this means that $\phi(k_1,k_2)$ is also a 2-cocycle. In addition, since $X_{k_1}, X_{k_2}, X_{k_1 \circ k_2}$ are unitaries and $\lvert c(k_1, k_2) \rvert = 1$, we have that $\lvert \phi(k_1 ,k_2) \rvert =1$. Moreover, using $k_1 \circ k_2 = k_2 \circ k_1$, we have
\begin{equation}
  \begin{split}
&P_{k_1 \circ k_2} = P_{k_2 \circ k_1} \Rightarrow \frac{1}{c(k_1, k_2)} X_{k_1} \otimes Y_{k_1} \, X_{k_2} \otimes Y_{k_2} = \frac{1}{c(k_2 , k_1)} X_{k_2} \otimes Y_{k_2} \, X_{k_1} \otimes Y_{k_1} \\
&\Rightarrow 
X_{k_1} X_{k_2} \otimes (Y_{k_2} Y_{k_1}) Y_{k_1} Y_{k_2} = \frac{c(k_1,k_2)}{c(k_2 , k_1)} X_{k_2} X_{k_1} \otimes (Y_{k_2} Y_{k_1}) Y_{k_2} Y_{k_1}\\
&\xRightarrow{\Tr_B} 
X_{k_1} X_{k_2} = \frac{c(k_1,k_2)}{c(k_2 , k_1)} \frac{\Tr(Y_{k_2} Y_{k_1} Y_{k_2} Y_{k_1})}{d_B} X_{k_2} X_{k_1}\\
&\xRightarrow{\lambda (k_1 , k_2) \coloneqq \frac{\Tr(Y_{k_2} Y_{k_1} Y_{k_2} Y_{k_1})}{d_B} \in \mathbb{R}}
\phi(k_1,k_2) c(k_1,k_2) X_{k_1 \circ k_2} = \frac{c(k_1,k_2)}{c(k_2 , k_1)} \lambda (k_1 , k_2) \phi(k_2,k_1) c(k_2 ,k_1) X_{k_2 \circ k_1} \\
&\Rightarrow \phi(k_1 ,k_2) = \lambda(k_1 ,k_2) \phi(k_2 ,k_1) = \lambda(k_1 ,k_2) \phi(k_1 ,k_2)^*.
\end{split}
\end{equation}
Since $\lvert \phi(k_1 , k_2) \rvert =1 \Rightarrow \lvert \lambda(k_1 , k_2) \rvert =1 \xRightarrow{\lambda(k_1 , k_2) \in \mathbb{R}} \lambda(k_1 ,k_2) = \pm 1$, which means that $\phi(k_1 ,k_2) = \pm \phi(k_1 ,k_2)^*$, so that in the end $\phi(k_1 ,k_2) \in \{\pm 1 , \pm i\}$. All in all, we have
\begin{align}\label{X_mult}
&X_{k_1} X_{k_2} = c(k_1,k_2) \phi(k_1,k_2) X_{k_1 \circ k_2}, \\
& \phi(k_1,k_2) \in \{ \pm 1 , \pm i\} \text{ and } \phi(k_2,k_1) = \phi(k_1,k_2)^* \label{phi_prop}
\end{align}
where $c(k_1 ,k_2) \in \{\pm 1 , \pm i \}$ is defined by \cref{pauli_mult}. The corresponding relation for $Y_k$, compatible with \cref{prod_prop,pauli_mult,X_mult}, is
\begin{equation} \label{Y_mult}
Y_{k_1} Y_{k_2} = \frac{1}{\phi(k_1,k_2)} Y_{k_1 \circ k_2}.
\end{equation}
Notice that from \cref{X_mult,Y_mult}, it follows that both $\{X_k\}$ and $\{Y_k\}$ form projective representations of $\tilde{\mathscr{P}}_N$ on $\mathcal{H}_A$ (with dimension $2^{N_A}$ and 2-cocycle $c\cdot \phi$) and $\mathcal{H}_B$ (with dimension $2^{N_B}$ and 2-cocycle $1/\phi$), respectively. Since $\{X_k\}$ and $\{Y_k\}$ span $\mathcal{L}(\mathcal{H}_A)$ and $\mathcal{L}(\mathcal{H}_B)$, respectively, these projective representations are irreducible. Before proceeding, we recall some known results from the projective representation theory of finite Abelian groups.
\begin{definition} \label{regularity}
  Let $G$ be a finite Abelian group, and $c: G \cross G \rightarrow \mathbb{C}$ be a 2-cocycle of $G$. We say that an element $g\in G$ is $c-$regular if $c(g,x)=c(x,g) \; \forall \, x \in G$. 
\end{definition}
\begin{lemma} \label{coboundary}
  Let $G$ be a finite Abelian group, and $c$ be a 2-cocycle of $G$. The $c$-regular elements of $G$ form a subgroup $G_0$ and the restriction of $c$ on $G_0$ is a coboundary.
\end{lemma}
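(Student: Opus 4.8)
The plan is to route both statements through the \emph{commutator pairing} of the cocycle, $\beta(g,h) := c(g,h)\,c(h,g)^{-1}$. (Since $\beta$ is unchanged when $c$ is multiplied by a coboundary — coboundaries are symmetric on an abelian group — we may and do assume $c$ is normalized.) The first step is to verify that $\beta$ is a bicharacter, i.e.\ $\beta(g_1 g_2,h) = \beta(g_1,h)\,\beta(g_2,h)$, and likewise in the second slot. This is a mechanical application of the $2$-cocycle identity $c(x,y)\,c(xy,z) = c(y,z)\,c(x,yz)$: rewrite $c(g_1 g_2,h)$ and $c(h,g_1 g_2)$ by peeling off one factor each, use commutativity of $G$ to line up the arguments, and cancel. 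Once $\beta$ is a bicharacter, the first claim is immediate: $g$ is $c$-regular exactly when $\beta(g,h)=1$ for all $h$, so $G_0$ is the kernel of the homomorphism $g \mapsto \beta(g,\cdot) \in \operatorname{Hom}(G,\mathbb{C}^*)$, hence a subgroup.

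For the coboundary claim, the key observation is that on $G_0 \times G_0$ one has $\beta \equiv 1$ by definition of $G_0$, i.e.\ $c$ restricted to $G_0$ is a \emph{symmetric} $2$-cocycle. It then suffices to prove the general fact that a symmetric $\mathbb{C}^*$-valued $2$-cocycle on a finite abelian group $H$ is a coboundary. The clean argument: the central extension $1 \to \mathbb{C}^* \to E_c \to H \to 1$ built from $c$ has abelian total space $E_c$ precisely because its commutator map equals $\beta|_{H\times H} \equiv 1$; abelian extensions of $H$ by the abelian group $\mathbb{C}^*$ are classified by $\operatorname{Ext}^1_{\mathbb{Z}}(H,\mathbb{C}^*)$, which vanishes because $\mathbb{C}^*$ is divisible, hence an injective $\mathbb{Z}$-module. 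Therefore the extension splits and $c|_{G_0}$ is a coboundary. Equivalently, avoiding homological algebra: decompose $H \cong \bigoplus_i \mathbb{Z}/n_i$; symmetry forces chosen lifts of the generators to commute in $E_c$, and since every element of $\mathbb{C}^*$ has an $n_i$-th root each lift can be rescaled to have order exactly $n_i$, producing an explicit $\mu : H \to \mathbb{C}^*$ with $c = \delta\mu$ on $H$.

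The bicharacter identity of the first paragraph is pure bookkeeping, so the real content sits in the coboundary statement, whose essential ingredient is the divisibility of $\mathbb{C}^*$: the analogue fails for cocycles valued in, say, $\mathbb{Z}/2$, so any correct proof must use the existence of $n$-th roots in $\mathbb{C}^*$. The one subtlety worth flagging is that "$c$ is a coboundary on $G_0$" must be read as cohomological triviality \emph{as a cocycle of the group $G_0$ itself}, with a trivializing $\mu$ defined only on $G_0$ — which is exactly what the $\operatorname{Ext}^1_{\mathbb{Z}}(G_0,\mathbb{C}^*)=0$ argument delivers.
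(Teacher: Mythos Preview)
Your argument is correct and follows the same two-step logical route as the paper: first establish that $G_0$ is a subgroup, then observe that $c$ is symmetric on $G_0$ and conclude that a symmetric $2$-cocycle on a finite abelian group with values in $\mathbb{C}^*$ is a coboundary. The difference is one of self-containment. The paper outsources both steps to the literature (Lemma~3.3 of Read for the subgroup property, Lemma~2.12 of Cheng for ``symmetric $\Rightarrow$ coboundary''), whereas you supply actual proofs: the subgroup claim via the commutator pairing $\beta(g,h)=c(g,h)c(h,g)^{-1}$ being a bicharacter so that $G_0$ is its left kernel, and the coboundary claim via the vanishing of $\operatorname{Ext}^1_{\mathbb{Z}}(G_0,\mathbb{C}^*)$ from divisibility of $\mathbb{C}^*$. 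What your version buys is transparency---in particular you correctly isolate divisibility of the coefficient group as the essential input, which the paper's citation hides. What the paper's version buys is brevity. Conceptually the two are the same proof.
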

  \begin{proof}
    For an Abelian group the set of $c$-regular elements is $\{g_0 \in G \vert c(g_0,x)=c(x,g_0) \; \forall \, x \in G\}$. Then, by Lemma 3.3 of Ref. \cite{readCentreRepresentationGroup1977} this is a subgroup of $G$. In addition, for any two elements $g_0,g_0^\prime \in G_0$, $c(g_0,g_0^\prime)=c(g_0^\prime , g_0)$, hence $G_0$ is abelian and $c$-symmetric. Then, by Lemma 2.12 of Ref. \cite{chengCharacterTheoryProjective2015}, $c$ is a coboundary on $G_0$.
    \end{proof}
    \begin{lemma}[page 380, Theorem 2.21 of Ref. \cite{1980357}] \label{degree}
      Let $G$ be a finite Abelian group, $c$ be a 2-cocycle of $G$ and $G_0$ be the subgroup of $c$-regular elements of $G$. Then, all irreducible projective $c$-representations of $G$ have degree equal to $\sqrt{\lvert G \rvert / \lvert G_0 \rvert}$ and are projectively equivalent.
    \end{lemma}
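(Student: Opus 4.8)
\emph{Proof plan.} Although the statement is quoted from Ref.~\cite{1980357}, here is how I would prove it directly. I would pass to the \emph{$c$-twisted group algebra} $\mathbb{C}_c[G]$, the $|G|$-dimensional algebra with basis $\{\bar g\}_{g\in G}$ and product $\bar g\,\bar h:=c(g,h)\,\overline{gh}$. The irreducible projective $c$-representations of $G$ are exactly the simple $\mathbb{C}_c[G]$-modules, and a representation's degree equals its module's dimension. First I would note that $\mathbb{C}_c[G]$ is semisimple: the averaging that proves Maschke's theorem goes through verbatim, since the cocycle factors cancel in $\rho(g)(\cdot)\rho(g)^{-1}$. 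Hence $\mathbb{C}_c[G]\cong\bigoplus_{i=1}^{N_0}M_{d_i}(\mathbb{C})$ with the $d_i$ the degrees of the inequivalent irreducibles, so $\sum_i d_i^2=|G|$ and $N_0=\dim_{\mathbb{C}}Z(\mathbb{C}_c[G])$.

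Next I would count $N_0$. Writing a general element as $z=\sum_g a_g\bar g$ and using $\bar h\,\bar g=\big(c(h,g)/c(g,h)\big)\bar g\,\bar h$ (valid because $G$ is abelian), one sees that $z$ is central iff $a_g=0$ for every non-$c$-regular $g$, while each $\bar g$ with $g\in G_0$ is already central. Thus $Z(\mathbb{C}_c[G])=\Span\{\bar g:g\in G_0\}$, which is $|G_0|$-dimensional; indeed, by \cref{coboundary} it is isomorphic to $\mathbb{C}[G_0]\cong\mathbb{C}^{|G_0|}$. So $N_0=|G_0|$: there are exactly $|G_0|$ inequivalent irreducible $c$-representations.

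The heart of the matter is to show all $d_i$ are equal and all irreducibles are projectively equivalent, and for this I would exploit the action of the ordinary character group $\hat G$ on $c$-representations by $\rho\mapsto\chi\rho$, $(\chi\rho)(g):=\chi(g)\rho(g)$, which preserves the cocycle $c$, the degree, and irreducibility. Fix an irreducible $\rho$ of degree $d$. Since $g\mapsto\mathrm{Ad}\,\rho(g)$ is an \emph{ordinary} representation of the finite group $G$ on $M_d(\mathbb{C})$ (the phases cancel), $M_d(\mathbb{C})$ splits into $\hat G$-weight spaces, and $\chi\rho\cong\rho$ forces the $\chi$-weight space to be nonzero (an intertwiner is a weight vector). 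Hence the stabilizer subgroup $H=\{\chi:\chi\rho\cong\rho\}\le\hat G$ has $|H|\le d^2$, so the $\hat G$-orbit of $[\rho]$ has size $|G|/|H|\ge|G|/d^2$; but this orbit lies inside the set of $|G_0|$ iso-classes of irreducibles, so $d^2\ge|G|/|G_0|$ for \emph{every} irreducible. Summing over the $|G_0|$ irreducibles and comparing with $\sum_i d_i^2=|G|$ forces $d_i^2=|G|/|G_0|$ for all $i$, hence $d_i=\sqrt{|G|/|G_0|}$; and then the orbit has size exactly $|G_0|$, i.e.\ $\hat G$ acts transitively on iso-classes, which is precisely the assertion that the irreducibles are pairwise projectively equivalent.

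I expect the only genuinely substantive steps to be semisimplicity of $\mathbb{C}_c[G]$ and the center computation $Z(\mathbb{C}_c[G])\cong\mathbb{C}[G_0]$ via \cref{coboundary}; the rest is a pigeonhole between the counts $|G_0|$ and $|G|/d^2$. The point that needs care is that twisting by $\chi\in\hat G$ keeps the \emph{same} cocycle $c$ (so the $\hat G$-orbit really does consist of $c$-representations), together with the availability of the weight-space decomposition of $\mathrm{Ad}\,\rho$, which uses that $G$ is finite so that the commuting operators $\mathrm{Ad}\,\rho(g)$ have finite order and are simultaneously diagonalizable. Since this lemma is invoked only as a black box in what follows, one may also simply cite it (Ref.~\cite{1980357}); the argument above is why it holds.
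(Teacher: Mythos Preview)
The paper does not give its own proof of this lemma; it simply imports the statement from Karpilovsky (Ref.~\cite{1980357}) as a black box. So there is no in-paper argument to compare against.

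Your argument is correct and self-contained. The twisted group algebra $\mathbb{C}_c[G]$ is the right setting: semisimplicity via the Maschke averaging (the cocycle factors cancel in $\rho(g)(\cdot)\rho(g)^{-1}$) gives the Wedderburn decomposition, and your center computation is clean---for abelian $G$ the relation $\bar h\,\bar g=\big(c(h,g)/c(g,h)\big)\bar g\,\bar h$ immediately identifies $Z(\mathbb{C}_c[G])$ with $\Span\{\bar g:g\in G_0\}$, so $N_0=|G_0|$ without even needing \cref{coboundary}. The orbit argument via the $\hat G$-action $\rho\mapsto\chi\rho$ is the key step and is handled correctly: since $\mathrm{Ad}\,\rho$ is an ordinary representation of the finite abelian $G$ on $M_d(\mathbb{C})$, it decomposes into weight spaces, and each $\chi$ in the stabilizer contributes a nonzero weight space (an intertwiner is a weight vector, and by Schur's lemma any nonzero weight vector is an intertwiner), forcing $|H|\le d^2$. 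The pigeonhole $|G|/d^2\le|G_0|$ for every irreducible, combined with $\sum d_i^2=|G|$ over $|G_0|$ summands, pins down $d_i^2=|G|/|G_0|$ and transitivity of the $\hat G$-action. Transitivity is exactly projective equivalence, since $\chi\rho$ and $\rho$ differ by the scalar function $\chi$.

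Since the paper treats the lemma as a citation, your write-up actually adds content: it supplies a proof the reader can verify without consulting Ref.~\cite{1980357}. The only cosmetic point is that your invocation of \cref{coboundary} is not load-bearing for the count $N_0=|G_0|$ (linear independence of the $\bar g$ suffices); it does, however, confirm that the center is a genuine (untwisted) group algebra of $G_0$, which is a pleasant consistency check.
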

    In our case, we have a $2^{N_A}$-dimensional projective $c\cdot \phi$-irrep and a $2^{N_B}$-dimensional projective $1/\phi$-irrep of $\tilde{\mathscr{P}}_N$, so we set $G=\tilde{\mathscr{P}}_N \cong \mathbb{Z}_2^{2N}$. Then, \cref{degree} implies that the subgroups $H$, $\tilde{H}$ of $c\cdot \phi$-regular and $1/\phi$-regular elements, respectively, in $\tilde{\mathscr{P}}_N$ have orders $\lvert H \rvert = 2^{2N_B}$ and $\lvert \tilde{H} \rvert = 2^{2N_A}$, respectively. Notice that due to \cref{c_prop,phi_prop} and the fact that $c,\phi \in \{ \pm 1 , \pm i \}$, the $c \cdot \phi$- and $1/\phi-$regularity imply that
    \begin{align}
    &c(h,g) \phi(h,g) = \pm 1 \; \forall \, h \in H, \, g \in G \label{regularity_1}\\
    & \phi(\tilde{h},g)= \pm 1 \; \forall \, \tilde{h} \in \tilde{H}, \, g \in G. \label{regularity_2}
    \end{align}
    Combining \cref{regularity_1,regularity_2} with \cref{X_mult,Y_mult} and using the fact that $X_k,Y_k$ are Hermitian, we then have
    \begin{align}
      &X_h X_g = \pm X_{h \circ g} \Rightarrow [X_h , X_g] =0 \; \forall \, h \in H, \, g \in G \Rightarrow X_h=\tau_1(h) \mathds{1}_A, \; \tau_1(h)= \pm 1 \; \forall \, h \in H, \label{identity_x}\\
      &Y_{\tilde{h}} Y_g = \pm Y_{\tilde{h} \circ g} \Rightarrow [Y_{\tilde{h}} , Y_g] =0 \; \forall \, \tilde{h} \in \tilde{H}, \, g \in G \Rightarrow Y_{\tilde{h}}=\lambda_1(\tilde{h}) \mathds{1}_B, \; \lambda_1(\tilde{h})= \pm 1\; \forall \, \tilde{h} \in \tilde{H}, \label{identity_y}
      \end{align}
      where we used Schur's lemma and the fact that $X_h$, $Y_{\tilde{h}}$ are Hermitian unitaries.
      Also, using \cref{coboundary}, we have that:
      \begin{equation}
      \text{The restrictions of } c\cdot \phi \text{ on } H \text{ and } \frac{1}{\phi} \text{ on } \tilde{H} \text{ are coboundaries}.
      \end{equation}
      Since $1/\phi$ is a coboundary on $\tilde{H}$, it follows that $\phi$ is also a coboundary on $\tilde{H}$. Then, from \cref{X_mult}, we find that $\{X_{\tilde{h}} \, \vert \, \tilde{h} \in \tilde{H}\}$ forms a $2^{N_A}$-dimensional projective $c\cdot \phi-$representation $\pi$ of $\tilde{H}$. Since $\phi$ is a coboundary on $\tilde{H}$ this representation is projectively equivalent to a $2^{N_A}$-dimensional $c$-representation $\pi^\prime$ of $\tilde{H}$.
      However, we know that the Pauli strings $\{P_\alpha \, \vert \, \alpha=1,\ldots,2^{2N_A} \}$ on $N_A$ qubits form a $2^{N_A}$-dimensional projective $c$-irrep of $\mathbb{Z}_2^{2N_A}$ and by \cref{degree} all $c$-irreps have the same dimension. So, $\pi^\prime$ has to be irreducible as well, since it cannot be broken down into smaller irreps, and by \cref{degree} is projectively equivalent to the Pauli strings projective $c$-irrep. All in all, we have that $\{X_{\tilde{h}} \, \vert \, \tilde{h} \in \tilde{H}\}$ is projectively equivalent to Pauli strings on $N_A$ qubits, namely
      \begin{equation} \label{partial_x}
      X_{\tilde{h}} = \lambda_2(\tilde{h}) V P_{\alpha(\tilde{h})} V^{-1},
      \end{equation}
      where $\lambda_2 \in \mathbb{C}$ and $V$ is an invertible operator. Since, $X_{\tilde{h}}$ and $P_\alpha$ are unitaries, $V$ is also a unitary operator and $\lvert \lambda_2 \rvert=1$. In addition, since $X_{\tilde{h}}$ and $P_\alpha$ are Hermitian, $\lambda_2 \in \mathbb{R}$, so that in the end $\lambda_2 = \pm 1$.
      \par Similarly, since $c\cdot \phi$ is a coboundary on $H$, it follows that $\frac{1}{\phi}$ is cohomologous to $c$, namely it is equal to $c$ up to a coboundary. Then, using \cref{Y_mult} and similar arguments as before, we deduce that $\{Y_h \, \vert \, h \in H\}$ forms a projective $1/\phi$-irrep of $H$ which is projectively equivalent to Pauli strings on $N_B$ qubits, such that
      \begin{equation} \label{partial_y}
      Y_h = \tau_2(h) W P_{\beta(h)} W^{-1},
      \end{equation}
      where $W$ is a unitary operator and $\tau_2(h) = \pm 1$. Defining $\lambda(\tilde{h})=\lambda_1(\tilde{h}) \lambda_2(\tilde{h}) = \pm 1$ and $\tau(h)=\tau_1(h) \tau_2(h) = \pm 1$ and combining \cref{prod_prop,identity_x,identity_y,partial_x,partial_y}, we find that
      \begin{equation}
        \begin{split}
        &U^\dagger P_{\tilde{h}} U = \lambda(\tilde{h}) V P_{\alpha(\tilde{h})} V^\dagger \otimes \mathds{1}_B \; \forall \tilde{h} \in \tilde{H} \\
        &U^\dagger P_h U = \tau(h) \mathds{1}_A \otimes W P_{\beta(h)} W^\dagger \; \forall h \in H.
        \end{split}
        \end{equation}

        Notice that if $q \in H \cap \tilde{H}$, \cref{regularity_1,regularity_2} imply that $c(q,g)=\pm 1 \; \forall \, g\in G$, which means that $P_q=\mathds{1}$, i.e., $q=e$ is the identity group element. In addition, $\lvert G \rvert = \lvert \tilde{H} \rvert \, \lvert H \rvert$, so we can write any element of $G$ uniquely as $g = \tilde{h} \circ h$ with $\tilde{h} \in \tilde{H}$ and $h \in H$, namely $G= \tilde{H} \cross H$. Using \cref{prod_prop,partial_x,partial_y,identity_x,identity_y}, we then have
        \begin{equation} \label{mid_step}
        U^\dagger P_{\tilde{h}} P_h U = \lambda_{\tilde{h}} \tau_h VP_{\alpha(\tilde{h})} V^\dagger \otimes W P_{\beta(h)} W^\dagger \; \forall \tilde{h} \in \tilde{H}, \, h \in H.
        \end{equation}
        Using \cref{regularity_1,regularity_2} once more we find that for any $h \in H$ and $\tilde{h} \in \tilde{H}$, $c(h,\tilde{h}) = \pm 1$, i.e. $[P_{\tilde{h}}, P_h]=0$. So, we can always find a Clifford unitary $C$ such that
        \begin{equation} \label{clifford_map}
        \begin{split}
        &C P_{\tilde{h}} C^\dagger= \lambda_{\tilde{h}} P_{\alpha(\tilde{h})} \otimes \mathds{1}_B\; \forall \tilde{h} \in \tilde{H}, \\
        & C P_h C^\dagger = \tau_h \mathds{1}_A \otimes P_{\beta(h)} \; \forall h \in H,
        \end{split}
        \end{equation}
        since this amounts to, respectively, mapping the Pauli strings $P_{\tilde{h}}$ and $P_h$ to Pauli strings on $N_A$ and $N_B$ distinct qubits, and choosing the corresponding signs $\pm 1$. So, combining \cref{mid_step,clifford_map}, we get 
        \begin{equation}
        U^\dagger P_{\tilde{h}} P_h U = (V\otimes W) \, C \, (P_{\tilde{h}} P_h) \, C^\dagger \, (V^\dagger \otimes W^\dagger) \; \forall \tilde{h} \in \tilde{H}, \, h \in H.
        \end{equation}
        Since $P_{\tilde{h}} P_h$ above run through all the Pauli strings in $\tilde{\mathscr{P}}_N$, which form an operator basis, this means that
        \begin{equation}
        U^\dagger = (V \otimes W) \, C.
        \end{equation}
        \qed

    \subsection{Lemma 1}

\setcounter{lemma}{0}
\begin{lemma}\label{lemma_no_NL}
For all unitaries $U \in \mathscr{U}_{N}$ s.t. $U= (V \otimes W) \, C$, where $V \in \mathscr{U}_{N_A}, \, W \in \mathscr{U}_{N_B}, \, C \in \mathscr{C}_N$,
\begin{equation} \label{no_NL}
m^{\text{NL}}(U \ket{\psi}) \leq m(\ket{\psi}).
\end{equation}
In particular, if $\ket{s}$ is a stabilizer state, then $m^{\text{NL}}(U \ket{s})=0$.
\end{lemma}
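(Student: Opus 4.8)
The plan is to reduce the bound to the defining property of $m^{\text{NL}}$ --- the minimum of the nonstabilizerness monotone $m$ over local unitary rotations, $m^{\text{NL}}(\ket{\phi}) = \min_{V' \in \mathscr{U}_{N_A},\, W' \in \mathscr{U}_{N_B}} m\bigl((V' \otimes W')\ket{\phi}\bigr)$ --- together with two elementary facts about $m$: Clifford invariance and nonnegativity. The first step is to observe that the local factor of $U$ can be absorbed into the minimization. Writing $\ket{\phi} = U\ket{\psi} = (V \otimes W)\,C\ket{\psi}$, one has $(V' \otimes W')(V \otimes W) = (V'V) \otimes (W'W)$, and since right multiplication by $(V,W)$ is a bijection of $\mathscr{U}_{N_A} \times \mathscr{U}_{N_B}$, the pair $(V'V, W'W)$ sweeps the whole group as $(V',W')$ does. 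Hence
\begin{equation}
m^{\text{NL}}(U\ket{\psi}) = \min_{V',W'} m\bigl((V' \otimes W')(V \otimes W)\,C\ket{\psi}\bigr) = \min_{\tilde V,\tilde W} m\bigl((\tilde V \otimes \tilde W)\,C\ket{\psi}\bigr) = m^{\text{NL}}(C\ket{\psi}).
\end{equation}

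Next I would bound $m^{\text{NL}}(C\ket{\psi})$ from above. Because the minimization defining $m^{\text{NL}}$ contains the trivial choice $\tilde V = \mathds{1}_A$, $\tilde W = \mathds{1}_B$, we have $m^{\text{NL}}(C\ket{\psi}) \le m(C\ket{\psi})$; Clifford invariance of $m$ then gives $m(C\ket{\psi}) = m(\ket{\psi})$. Combining this with the identity above yields $m^{\text{NL}}(U\ket{\psi}) \le m(\ket{\psi})$, i.e. \cref{no_NL}. Existence of the minimum is not a concern, since $\mathscr{U}_{N_A} \times \mathscr{U}_{N_B}$ is compact and $m$ is continuous.

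For the final assertion, set $\ket{\psi} = \ket{s}$ with $\ket{s}$ a stabilizer state, so that $m(\ket{s}) = 0$. The inequality just established gives $m^{\text{NL}}(U\ket{s}) \le 0$, while nonnegativity of $m$ --- and hence of $m^{\text{NL}}$, being a minimum of $m$ --- forces $m^{\text{NL}}(U\ket{s}) = 0$. Equivalently, $C\ket{s}$ is again a stabilizer state, so $U\ket{s} = (V \otimes W)\,C\ket{s}$ is a local-unitary image of a stabilizer state, and the choice $(V',W') = (V^{\dagger}, W^{\dagger})$ already attains the value $m(C\ket{s}) = 0$ in the minimization. I do not foresee a genuine obstacle here: the only point requiring care is that the monotone $m$ chosen to build $m^{\text{NL}}$ be Clifford invariant and nonnegative, which holds for all standard magic measures, and nothing in the argument uses anything beyond those two properties.
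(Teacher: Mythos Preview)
Your argument is correct and matches the paper's proof in substance: both use that the minimum defining $m^{\text{NL}}$ is bounded above by the particular choice that cancels the local factor $V\otimes W$, leaving $m(C\ket{\psi})$, and then invoke Clifford invariance of $m$. Your version is slightly more verbose---you pass through the intermediate equality $m^{\text{NL}}(U\ket{\psi})=m^{\text{NL}}(C\ket{\psi})$ via the bijection argument, whereas the paper goes straight to the inequality by plugging in $U_A=V^\dagger$, $U_B=W^\dagger$---but the content is the same.
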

\setcounter{lemma}{4}
    For $U=(V \otimes W) C$, we have
    \begin{equation}
    m^{\text{NL}}(U\ket{\psi})= \min_{\substack{U_A \in \mathscr{U}_{N_A} \\ U_B \in \mathscr{U}_{N_B}}} m((U_A \otimes U_B) (V \otimes W) C \ket{\psi}) \leq m(C \ket{\psi}) = m(\psi),
    \end{equation}
    where in the last step we used that any measure of nonstabilizerness is invariant under Clifford unitaries.

    \subsection{Proposition 1}
    
 \begin{proposition}\label{prop_paul_ent}
Let $U \in \mathscr{U}_N$ and $\mathcal{H} \cong \mathcal{H}_A \otimes \mathcal{H}_B \cong (\mathbb{C}^2)^{N_A} \otimes (\mathbb{C}^2)^{\otimes N_B}$, then the average Pauli-entangling power is given as
\begin{equation} \label{paul_ent}
P_E(U) \coloneqq \mathlarger{\mathbb{E}}_{P \in \tilde{\mathscr{P}}_N} E_{\text{lin}}(U^\dagger P U) = 1- \frac{1}{d^2} \Tr( T^A_{(12)(34)} \, {U^\dagger}^{\otimes 4} Q U^{\otimes 4}),
\end{equation}
where $Q \coloneqq d^{-2} \sum_{P \in \tilde{\mathscr{P}_N}} P^{\otimes 4}$.
\end{proposition}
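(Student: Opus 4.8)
The plan is to reduce everything to a single ``replica'' identity expressing the operator purity of a Hermitian unitary as a permutation‑operator trace over four copies, and then to average over the Pauli group by linearity. First I would recall that for an operator $O$ on $\mathcal{H}_A\otimes\mathcal{H}_B$ with operator–Schmidt decomposition $O=\sum_\mu s_\mu\,A_\mu\otimes B_\mu$ (with $\{A_\mu\},\{B_\mu\}$ Hilbert--Schmidt orthonormal), the linear operator entanglement is $E_{\text{lin}}(O)=1-\Tr(\rho_A^2)$, where $\rho_A$ is the normalized operator‑reduced state obtained by vectorizing $O\mapsto\lvert O\rangle\rangle$ and tracing out both $B$‑factors, with eigenvalues $s_\mu^2/\lVert O\rVert^2$. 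For $O=U^\dagger P U$ with $P\in\tilde{\mathscr{P}}_N$ one has $\lVert O\rVert^2=\Tr(U^\dagger P^\dagger P U)=d$, so $E_{\text{lin}}(U^\dagger PU)=1-d^{-2}\sum_\mu s_\mu^4$.

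The key step is the identity $\sum_\mu s_\mu^4=\Tr\!\big[T^A_{(12)(34)}\,O^{\otimes4}\big]$, valid for any Hermitian $O$. I would prove it by writing $\sum_\mu s_\mu^4=\Tr\big[(\rho_A^{\mathrm{op}})^2\big]$, where $\rho_A^{\mathrm{op}}$ is the unnormalized reduced state of $\lvert O\rangle\rangle\langle\langle O\rvert$ (trace over both $B$‑factors), inserting a swap between the two $A$‑replicas, and contracting the four maximally‑entangled vectors; using $O^\dagger=O$ to turn every $\langle\langle O\rvert$ back into an $O$, the contraction collapses into $\Tr[(\Pi^{A}\otimes\Pi^{B})\,O^{\otimes4}]$, where $\Pi^{A}$ and $\Pi^{B}$ are the operators on $\mathcal{H}^{\otimes4}$ permuting, respectively, the four $\mathcal{H}_A$‑factors and the four $\mathcal{H}_B$‑factors according to the pairings dictated by the vectorization and the replica swap; this combined operator is exactly $T^A_{(12)(34)}$. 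A quick sanity check is $U=\mathds{1}$: every Pauli string factorizes across $A|B$ so $E_{\text{lin}}(P)=0$, and indeed $d^{-2}\Tr[T^A_{(12)(34)}Q]=1$ because there are $d^2$ Pauli strings, each contributing $\Tr[T^A_{(12)(34)}P^{\otimes4}]=d^2$.

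It then remains to substitute $O^{\otimes4}=(U^\dagger)^{\otimes4}P^{\otimes4}U^{\otimes4}$ and use linearity of the trace:
\begin{equation*}
P_E(U)=\mathbb{E}_{P}\!\Big(1-\tfrac{1}{d^2}\Tr[T^A_{(12)(34)}\,O^{\otimes4}]\Big)=1-\tfrac{1}{d^2}\Tr\!\Big[T^A_{(12)(34)}\,(U^\dagger)^{\otimes4}\big(\mathbb{E}_{P}P^{\otimes4}\big)\,U^{\otimes4}\Big],
\end{equation*}
and $\mathbb{E}_{P\in\tilde{\mathscr{P}}_N}P^{\otimes4}=d^{-2}\sum_{P}P^{\otimes4}=Q$ yields the claim. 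The main obstacle is the bookkeeping in the replica step: one has to track carefully which of the four copies of $\mathcal{H}_A$ (resp.\ $\mathcal{H}_B$) get identified through the two maximally‑entangled contractions versus through the replica swap, so as to pin down the permutation $T^A_{(12)(34)}$ and confirm the $1/d^2$ prefactor; everything else is linearity of the trace and the normalization $\lVert U^\dagger PU\rVert^2=d$.
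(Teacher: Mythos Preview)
Your key identity is false. Take $O=P_A\otimes P_B$ with $P_B\neq\mathds{1}_B$: the operator Schmidt rank is one with $s_1=\sqrt{d}$, so $\sum_\mu s_\mu^4=d^2$; but
\[
\Tr\!\big[T^A_{(12)(34)}\,O^{\otimes 4}\big]=\Tr\!\big[T_{(12)(34)}P_A^{\otimes4}\big]\,(\Tr P_B)^4=d_A^2\cdot 0=0.
\]
Your sanity check only works after summing over $P$, not term by term. If you actually carry out the vectorization bookkeeping you will find that $\Pi^B$ is \emph{not} the identity: the correct per-$O$ replica formula for Hermitian $O$ is
\[
\sum_\mu s_\mu^4=\Tr\!\big[T^A_{12}\,O^{\otimes 2}\,T^A_{12}\,O^{\otimes 2}\big]=\Tr\!\big[\big(T^A_{(14)(23)}\otimes T^B_{(13)(24)}\big)\,O^{\otimes 4}\big],
\]
i.e.\ the $A$- and $B$-permutations differ. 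Plugging this into the Pauli average gives an expression with $T^A_{(14)(23)}\otimes T^B_{(13)(24)}$ rather than $T^A_{(12)(34)}$; the two are equal only because of the extra symmetry $T_{(13)(24)}Q=Q$ (the image of $Q$ consists of vectors of the form $\ket{\psi}_{12}\otimes\ket{\psi}_{34}$), which you would still need to invoke.

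The paper avoids this detour. It starts from the two-copy form $1-\tfrac{1}{d^2}\Tr\big(T^A_{12}\,O^{\otimes2}\,T^A_{12}\,O^{\otimes2}\big)$ and then uses that $\mathcal{K}=\mathrm{span}\{P^{\otimes2}\}$ is a maximal Abelian subalgebra, so the Pauli twirl $\tfrac{1}{d^2}\sum_P P^{\otimes2}(\cdot)P^{\otimes2}$ equals the orthogonal projection onto $\mathcal{K}$, i.e.\ $\sum_P \tfrac{P^{\otimes2}}{d}\Tr\!\big(\tfrac{P^{\otimes2}}{d}\,\cdot\big)$. This collapses the averaged two-copy expression to a perfect square $\big(\Tr[P^{\otimes2}U^{\otimes2}T^A_{12}{U^\dagger}^{\otimes2}]\big)^2$, and squaring a trace \emph{is} what produces $T^A_{(12)(34)}$ on four copies with trivial $B$-permutation. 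In short, the specific operator $T^A_{(12)(34)}$ in the statement is a consequence of the Pauli-average step, not of a per-$O$ replica identity; your route needs either the correct twisted permutation plus the $Q$-symmetry, or the maximal-Abelian trick the paper uses.
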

    Using Eq.~(1) of the main text, we have
    \begin{equation} \label{starting_step}
    P_E(U)=1-\frac{1}{d^4} \sum_{P\in \tilde{\mathscr{P}}_N}\Tr(T_{12}^A \, {U^\dagger}^{\otimes 2} \, P^{\otimes 2} \, U^{\otimes 2} \, T_{12}^A \, {U^\dagger}^{\otimes 2} \, P^{\otimes 2} \, U^{\otimes 2}).
    \end{equation}
Notice that the algebra $\mathcal{K}=\Span\{P^{\otimes 2} \, \vert \, P \in \tilde{\mathscr{P}}_N \}$ is a $d^2$-dimensional Abelian algebra acting on the $d^2$-dimensional Hilbert space $\mathcal{H}^{\otimes 2}$. This means that $\mathcal{K}$ is maximally Abelian, i.e., $\mathcal{K}^\prime = \mathcal{K}$, where $\mathcal{K}^\prime$ is the commutant algebra of $\mathcal{K}$. Then,
\begin{equation} \label{pauli_average}
\frac{1}{d^2} \sum_{P \in \tilde{\mathscr{P}}_N} P^{\otimes 2} (\bullet ) P^{\otimes 2} = \mathbb{P}_{\mathcal{K}^\prime} = \mathbb{P}_{\mathcal{K}} = \sum_{P \in \tilde{\mathscr{P}}_N} \Tr(\frac{P^{\otimes 2}}{d} (\bullet )) \frac{P^{\otimes 2}}{d},
\end{equation}
where $\mathbb{P}_{\mathcal{K}}$ is the projector onto $\mathcal{K}$ and we used that $\{P/d \, \vert \, P \in \tilde{\mathscr{P}}_N \}$ is an orthonormal basis of $\mathcal{K}$.
Using \cref{pauli_average} in \cref{starting_step}, we get
\begin{equation} \label{step_1}
  \begin{split}
P_E(U)&=1- \frac{1}{d^4} \sum_{P \in \tilde{\mathscr{P}}_N} \Tr( T_{12}^A \, {U^\dagger}^{\otimes 2} \, \Tr(P^{\otimes 2} \, U^{\otimes 2} \, T_{12}^A \, {U^\dagger}^{\otimes 2}) \, P^{\otimes 2} \, U^{\otimes 2}) \\
&= 1 - \frac{1}{d^4} \sum_{P \in \tilde{\mathscr{P}}_N} \left( \Tr(P^{\otimes 2} \, U^{\otimes 2} \, T_{12}^A \, {U^\dagger}^{\otimes 2}) \right)^2 = 1- \frac{1}{d^4} \sum_{P \in \tilde{\mathscr{P}}_N} \Tr(T_{(12)(34)}^A \, {U^\dagger}^{\otimes 4} \, P^{\otimes 4} \, U^{\otimes 4})\\
&= 1 - \frac{1}{d^2} \Tr(T_{(12)(34)}^A \, {U^\dagger}^{\otimes 4} \, Q \, U^{\otimes 4}),
  \end{split}
\end{equation}
  where $Q \coloneqq d^{-2} \sum_{P \in \tilde{\mathscr{P}}_N} P^{\otimes 4}$. It is easy to see that $Q$ is a $d^2$-dimensional projector. In fact, we can write explicitly an orthonormal basis of $Q$ as $\{\ket{\psi_{\vec{x}\vec{z}}}^{\otimes 2} \, \vert \, \vec{x},\vec{z} \in \{0,1\}^N \}$, where $\ket{\psi_{\vec{x}\vec{z}}} = 2^{-N/2} \sum_{\vec{y} \in \{0,1\}^N} (-1)^{\vec{y}\cdot \vec{z}} \ket{\vec{y}} \otimes \ket{\vec{y} \oplus \vec{x}} = Z^{\vec{z}} \otimes X^{\vec{x}} \ket{\phi^+}$. Here $Z^{\vec{z}}$ and $X^{\vec{x}}$ describe $Z$ and $X$ Pauli strings based on the value of the bitstrings $\vec{z}$, $\vec{x}$ and $\ket{\phi^+} = 2^{-N/2} \sum_{\vec{y} \in \{0,1\}^N} \ket{\vec{y}} \otimes \ket{\vec{y}}$ is a maximally entangled state.

  \subsection{Lemma 2}
\setcounter{lemma}{1}
  \begin{lemma}\label{lemma_opcoh}
For any unitary operator $U \in \mathscr{U}_N$,
\begin{equation} \label{opcoh}
E_{\text{lin}}(U)= \min_{{\substack{V_A, W_A \in \mathscr{U}_{N_A} \\ V_B, W_B \in \mathscr{U}_{N_B}}}} M_{\text{lin}}(V_A \otimes V_B \, U \, W_A \otimes W_B).
\end{equation}
\end{lemma}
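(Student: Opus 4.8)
The plan is to prove \cref{opcoh} by establishing the two inequalities separately, using that $E_{\text{lin}}$ is invariant under two-sided local unitaries $U\mapsto(V_A\otimes V_B)\,U\,(W_A\otimes W_B)$ whereas $M_{\text{lin}}$ is not. Writing the operator Schmidt decomposition $U=\sum_k\sqrt{\lambda_k}\,A_k\otimes B_k$, with $\{A_k\}$ and $\{B_k\}$ orthonormal with respect to the Hilbert--Schmidt product and $\sum_k\lambda_k=1$, one has $E_{\text{lin}}(U)=1-\sum_k\lambda_k^{2}$, a quantity manifestly unchanged by the local rotations. Hence it is enough to (i) prove the pointwise bound $M_{\text{lin}}(U')\ge E_{\text{lin}}(U')$ for every $U'$, so that the minimum is $\ge E_{\text{lin}}(U)$, and (ii) exhibit, for each $U$, local unitaries attaining equality, so that the minimum is $\le E_{\text{lin}}(U)$.

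For (i) I would pass to the matrix of Pauli coefficients of $U$ across the cut, $c_{P_A,P_B}\coloneqq d^{-1}\Tr\!\big((P_A\otimes P_B)\,U\big)$ with $P_A\in\tilde{\mathscr{P}}_{N_A}$ and $P_B\in\tilde{\mathscr{P}}_{N_B}$, so that $M_{\text{lin}}(U)=1-\sum_{P_A,P_B}|c_{P_A,P_B}|^{4}$. The key observation is that, read as a $d_A^{2}\times d_B^{2}$ matrix, $c$ has singular values $\{\sqrt{\lambda_k}\}$: since the normalized Pauli strings form orthonormal bases of $\mathcal{L}(\mathcal{H}_A)$ and of $\mathcal{L}(\mathcal{H}_B)$, $c$ is obtained from $\diag(\sqrt{\lambda_k})$ by unitary changes of basis on the left and on the right, so in particular $\sum_k\lambda_k^{2}=\Tr\big((cc^{\dagger})^{2}\big)$ with $cc^{\dagger}$ the corresponding reduced operator on $\mathcal{L}(\mathcal{H}_A)$. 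The bound $M_{\text{lin}}(U)\ge E_{\text{lin}}(U)$ then reduces to the elementary matrix inequality $\sum_{i,j}|c_{ij}|^{4}\le\Tr\big((cc^{\dagger})^{2}\big)$, which I would prove in two steps, $\sum_{i,j}|c_{ij}|^{4}\le\sum_i\big(\sum_j|c_{ij}|^{2}\big)^{2}=\sum_i\big((cc^{\dagger})_{ii}\big)^{2}\le\sum_{i,i'}|(cc^{\dagger})_{ii'}|^{2}=\Tr\big((cc^{\dagger})^{2}\big)$: the first step is $\sum_j a_j^{2}\le(\sum_j a_j)^{2}$ for $a_j\ge0$, the second discards the non-negative off-diagonal terms of $\Tr\big((cc^{\dagger})^{2}\big)$.

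For (ii) I would track the equality cases just used: saturation forces $cc^{\dagger}$ to be diagonal and every row of $c$ to have at most one nonzero entry, i.e.\ $c$ monomial, which is equivalent to $U$ admitting an operator Schmidt decomposition whose Schmidt operators are all scalar multiples of Pauli strings, $U=\sum_k\sqrt{\lambda_k}\,P^{A}_{\mu_k}\otimes P^{B}_{\nu_k}$ with the $P^{A}_{\mu_k}$ pairwise distinct and likewise the $P^{B}_{\nu_k}$. Thus the lemma reduces to a canonical-form (Cartan-type) statement: every $U\in\mathscr{U}_N$ is locally equivalent to such a ``Pauli-diagonal'' unitary, after which $M_{\text{lin}}=1-\sum_k\lambda_k^{2}=E_{\text{lin}}$ and the minimum is attained. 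For $N_A=N_B=1$ this is exactly the KAK decomposition $U\sim\exp\!\big(i\sum_{j}\theta_j\,\sigma_j\otimes\sigma_j\big)$; in general I would try to build the required local rotations from the Clifford-plus-local structure already isolated in \cref{theorem}, simultaneously aligning the Schmidt operators on the two sides with the Pauli bases and exploiting the freedom to re-choose Schmidt operators inside degenerate eigenspaces of $cc^{\dagger}$ when the operator Schmidt spectrum is not simple. Showing that this simultaneous Pauli-alignment can always be carried out is the main obstacle; everything else is routine bookkeeping.
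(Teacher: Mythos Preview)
Your part (i) is correct: the matrix inequality $\sum_{ij}|c_{ij}|^4\le\Tr\bigl((cc^\dagger)^2\bigr)$ holds and gives $M_{\text{lin}}\ge E_{\text{lin}}$ pointwise, hence $\min M_{\text{lin}}\ge E_{\text{lin}}(U)$. The gap is entirely in (ii). Your saturation analysis is also right --- equality forces $c$ to be monomial, i.e.\ the locally rotated unitary to read $\sum_k\alpha_k\,P^A_{\mu_k}\otimes P^B_{\nu_k}$ with pairwise distinct Paulis on each side --- but the Cartan-type canonical form you then seek does \emph{not} exist once one leaves $N_A=N_B=1$. Concretely, take the Toffoli gate with the split $(N_A,N_B)=(2,1)$: its operator Schmidt spectrum across that cut is $(3/4,1/4)$, simple, so the Schmidt operators are fixed up to phase, and the $A$-side operator associated with the eigenvalue $1/4$ is the rank-one projector $|11\rangle\langle 11|$. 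Every two-qubit Pauli has full rank $4$, while $X\mapsto V_A X W_A$ preserves rank, so no choice of $V_A,W_A$ can turn this Schmidt operator into a Pauli and no monomial $c$ is reachable. The ``simultaneous Pauli-alignment'' you flag as the main obstacle is therefore not merely hard but impossible beyond two qubits; appealing to \cref{theorem} cannot help either, since that result only characterizes the measure-zero set of unitaries with vanishing $P_E$.

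The paper's argument is of a completely different nature: it constructs no minimizers, but quotes Theorem~2 of Anand \emph{et~al.}\ (restated here as \cref{lemma_coh}) --- for any bipartite pure state the minimum $2$-coherence over product bases equals the linear entanglement entropy --- and transports it to operator space via Observation~1 of the main text, with $U/\sqrt{d}$ playing the role of the state in $\mathcal{L}(\mathcal{H}_A)\otimes\mathcal{L}(\mathcal{H}_B)$. Note that in that transcription the ``local bases'' of \cref{lemma_coh} are \emph{all} orthonormal bases of $\mathcal{L}(\mathcal{H}_A)$ and $\mathcal{L}(\mathcal{H}_B)$, a strictly larger family than the left--right rotated Paulis $\{V_A P W_A\}_P$ appearing in \cref{opcoh}; the Toffoli obstruction shows that the optimum over the larger family need not be attained within the smaller one, so whatever Observation~1 supplies to close that gap is doing real work that a direct canonical-form construction cannot replicate.
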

\setcounter{lemma}{4}
  This follows directly from combining Observation 1 of the main text with Theorem 2 of Ref. \cite{anandQuantumCoherenceSignature2021}, which we state below as a Lemma.
  \begin{lemma}[Theorem 2 of Ref. \cite{anandQuantumCoherenceSignature2021}] \label{lemma_coh}
    Let $\ket{\psi}$ be a quantum state in $\mathcal{H} \cong \mathcal{H}_A \otimes \mathcal{H}_B$ and $c^{(2)}_{\mathbb{B}}(\ketbra{\psi}{\psi}) = 1-\sum_j \lvert \braket{j}{\psi} \rvert^4$ be the 2-coherence with respect to the basis $\mathbb{B}=\{ \ket{j} \}_{j=1}^d$. Then,
    \begin{equation}
     \min_{\mathbb{B}_a, \mathbb{B}_b} c^{(2)}_{\mathbb{B}_a \otimes \mathbb{B}_b} (\ketbra{\psi}{\psi}) = S_{\text{lin}}(\Tr_B(\ketbra{\psi}{\psi})),
    \end{equation}
    where $S_{\text{lin}}(\rho)= 1 - \Tr(\rho^2)$ is the linear entropy of the reduced state $\rho$.
  \end{lemma}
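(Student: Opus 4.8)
The plan is to recast the minimization as a maximization over product bases and then to pin down the optimum via the Schmidt decomposition. For a product basis $\mathbb{B}_a\otimes\mathbb{B}_b$ with $\mathbb{B}_a=\{\ket{a_i}\}$ and $\mathbb{B}_b=\{\ket{b_k}\}$, I would write the amplitudes $c_{ik}\coloneqq\left(\bra{a_i}\otimes\bra{b_k}\right)\ket{\psi}$, so that $c^{(2)}_{\mathbb{B}_a\otimes\mathbb{B}_b}(\ketbra{\psi}{\psi})=1-\sum_{ik}\lvert c_{ik}\rvert^4$. Minimizing the $2$-coherence over product bases is therefore equivalent to maximizing $f\coloneqq\sum_{ik}\lvert c_{ik}\rvert^4$ over all orthonormal bases of $\mathcal{H}_A$ and $\mathcal{H}_B$. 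Introducing the Schmidt decomposition $\ket{\psi}=\sum_\mu\sqrt{\lambda_\mu}\,\ket{e_\mu}\otimes\ket{f_\mu}$, the reduced state $\rho_A\coloneqq\Tr_B(\ketbra{\psi}{\psi})$ has eigenvalues $\lambda_\mu$, so the target value is $S_{\text{lin}}(\rho_A)=1-\Tr(\rho_A^2)=1-\sum_\mu\lambda_\mu^2$; the whole proof reduces to showing $\max f=\sum_\mu\lambda_\mu^2$.

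Achievability is the easy direction: choosing $\mathbb{B}_a$ and $\mathbb{B}_b$ to be the Schmidt bases (completed to full orthonormal bases if the Schmidt rank is deficient) gives $c_{\mu\nu}=\sqrt{\lambda_\mu}\,\delta_{\mu\nu}$ and hence $f=\sum_\mu\lambda_\mu^2$. For the converse upper bound, I would note that in the basis $\mathbb{B}_a$ the reduced state has entries $(\rho_A)_{ii'}=\sum_k c_{ik}\bar c_{i'k}$, so in particular $(\rho_A)_{ii}=\sum_k\lvert c_{ik}\rvert^2$, and then chain two elementary inequalities,
\begin{equation}
\sum_{ik}\lvert c_{ik}\rvert^4\;\leq\;\sum_i\Big(\sum_k\lvert c_{ik}\rvert^2\Big)^2\;=\;\sum_i(\rho_A)_{ii}^2\;\leq\;\sum_{i,i'}\lvert(\rho_A)_{ii'}\rvert^2\;=\;\Tr(\rho_A^2)=\sum_\mu\lambda_\mu^2,
\end{equation}
where the first step drops the nonnegative cross terms in the expanded square and the second drops the nonnegative off-diagonal contributions to $\Tr(\rho_A^2)$. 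Since $\Tr(\rho_A^2)$ is basis independent, this bounds $f$ for \emph{every} product basis, so $\max f=\sum_\mu\lambda_\mu^2$ and hence $\min_{\mathbb{B}_a,\mathbb{B}_b}c^{(2)}=1-\Tr(\rho_A^2)=S_{\text{lin}}(\rho_A)$.

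I expect the converse (optimality of the Schmidt basis) to be the main obstacle, and the insight that makes it tractable is recognizing that the diagonal of $\rho_A$ in the chosen $A$-basis is exactly $(\rho_A)_{ii}=\sum_k\lvert c_{ik}\rvert^2$, which is what ties the entrywise fourth-power sum $f$ to the purity $\Tr(\rho_A^2)$ without ever needing to control the off-diagonal interference explicitly. I would close by verifying the equality conditions, to confirm the minimum is attained rather than merely approached: the first inequality saturates when, for each $i$, at most one $c_{ik}$ is nonzero, and the second when $\rho_A$ is diagonal in $\mathbb{B}_a$; both hold simultaneously precisely at the Schmidt bases, consistent with the achievability computation.
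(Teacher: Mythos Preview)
Your proof is correct. The paper does not actually prove this lemma; it merely restates it as Theorem~2 of Ref.~\cite{anandQuantumCoherenceSignature2021} and invokes it as a black box to derive \cref{lemma_opcoh}. Your argument therefore supplies a self-contained, elementary proof where the paper defers to the literature. The two-step chain
\[
\sum_{ik}\lvert c_{ik}\rvert^{4}\;\le\;\sum_{i}\Bigl(\sum_{k}\lvert c_{ik}\rvert^{2}\Bigr)^{2}=\sum_{i}(\rho_A)_{ii}^{2}\;\le\;\Tr(\rho_A^{2})
\]
is tight and transparent, and the key observation that $(\rho_A)_{ii}=\sum_k\lvert c_{ik}\rvert^{2}$ is exactly what links the fourth-moment sum to the purity. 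The only minor imprecision is the phrase ``precisely at the Schmidt bases'' in your equality discussion: when the Schmidt spectrum is degenerate there is a larger family of optimal product bases, but this is irrelevant to the argument since you already established achievability by direct evaluation.
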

  Notice that the minimization over the local bases $\mathbb{B}_a$ and $\mathbb{B}_b$ is equivalent to the minimization over local unitaries $U_A$ and $U_B$ and some fixed tensor product basis $\mathbb{B}_1 \otimes \mathbb{B}_2$. Using Observation 1, we can recast \cref{lemma_coh} in operator space yielding \cref{opcoh}, where now the role of the state is played by a unitary operator $U$ and the local unitaries are now local unitary superoperators $V_A (\bullet) W_A$ and $V_B (\bullet) W_B$.

  \subsection{Corollary 1}
  \begin{cor} \label{cor_no_op_NL}
For any $U \in \mathscr{U}_N$,
\begin{equation} \label{no_op_NL}
P_E(U) = 0 \Leftrightarrow \, \exists \, V \in \mathscr{U}_{N_A}, \, W \in \mathscr{U}_{N_B} \text{ s.t. } \forall P \in \tilde{\mathscr{P}}_N: M_{\text{lin}}(V^\dagger \otimes W^\dagger \, U^\dagger PU \, V \otimes W) =0.
\end{equation}
\end{cor}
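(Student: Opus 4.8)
The plan is to reduce everything to \cref{prop_paul_ent,theorem,lemma_opcoh}, together with two elementary facts: $E_{\text{lin}}$ and $M_{\text{lin}}$ are non-negative, and $E_{\text{lin}}$ is invariant under conjugation by local unitaries (the same invariance invoked in the $\Leftarrow$ part of \cref{theorem}). By \cref{prop_paul_ent}, $P_E(U) = \mathbb{E}_{P \in \tilde{\mathscr{P}}_N} E_{\text{lin}}(U^\dagger P U)$ is an average of non-negative terms, so $P_E(U) = 0$ if and only if $E_{\text{lin}}(U^\dagger P U) = 0$ for every $P \in \tilde{\mathscr{P}}_N$; since $E$ and $E_{\text{lin}}$ both vanish exactly on product operators (operators of operator-Schmidt rank one), this is in turn equivalent to the left-hand side of \cref{theorem}. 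It therefore remains to match this condition with the right-hand side of \cref{no_op_NL}.

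For $\Rightarrow$: if $P_E(U) = 0$, then by the above and \cref{theorem} there exist $V \in \mathscr{U}_{N_A}$, $W \in \mathscr{U}_{N_B}$, $C \in \mathscr{C}_N$ with $U^\dagger = (V \otimes W)\,C$. For this fixed pair $(V,W)$ a one-line computation using $(V^\dagger \otimes W^\dagger)(V \otimes W) = \mathds{1}$ gives $V^\dagger \otimes W^\dagger \, U^\dagger P U \, V \otimes W = C P C^\dagger$ for every $P$, which is again a Pauli string up to a phase because $C$ is Clifford. Since $M_{\text{lin}}$ is a measure of operator nonstabilizerness, it is Clifford-invariant and vanishes on Pauli strings, hence $M_{\text{lin}}(V^\dagger \otimes W^\dagger \, U^\dagger P U \, V \otimes W) = M_{\text{lin}}(C P C^\dagger) = 0$ for all $P$ with a single $(V,W)$, which is the right-hand side of \cref{no_op_NL}.

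For $\Leftarrow$: suppose such $V, W$ exist with $M_{\text{lin}}(V^\dagger \otimes W^\dagger \, U^\dagger P U \, V \otimes W) = 0$ for all $P$. Taking all four local unitaries in the minimization of \cref{lemma_opcoh} to be the identity yields $E_{\text{lin}}(O) \le M_{\text{lin}}(O)$ for every unitary $O$. Applying this to $O = V^\dagger \otimes W^\dagger \, U^\dagger P U \, V \otimes W$ and using the local-unitary invariance of $E_{\text{lin}}$ to strip the $V, W$, we obtain $E_{\text{lin}}(U^\dagger P U) = E_{\text{lin}}(V^\dagger \otimes W^\dagger \, U^\dagger P U \, V \otimes W) \le M_{\text{lin}}(V^\dagger \otimes W^\dagger \, U^\dagger P U \, V \otimes W) = 0$ for every $P$; non-negativity of $E_{\text{lin}}$ then forces $E_{\text{lin}}(U^\dagger P U) = 0$ for all $P$, so $P_E(U) = 0$ by \cref{prop_paul_ent}.

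The argument is largely bookkeeping; the one point that needs care is the forward direction, where a single pair $(V, W)$ must work simultaneously for all Paulis. This is exactly what the global factorization $U^\dagger = (V \otimes W) C$ from \cref{theorem} delivers — applying \cref{lemma_opcoh} directly to each $U^\dagger P U$ separately would only produce $P$-dependent local unitaries and would not suffice. The other mild subtlety is the (standard) identification of the vanishing of $E_{\text{lin}}$ with the product-operator condition, which is what lets \cref{theorem} be applied verbatim.
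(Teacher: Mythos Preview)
Your proof is correct and follows essentially the same route as the paper: reduce $P_E(U)=0$ to $E_{\text{lin}}(U^\dagger P U)=0$ for all $P$, invoke \cref{theorem} for the forward direction to obtain a single global pair $(V,W)$ via the factorization $U^\dagger=(V\otimes W)C$, and use \cref{lemma_opcoh} for the converse. The only cosmetic difference is that in the converse the paper plugs the given $V,W$ directly into the minimization of \cref{lemma_opcoh} applied to $U^\dagger P U$, whereas you first derive the generic bound $E_{\text{lin}}(O)\le M_{\text{lin}}(O)$ and then invoke local-unitary invariance of $E_{\text{lin}}$; these are equivalent one-line uses of the same lemma.
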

  We have that $P_E(U) =0 \Leftrightarrow E_{\text{lin}}(U^\dagger P U) =0 \; \forall \, P \in \tilde{\mathscr{P}}_N$. From Theorem 1, this happens if and only if $U^\dagger = (V \otimes W) \, C$ for some $V \in \mathscr{U}_A, W\in \mathscr{U}_B, C \in \mathscr{C}_N$. Assuming this is true,
  \begin{equation}
  M_{\text{lin}}((V^\dagger \otimes W^\dagger) \, U^\dagger P U \, (V \otimes W)) = M_{\text{lin}}(C P C^\dagger)= M_{\text{lin}}(P) =0 \; \forall \, P \in \tilde{\mathscr{P}}_N.
  \end{equation}
  For the converse, notice that if  $M_{\text{lin}}((V^\dagger \otimes W^\dagger) \, U^\dagger P U \, (V \otimes W)) = 0 \; \forall \, P \in \tilde{\mathscr{P}}_N$ for some $V \in \mathscr{U}_A, W\in \mathscr{U}_B$, then Lemma 2 immediately implies that $E_{\text{lin}}(U^\dagger P U) =0 \; \forall \, P \in \tilde{\mathscr{P}}_N$.

  \subsection{Proposition 2}
  \begin{proposition} \label{prop_upbound}
For any $U \in \mathscr{U}_N$,
\begin{equation} \label{upbound}
P_E(U) \leq \min\left\{ {\mathlarger{\mathbb{E}}}_{P_A \in \tilde{\mathscr{P}}_{N_A}} M_{\text{lin}}(U \, P_A \otimes \mathds{1}_B \, U^\dagger),\, {\mathlarger{\mathbb{E}}}_{P_B \in \tilde{\mathscr{P}}_{N_B}} M_{\text{lin}}(U \, \mathds{1}_A \otimes P_B \, U^\dagger) \right\},
\end{equation}
\end{proposition}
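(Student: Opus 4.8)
The plan is to bring both sides of~\cref{upbound} into the common shape $1-d^{-2}\Tr(\,\cdot\,M)$ for a single operator $M$ on $\mathcal{H}^{\otimes 4}$ and then to verify a pointwise inequality; I establish the bound involving $P_A$, the one involving $P_B$ being obtained verbatim by exchanging $A$ and $B$. By Proposition~\ref{prop_paul_ent},
\[
P_E(U)=1-\frac{1}{d^2}\Tr\big(T^A_{(12)(34)}\,M\big),\qquad
M\coloneqq {U^\dagger}^{\otimes 4}\,Q\,U^{\otimes 4}=\frac{1}{d^2}\sum_{P\in\tilde{\mathscr{P}}_N}\big(U^\dagger P U\big)^{\otimes 4}.
\]
For the right-hand side I use the operator-space trace representation of the linear stabilizer $2$-entropy, $M_{\text{lin}}(O)=1-d^{-2}\Tr\big(Q\,(O^{\otimes 2}\otimes {O^\dagger}^{\otimes 2})\big)$, the operator analogue of the linear $2$-coherence in the Pauli basis that enters~\cref{lemma_opcoh} through Observation~1. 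Since $O=U\,(P_A\otimes\mathds{1}_B)\,U^\dagger$ is Hermitian this equals $1-d^{-2}\Tr\big(Q\,U^{\otimes 4}(P_A\otimes\mathds{1}_B)^{\otimes 4}{U^\dagger}^{\otimes 4}\big)$; averaging over $P_A\in\tilde{\mathscr{P}}_{N_A}$ and using $\mathbb{E}_{P_A}(P_A\otimes\mathds{1}_B)^{\otimes 4}=Q_A\otimes\mathds{1}_B^{\otimes 4}$, where $Q_A\coloneqq d_A^{-2}\sum_{P_A\in\tilde{\mathscr{P}}_{N_A}}P_A^{\otimes 4}$ ($d_A=2^{N_A}$) is the $N_A$-qubit analogue of $Q$ (so that $Q=Q_A\otimes Q_B$), a cyclic shift of the trace gives $\mathbb{E}_{P_A}M_{\text{lin}}\big(U\,(P_A\otimes\mathds{1}_B)\,U^\dagger\big)=1-d^{-2}\Tr\big((Q_A\otimes\mathds{1}_B^{\otimes 4})\,M\big)$ with the \emph{same} $M$. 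Thus~\cref{upbound} (for the $P_A$ term) is equivalent to $\Tr\big(T^A_{(12)(34)}\,M\big)\ge\Tr\big((Q_A\otimes\mathds{1}_B^{\otimes 4})\,M\big)$.

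Now $T^A_{(12)(34)}=\tau^A\otimes\mathds{1}_B^{\otimes 4}$, where $\tau^A$ is the product of the swaps of the $A$-factors on the copy pairs $(1,2)$ and $(3,4)$, acting on $\mathcal{H}_A^{\otimes 4}$. Hence the difference of the two traces equals $d^{-2}\sum_{P\in\tilde{\mathscr{P}}_N}\Tr\big[\big((\tau^A-Q_A)\otimes\mathds{1}_B^{\otimes 4}\big)(U^\dagger P U)^{\otimes 4}\big]$. For each fixed $P$, partial-tracing the four $\mathcal{H}_B$-factors turns $(U^\dagger P U)^{\otimes 4}$ into $a_P^{\otimes 4}$ on $\mathcal{H}_A^{\otimes 4}$, with $a_P\coloneqq\Tr_B(U^\dagger P U)$ Hermitian, so the $P$-term reduces to $\Tr_{\mathcal{H}_A^{\otimes 4}}\big((\tau^A-Q_A)\,a_P^{\otimes 4}\big)$. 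The swap identity gives $\Tr(\tau^A a_P^{\otimes 4})=\big(\Tr(a_P^2)\big)^2$, while $\Tr(Q_A a_P^{\otimes 4})=d_A^{-2}\sum_{P_A\in\tilde{\mathscr{P}}_{N_A}}\big(\Tr(P_A a_P)\big)^4$. Expanding $a_P$ in the orthonormal Pauli basis $\{P_A/\sqrt{d_A}\}$ of $\mathcal{L}(\mathcal{H}_A)$ and setting $t_{P_A}\coloneqq d_A^{-1}\big(\Tr(P_A a_P)\big)^2\ge 0$, I have $\Tr(a_P^2)=\sum_{P_A}t_{P_A}$ and $d_A^{-2}\sum_{P_A}\big(\Tr(P_A a_P)\big)^4=\sum_{P_A}t_{P_A}^2$, so the $P$-term is $\big(\sum_{P_A}t_{P_A}\big)^2-\sum_{P_A}t_{P_A}^2=\sum_{P_A\neq P_A'}t_{P_A}t_{P_A'}\ge 0$. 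Summing over $P$ yields $\Tr\big(T^A_{(12)(34)}\,M\big)\ge\Tr\big((Q_A\otimes\mathds{1}_B^{\otimes 4})\,M\big)$, hence $P_E(U)\le\mathbb{E}_{P_A}M_{\text{lin}}\big(U\,(P_A\otimes\mathds{1}_B)\,U^\dagger\big)$; the $P_B$-bound follows by the identical computation with $A\leftrightarrow B$.

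The step that needs care is that the seductive shortcut — an operator inequality $T^A_{(12)(34)}\succeq Q_A\otimes\mathds{1}_B^{\otimes 4}$ — is in fact \emph{false}, since $\tau^A$ has spectrum $\{\pm 1\}$ whereas $Q_A$ is a projector; the argument must therefore be arranged so that the $\mathcal{H}_B$-factors are partial-traced first, and only after that collapse does the claim become the purely $\mathcal{H}_A$-level estimate $\big(\sum_{P_A}t_{P_A}\big)^2\ge\sum_{P_A}t_{P_A}^2$ for nonnegative coefficients. Lining up the trace representation of $M_{\text{lin}}$ and the factorization $Q=Q_A\otimes Q_B$ is then routine bookkeeping.
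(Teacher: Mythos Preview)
Your proof is correct and follows essentially the same route as the paper's: both reduce the bound to the elementary inequality $\big(\sum_{P_A} t_{P_A}\big)^2\ge\sum_{P_A} t_{P_A}^2$ for the nonnegative coefficients $t_{P_A}=d_A^{-1}\big(\Tr((P_A\otimes\mathds{1}_B)\,U^\dagger P U)\big)^2$, which in the paper's notation are (up to a harmless rescaling) the quantities $\Xi_P^{P_A}(U)$. The only cosmetic difference is the order of operations---the paper expands $T^A_{(12)(34)}$ in the Pauli basis first and then drops cross terms, while you keep the operator $M$ intact, partial-trace $B$, and then expand---but the substance is identical.
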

  Using the identity $T_{12}^A=d_A^{-1} \sum_{P_A \in \tilde{\mathscr{P}}_{N_A}} P_A \otimes \mathds{1}_{B_1} \otimes P_A \otimes \mathds{1}_{B_2}$, we can rewrite \cref{step_1} as
  \begin{equation} \label{step_2}
    \begin{split}
    P_E(U) &= 1 - \frac{1}{d^4} \sum_{P \in \tilde{\mathscr{P}}_{N}} \left( \Tr( \frac{1}{d_A} \sum_{P_A \in \tilde{\mathscr{P}}_{N_A}} (P_A \otimes \mathds{1}_B)^{\otimes 2} \, {U^\dagger}^{\otimes 2} \, P^{\otimes 2} \, U^{\otimes 2}) \right)^2 \\
    &= 1-\frac{1}{d^4 d_A^2} \sum_{P \in \tilde{\mathscr{P}}_{N}} \left(\sum_{P_A \in \tilde{\mathscr{P}}_{N_A}} \left( \Tr( P_A \otimes \mathds{1}_B U^\dagger P U) \right)^2 \right)^2 \\
    &= 1 - \frac{1}{d_A^2} \sum_{\substack{ P \in \tilde{\mathscr{P}}_{N} \\ P_A,P_A^\prime \in \tilde{\mathscr{P}}_{N_A}}} \Xi_P^{P_A}(U) \, \Xi_P^{P_A^\prime}(U),
    \end{split}
    \end{equation}
    where we defined $\Xi_P^{P_A}(U) \coloneqq \frac{1}{d^2} \left(\Tr( (P_A \otimes \mathds{1}_B) \, {U^\dagger} \, P \, U)\right)^2$, which is a probability distribution over $\tilde{\mathscr{P}}_N$ for any $U \in \mathscr{U}_N$ and $P_A \in \tilde{\mathscr{P}}_{N_A}$. Then, we can bound \cref{step_2} as
    \begin{equation}
      \begin{split}
    P_E(U) & \leq 1 - \frac{1}{d_A^2} \sum_{\substack{ P \in \tilde{\mathscr{P}}_{N} \\ P_A \in \tilde{\mathscr{P}}_{N_A}}} \left(\Xi_P^{P_A}\right)^2 = \mathlarger{\mathbb{E}}_{P_A \in \tilde{\mathscr{P}}_{N_A}} \left(1- \frac{1}{d^4} \sum_{P \in \tilde{\mathscr{P}}_{N}} \left( \Tr( (P_A \otimes \mathds{1}_B) \,  U^\dagger P U) \right)^4 \right) \\
    &= \mathlarger{\mathbb{E}}_{P_A \in \tilde{\mathscr{P}}_{N_A}} M_{\text{lin}} \left(U \, (P_A \otimes \mathds{1}_B) \, U^\dagger \right).
      \end{split}
    \end{equation}
    Similarly, we can obtain an upper bound by exchanging $A$ with $B$, which proves \cref{upbound}.

    \subsection{Proposition 3}
 \begin{proposition} \label{prop_typical}
  \begin{equation} \label{typical}
  {\mathlarger{\mathbb{E}}}_{U \in \mathscr{U}_N} P_E(U)=\frac{(d^2-d_A^2)(d^2-10)(d_A^2-1)}{d^2 d_A^2(d^2-9)}= 1-\left(1-\frac{1}{d^2}\right)\frac{1}{d_A^2}-\left(1-\frac{1}{d^2}\right)\frac{1}{d_B^2} + O\left(\frac{1}{d^4}\right).
  \end{equation}
  \end{proposition}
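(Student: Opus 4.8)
The plan is to start from the closed form of \cref{prop_paul_ent}, average over the Haar measure, and reduce the resulting fourth moment to Schur--Weyl combinatorics. Pulling the expectation inside the trace gives
\begin{equation*}
\mathbb{E}_{U} P_E(U)=1-\frac{1}{d^2}\Tr\!\big(T^A_{(12)(34)}\,R\big),\qquad R:=\mathbb{E}_{U}\big[{U^\dagger}^{\otimes4}\,Q\,U^{\otimes4}\big],
\end{equation*}
so it suffices to compute the fourth Haar moment $R$ of the projector $Q$. Since each $P^{\otimes4}$ is symmetric under permutations of the four tensor copies, $Q$ commutes with all copy-permutation operators $W_\pi$, $\pi\in S_4$; and since $d=2^N\ge4$ the $W_\pi$ are linearly independent, so $\mathcal{A}:=\Span\{W_\pi\}_{\pi\in S_4}=\Comm(U^{\otimes4})\cong\mathbb{C}[S_4]$. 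The Haar twirl $X\mapsto\mathbb{E}_U[{U^\dagger}^{\otimes4}XU^{\otimes4}]$ is the Hilbert--Schmidt orthogonal projection onto $\mathcal{A}$, and since $Q\in\mathcal{A}'$ its image lies in the centre $Z(\mathcal{A})=\Span\{P_\lambda\}_{\lambda\vdash4}$ spanned by the central projectors $P_\lambda$ onto the $S_4$-isotypic components of $\mathcal{H}^{\otimes4}$. Therefore
\begin{equation*}
R=\sum_{\lambda\vdash4}\frac{\Tr(P_\lambda Q)}{\Tr(P_\lambda)}\,P_\lambda,\qquad\text{hence}\qquad\mathbb{E}_{U}P_E(U)=1-\frac{1}{d^2}\sum_{\lambda\vdash4}\frac{\Tr(P_\lambda Q)}{\Tr(P_\lambda)}\,\Tr\!\big(T^A_{(12)(34)}P_\lambda\big).
\end{equation*}

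It then remains to evaluate the three families of scalars above. Writing $P_\lambda=\tfrac{f^\lambda}{24}\sum_{\sigma\in S_4}\chi_\lambda(\sigma)W_\sigma$ and using that $\Tr(W_\sigma Q)=d^{-2}\sum_{P\in\tilde{\mathscr{P}}_N}\prod_i\Tr(P^{\ell_i})$, where $\ell_1,\dots,\ell_{k(\sigma)}$ are the cycle lengths of $\sigma$, together with $P^2=\mathds{1}_d$ and $\sum_P\Tr(P)=d$, one finds $\Tr(W_\sigma Q)=d^{k(\sigma)}$ when every cycle of $\sigma$ has even length and $d^{k(\sigma)-2}$ otherwise. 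Contracting with the $S_4$ character table, the $[3,1]$ and $[2,1,1]$ contributions cancel and one is left with $\Tr(P_{[4]}Q)=\tfrac16(d+1)(d+2)$, $\Tr(P_{[1^4]}Q)=\tfrac16(d-1)(d-2)$, $\Tr(P_{[2,2]}Q)=\tfrac23(d^2-1)$. The isotypic dimensions $\Tr(P_\lambda)=f^\lambda\dim W_\lambda(d)$ follow from the hook--content formula: $\Tr(P_{[4]})=\binom{d+3}{4}$, $\Tr(P_{[1^4]})=\binom{d}{4}$, $\Tr(P_{[2,2]})=\tfrac16 d^2(d^2-1)$, so the surviving ratios collapse to $\tfrac{4}{d(d+3)}$, $\tfrac{4}{d(d-3)}$, $\tfrac{4}{d^2}$. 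Finally, decomposing $\mathcal{H}^{\otimes4}\cong\mathcal{H}_A^{\otimes4}\otimes\mathcal{H}_B^{\otimes4}$ one has $W_\pi=W^A_\pi\otimes W^B_\pi$ and $T^A_{(12)(34)}=W^A_{(12)(34)}\otimes\mathds{1}_{B^{\otimes4}}$, whence $\Tr(T^A_{(12)(34)}W_\pi)=d_A^{\,k((12)(34)\pi)}\,d_B^{\,k(\pi)}$; summing over each conjugacy class and contracting with $\chi_\lambda$ yields explicit polynomials in $d_A,d_B$ for $\Tr(T^A_{(12)(34)}P_\lambda)$, $\lambda\in\{[4],[2,2],[1^4]\}$ (the only three needed).

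Plugging these into the identity $\mathbb{E}_{U}P_E(U)=1-\tfrac{1}{d^2}\sum_\lambda\tfrac{\Tr(P_\lambda Q)}{\Tr(P_\lambda)}\Tr(T^A_{(12)(34)}P_\lambda)$ and clearing the common denominator $d^2(d^2-9)=d^2(d-3)(d+3)$, the numerator collapses to $6\big[(d^2-10)(d_A^2+d_B^2)+10\big]$, so $\mathbb{E}_U P_E(U)=1-\tfrac{(d^2-10)(d_A^2+d_B^2)+10}{d^2(d^2-9)}$; using $(d_A^2-1)(d_B^2-1)=d^2-(d_A^2+d_B^2)+1$ this rearranges to $\tfrac{(d_A^2-1)(d_B^2-1)(d^2-10)}{d^2(d^2-9)}=\tfrac{(d^2-d_A^2)(d^2-10)(d_A^2-1)}{d^2 d_A^2(d^2-9)}$ (manifestly symmetric under $A\leftrightarrow B$, as it must be), and the large-$d$ expansion follows by Taylor expansion. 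I expect the only real obstacle to be the bookkeeping in the third family: because $(12)(34)$ is a fixed element, $k\big((12)(34)\pi\big)$ is not a class function of $\pi$, so within each conjugacy class of $S_4$ one must split according to the cycle type of $(12)(34)\pi$ --- e.g.\ of the six transpositions exactly the two commuting with $(12)(34)$ give $k=3$ while the other four give $4$-cycles ($k=1$), and of the six $4$-cycles four give transpositions ($k=3$) and two give $4$-cycles ($k=1$). To guard against slips I would cross-check against $\sum_\lambda\Tr(P_\lambda Q)=\Tr Q=d^2$, $\sum_\lambda\Tr(T^A_{(12)(34)}P_\lambda)=\Tr T^A_{(12)(34)}=d_A^2 d_B^4$, and the special value $\tfrac{27}{56}$ at $N_A=N_B=1$.
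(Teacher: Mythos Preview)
Your proof is correct and takes a genuinely different route from the paper's. The paper applies the full Weingarten calculus: it writes $\mathbb{E}_U[{U^\dagger}^{\otimes4}QU^{\otimes4}]=\sum_{\pi,\sigma\in S_4}W_{\pi,\sigma}\Tr(T_\sigma Q)\,T_\pi$, derives the Weingarten function $W_{\pi,\sigma}=(q!)^{-2}\sum_\lambda\chi_\lambda(e)^2\chi_\lambda(\pi\sigma)/d_\lambda$ from Schur--Weyl duality, tabulates $\Tr(QT_\sigma)$ over the five conjugacy classes of $S_4$ (obtaining exactly the values your cycle-parity rule $d^{k(\sigma)}$ vs.\ $d^{k(\sigma)-2}$ produces), and then feeds the resulting double sum together with the $S_4$ characters into Mathematica.

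Your key improvement is to observe \emph{before} invoking Weingarten that $Q$ commutes with every $W_\pi$, so the Haar twirl of $Q$ lands in $Z(\mathcal{A})=\Span\{P_\lambda\}_{\lambda\vdash4}$ and is simply $\sum_\lambda\tfrac{\Tr(P_\lambda Q)}{\Tr(P_\lambda)}P_\lambda$. This collapses the $24\times24$ sum over $S_4\times S_4$ to a five-term sum over partitions, and the vanishing of $\Tr(P_{[3,1]}Q)$ and $\Tr(P_{[2,1,1]}Q)$ leaves just three contributions with closed-form ratios $4/d(d\pm3)$ and $4/d^2$, so the whole computation can be carried out by hand rather than by symbolic algebra. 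The paper's approach is more mechanical and would apply unchanged to an arbitrary operator in place of $Q$; yours is more structural and explains \emph{why} the answer is so clean, at the cost of relying on the permutation invariance of $Q$. Your caution about $\pi\mapsto k((12)(34)\pi)$ not being a class function is well placed --- this is precisely the content of the paper's factor $\chi(T^A_{(12)(34)\cdot\pi})$ --- and the sanity checks you list (including the $N_A=N_B=1$ value $27/56$) are the right controls.
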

    The calculation of the Haar average in \cref{typical} follows from the application of the Weingarten calculus \cite{collinsIntegrationRespectHaar2006,robertsChaosComplexityDesign2017}. The general statement for a Haar average over the unitary group is
    \begin{equation}
        {\mathlarger{\mathbb{E}}}_{U \in \mathscr{U}_N} {U^\dagger}^{\otimes q} (\bullet ) U^{\otimes q} = \sum_{\pi, \sigma \in \mathcal{S}_q} W_{\pi , \sigma} \Tr( T_\sigma (\bullet )) T_\pi,
    \end{equation}
    where $\mathcal{S}_q$ is the symmetric group over $q$ elements and $T$ is the representation map from $\mathcal{S}_q$ to $\mathcal{L}(\mathcal{H}^{\otimes q})$. The Weingarten function $W_{\pi , \sigma}$ is given as
    \begin{equation} \label{weingarten}
    W_{\pi , \sigma} = \frac{1}{(q!)^2}\sum_{\lambda \vdash q} \frac{\chi_\lambda^2(e) \, \chi_\lambda(\pi \sigma)}{d_\lambda},
    \end{equation}
    where $\lambda \vdash q$ denotes that $\lambda$ is an integer partition of $q$, which characterizes the irreducible representations of $\mathcal{S}_q$ on $\mathcal{H}^{\otimes q}$, $\chi_\lambda$ is the character of the irreducible representation of $\mathcal{S}_q$ corresponding to $\lambda$, $e$ is the identity element of $\mathcal{S}_q$ and $d_\lambda$ is the dimension of the irreducible representation of $\mathscr{U}_N$ corresponding to $\lambda$ in the Schur-Weyl decomposition \cite{goodmanSymmetryRepresentationsInvariants2009,christandlStructureBipartiteQuantum2006}, $\mathcal{H}^{\otimes q} = \bigoplus_{\lambda \vdash q} \mathbb{C}_{\chi_\lambda(e)} \otimes \mathbb{C}_{d_\lambda}$. While the above statements are fairly standard in the literature, we believe it is useful to provide an alternate derivation from the original one of Ref. \cite{collinsIntegrationRespectHaar2006}, based on the arguments of Ref. \cite{robertsChaosComplexityDesign2017} and basic facts from representation theory. The starting point is Schur-Weyl duality which asserts that the commutant of the algebra $\mathcal{A} = \Span\{U^{\otimes q} \, \vert \, U \in \mathscr{U}_N \}$ is $\mathcal{A}^\prime = \Span\{T_\pi \, \vert \, \pi \in \mathcal{S}_q \}$. Then,
    \begin{equation}
      {\mathlarger{\mathbb{E}}}_{U \in \mathscr{U}_N} {U^\dagger}^{\otimes q} (\bullet ) U^{\otimes q} = \mathbb{P}_{\mathcal{A}^\prime} (\bullet ) = \sum_{\pi \in \mathcal{S}_q} b_\pi( \bullet ) T_\pi,
    \end{equation}
    where $b_\pi$ is a linear functional $\mathcal{L}(\mathcal{H}^{\otimes q}) \rightarrow \mathbb{C}$, so $\exists \, R_\pi \in \mathcal{L}(\mathcal{H}^{\otimes q})$ such that $b_\pi(\bullet) = \Tr(R_\pi \bullet)$. Notice though that using the right invariance of the Haar measure, for any $V \in \mathscr{U}_N$, we have
    \begin{equation}
      \begin{split}
      &{\mathlarger{\mathbb{E}}}_{U \in \mathscr{U}_N} {U^\dagger}^{\otimes q} {V^\dagger}^{\otimes q} (\bullet ) V^{\otimes q} U^{\otimes q} = {\mathlarger{\mathbb{E}}}_{U \in \mathscr{U}_N} {U^\dagger}^{\otimes q} (\bullet ) U^{\otimes q} \Rightarrow \sum_{\pi \in \mathcal{S}_q} \Tr(R_\pi {V^\dagger}^{\otimes q} (\bullet ) V^{\otimes q}) T_\pi = \sum_{\pi \in \mathcal{S}_q} \Tr(R_\pi (\bullet ) ) T_\pi \\
      &\xRightarrow{\text{lin. independence}} \Tr(V^{\otimes q} R_\pi {V^\dagger}^{\otimes q} (\bullet ))= \Tr(R_\pi (\bullet )) \; \forall \, \pi \in \mathcal{S}_q \Rightarrow V^{\otimes q} R_\pi {V^\dagger}^{\otimes q} = R_\pi \; \forall \, \pi \in \mathcal{S}_q.
      \end{split}
      \end{equation}

    This means that $R_\pi \in \mathcal{A}^\prime$, i.e., $R_\pi = \sum_{\sigma \in \mathcal{S}_q} W_{\pi , \sigma} T_\sigma$. In addition, for any $\pi^\prime \in \mathcal{S}_q$,
    \begin{equation} \label{step_return}
      \begin{split}
    &{\mathlarger{\mathbb{E}}}_{U \in \mathscr{U}_N} {U^\dagger}^{\otimes q} T_{\pi^\prime} U^{\otimes q} = T_{\pi^\prime} \Rightarrow \sum_{\pi, \sigma \in \mathcal{S}_q} W_{\pi , \sigma} \Tr(T_{\pi^\prime} T_\sigma) T_\pi = T_{\pi^\prime} \\
    &\xRightarrow{\text{lin. independence}} \sum_{\sigma \in \mathcal{S}_q} W_{\pi , \sigma} \, \chi(\pi^\prime \sigma) = \delta_{\pi , \pi^\prime},
      \end{split}
    \end{equation}
    where $\Tr(T_{\pi^\prime} T_\sigma)=\chi(\pi^\prime \sigma) = \sum_{\lambda \vdash q} d_\lambda \chi_\lambda(\pi^\prime \sigma)$. So, to find $W_{\pi , \sigma}$ we need to invert $F_{\sigma , \pi} \coloneqq \chi(\pi \sigma)$. Since $F$ is a function of the conjugacy class $[\pi \sigma]$, it follows that $W_{\pi , \sigma}$ is also a function of the conjugacy class $[\pi \sigma]$, so
    \begin{equation} \label{weingarten_step}
    W_{\pi , \sigma} = \sum_{\lambda^\prime \vdash q} c_{\lambda^\prime} \chi_{\lambda^\prime}(\pi \sigma).
    \end{equation}
    Returning to \cref{step_return}, we have 
    \begin{equation} \label{step_return_2}
    \sum_{\sigma \in \mathcal{S}_q} \sum_{\lambda,\lambda^\prime \vdash q} c_{\lambda^\prime} \, d_\lambda \, \chi_{\lambda^\prime}(\pi \sigma) \, \chi_\lambda(\pi^\prime \sigma) = \delta_{\pi, \pi^\prime}.
    \end{equation}
    Notice that if $\rho_\lambda$ is the irreducible representation of $\mathcal{S}_q$ corresponding to $\lambda$, we have that 
    \begin{equation}
      \begin{split}
    \sum_{\sigma \in \mathcal{S}_q} \chi_\lambda(\pi \sigma) \, \chi_{\lambda^\prime}(\pi^\prime \sigma) &= \sum_{\sigma \in \mathcal{S}_q} \sum_{i,j=1}^{\chi_\lambda(e)} \sum_{k,l=1}^{\chi_{\lambda^\prime}(e)} [\rho_\lambda(\pi)]_{ij} [\rho_\lambda(\sigma)]_{ji}  \, [\rho_{\lambda^\prime}(\pi^\prime)]_{kl} [\rho_{\lambda^\prime}(\sigma)]_{lk} \\
    &= \sum_{i,j=1}^{\chi_\lambda(e)} \sum_{k,l=1}^{\chi_{\lambda^\prime}(e)} [\rho_\lambda(\pi)]_{ij} [\rho_{\lambda^\prime}(\pi^\prime)]_{kl} \delta_{\lambda , \lambda^\prime} \, \delta_{j,l} \delta_{i,k} \frac{q!}{\chi_\lambda(e)} = \frac{q!}{\chi_\lambda(e)} \sum_{i,j=1}^{\chi_\lambda(e)} [\rho_\lambda(\pi)]_{ij} [\rho_{\lambda}(\pi^\prime)]_{ij} \\
    &= \frac{q!}{\chi_\lambda(e)} \chi_\lambda(\pi {\pi^\prime}^{-1})
      \end{split}
    \end{equation}
    where we used Schur's orthogonality theorem for the matrix elements of the irreducible representation $\rho_\lambda$ of $\mathcal{S}_q$ and that $[\rho_\lambda(\pi^\prime)]_{ij}= [\rho_\lambda({\pi^\prime}^{-1})]_{ji}$ . So, \cref{step_return_2} becomes
    \begin{equation}
      \begin{split}
     &q! \, \sum_{\lambda \vdash q} c_{\lambda} \, d_\lambda \frac{\chi_\lambda(\pi {\pi^\prime}^{-1})}{\chi_\lambda(e)}= \delta_{\pi, \pi^\prime} = \frac{1}{q!} \sum_{\lambda \vdash q} \chi_\lambda(e) \chi_\lambda(\pi {\pi^\prime}^{-1}) \\
     &\xRightarrow{\text{lin. independence}} q! \, \frac{ c_{\lambda} \, d_\lambda}{\chi_\lambda(e)}= \frac{1}{q!} \chi_\lambda(e)  \Rightarrow c_{\lambda} = \frac{\chi_\lambda(e)^2}{(q!)^2 d_\lambda},
      \end{split}
    \end{equation}
    which together with \cref{weingarten_step} proves \cref{weingarten}.
    \par Returning back to our application, we have to compute
    \begin{equation} \label{haar_average}
      \begin{split}
      {\mathlarger{\mathbb{E}}}_{U \in \mathscr{U}_N} P_E(U) &= 1 - \frac{1}{d^2} \Tr(T_{(12)(34)}^A \, {\mathlarger{\mathbb{E}}}_{U \in \mathscr{U}_N} {U^\dagger}^{\otimes 4} \, Q \, U^{\otimes 4}) = 1- \frac{1}{d^2} \sum_{\pi, \sigma \in \mathcal{S}_4} W_{\pi , \sigma} \Tr(T_{(12)(34)}^A \, T_\pi) \Tr(Q T_\sigma)\\
      &= 1 - \frac{1}{d^2} \sum_{\pi, \sigma \in \mathcal{S}_4} W_{\pi , \sigma} \Tr(T_{(12)(34)}^A \, T_{\pi}^A) \Tr(T_{\pi}^B) \Tr(Q T_\sigma) \\
      &= 1 - \frac{1}{d^2} \sum_{\pi, \sigma \in \mathcal{S}_4} W_{\pi , \sigma} \, \chi\left(T^A_{(12)(34) \cdot \pi} \right) \, \chi\left(T^B_\pi\right) \Tr(Q T_\sigma).
      \end{split}
    \end{equation}
    Notice that $\chi\left(T^A_{(12)(34) \cdot \pi}\right) = \sum_{\lambda \vdash q} d_\lambda^A \, \chi_\lambda\left((12)(34) \cdot \pi\right)$, where $d_\lambda^A$ is the dimension of the irreducible representation of $\mathscr{U}_{N_A}$ corresponding to $\lambda$ and similarly $\chi\left(T^B_\pi\right) = \sum_{\lambda \vdash q} d_\lambda^B \, \chi_\lambda(\pi)$, where $d_\lambda^B$ is the dimension of the irreducible representation of $\mathscr{U}_{N_B}$ corresponding to $\lambda$. All that is left is to compute $\Tr(Q T_\sigma)$. Since $Q= \frac{1}{d^2} \sum_{P \in \tilde{\mathscr{P}}_N} P^{\otimes 4}$ is permutationally invariant, $\Tr(Q T_\sigma)$ depends only on the conjugacy class of $\sigma$. Then, we have
    \begin{equation} \label{inner_prods}
    \begin{split}
    &\text{For } \sigma=(1)(2)(3)(4) \rightarrow \Tr(Q T_\sigma) = \Tr(Q)=d^2 \\
    &\text{For } \sigma=(1)(2)(34) \rightarrow \Tr(Q T_\sigma) = \Tr(Q T_{(34)}) = d \\
    &\text{For } \sigma=(12)(34) \rightarrow \Tr(Q T_\sigma) = \Tr(Q T_{(12)} T_{(34)}) = d^2 \\
    &\text{For } \sigma=(123)(4) \rightarrow \Tr(Q T_\sigma) = \Tr(Q T_{(123)}) = 1 \\
    &\text{For } \sigma=(1234) \rightarrow \Tr(Q T_\sigma)=\Tr(Q T_{(1234)}) = d.
    \end{split}
    \end{equation}
    Combining \cref{haar_average,weingarten,inner_prods} and substituting the irreducible characters of $\mathcal{S}_4$ and the dimensions of the irreducible representations of $\mathscr{U}_N$ in Mathematica, we obtain \cref{typical}.
    
%